\apptocmd{\sloppy}{\hbadness 10000\relax}{}{}
\DeclareRobustCommand{\SkipTocEntry}[5]{}
\DeclareMathOperator{\vect}{span}
\begin{document}
\newcommand{\dd}{\,\textrm{d}}
\newcommand{\st}{\textrm{St}}

\newcommand{\R}{\mathbb{R}}
\newcommand{\Z}{\mathbb{Z}}
\newcommand{\N}{\mathbb{N}}
\renewcommand{\i}{\mathrm{i}}

\newcommand{\wto}{\rightharpoonup}

\newcommand{\pscal}[1]{{\ensuremath{\left\langle #1 \right\rangle}}}
\newcommand{\norm}[1]{{\left\vert\kern-0.25ex\left\vert #1 \right\vert\kern-0.25ex\right\vert}}
\newcommand{\normSM}[1]{{\vert\kern-0.25ex\vert #1 \vert\kern-0.25ex\vert}}
\newcommand{\normt}[1]{{\left\vert\kern-0.25ex\left\vert\kern-0.25ex\left\vert #1 \right\vert\kern-0.25ex\right\vert\kern-0.25ex\right\vert}}

\renewcommand{\phi}{\varphi}
\renewcommand{\epsilon}{\varepsilon}

\numberwithin{equation}{section}

\theoremstyle{plain}
\newtheorem{thm}{Theorem}
\newtheorem{prop}[thm]{Proposition}
\newtheorem{lemme}[thm]{Lemma}
\newtheorem{rmq}[thm]{Remark}
\theoremstyle{remark}
\newtheorem*{rmq*}{Remark}

\newcommand{\clm}[1]{\hyperref[#1]{Lemma~\ref*{#1}}}
\newcommand{\cth}[1]{\hyperref[#1]{Theorem~\ref*{#1}}}
\newcommand{\cpr}[1]{\hyperref[#1]{Proposition~\ref*{#1}}}

\title{On uniqueness and non-degeneracy of anisotropic polarons}

\author[J. Ricaud]{Julien RICAUD}
\address{CNRS, Université de Cergy-Pontoise, Mathematics Department (UMR 8088), F-95000 Cergy-Pontoise, France}
\address{CNRS, Université Paris-Dauphine, CEREMADE (UMR 7534), F-75016 Paris, France}
\email{julien.ricaud@u-cergy.fr}

\thanks{The author is grateful to M. Lewin for helpful discussions and advices. The author acknowledges financial support from the European Research Council under the European Community's Seventh Framework Programme (FP7/2007-2013 Grant Agreement MNIQS 258023), and from the French ministry of research (ANR-10-BLAN-0101).}

\date{\today}

\begin{abstract}
We study the anisotropic Choquard--Pekar equation which describes a polaron in an anisotropic medium. We prove the uniqueness and non-degeneracy of minimizers in a weakly anisotropic medium. In addition, for a wide range of anisotropic media, we derive the symmetry properties of minimizers and prove that the kernel of the associated linearized operator is reduced, apart from three functions coming from the translation invariance, to the kernel on the subspace of functions that are even in each of the three principal directions of the medium.
\end{abstract}

\maketitle

\tableofcontents

\section{Introduction}
A \emph{polaron} describes a quantum electron in a polar crystal. The atoms of the crystal are displaced due to the electrostatic force induced by the charge of the electron, and the resulting deformation is then felt by the electron itself. This coupled system (the electron and its polarization cloud) is a quasi-particle, called a polaron.

When the polaron extends over a domain much larger than the characteristic length of the underlying lattice, the crystal can be approximated by a continuous polarizable medium, leading to the so-called Pekar nonlinear model~\cite{Pekar-54,Pekar-63}. In this theory, the energy functional is
\begin{equation}\label{Pekar_energy_functional}
{\mathscr E}^V(\psi) =\frac{1}{2}\displaystyle{\int_{\R^3}}{|\nabla\psi(x)|^2\dd x} - \frac{1}{2}\displaystyle{\int_{\R^3}}{\displaystyle{\int_{\R^3}}{|\psi(y)|^2|\psi(x)|^2 V(x-y)  \dd y} \dd x},
\end{equation} where $\psi$ is the wave function of the electron, in units such that the vacuum permittivity, the mass, and the charge of the electron are all normalized to one: $4\pi\epsilon_0=m_{e^-}=e=1$. While, on the other hand, $$-V\star|\psi|^2(x)=-\displaystyle{\int_{\R^3}}{|\psi(y)|^2 V(x-y)  \dd y}$$ is the mean-field self-trapping potential felt by the electron.

For an isotropic and homogeneous medium, characterized by its \emph{relative permittivity} (or \emph{relative dielectric constant}) $\epsilon_M\geq1$,  the effective interaction potential is 
\begin{equation}\label{isotropic_effective_interaction_potential}
V(x)=\frac{1-\epsilon_M^{-1}}{|x|}.
\end{equation} For $\epsilon_M>1$, the so-called \emph{Choquard--Pekar} or \emph{Schrödinger--Newton} equation
\begin{equation}\label{Pekar_euler_lagrange_notnormalize}
\Big( -\frac{\Delta}{2} -V\star |\psi|^2 \Big)\psi=-\mu \psi
\end{equation} is obtained by minimizing the energy ${\mathscr E}^V$ in~\eqref{Pekar_energy_functional} under the constraint $\int_{\R^3}{|\psi|^2}=1$, with associated Lagrange multiplier $\mu>0$. Lieb proved in~\cite{Lieb-77} the uniqueness of minimizers, up to space translations and multiplication by a phase factor. This ground state $Q$ is positive, smooth, radial decreasing, and has an exponential decay at infinity. That $Q$ is also the unique positive solution to~\eqref{Pekar_euler_lagrange_notnormalize} was proved in~\cite{MaZha-10}.

In 2009, Lenzmann proved in~\cite{Lenzmann-09} that $Q$ is nondegenerate. Namely, the linearization of~\eqref{Pekar_euler_lagrange_notnormalize},
\begin{equation}\label{op_linearise_radial}
{\mathfrak L}_{Q}\xi=-\frac12\Delta\xi +\mu\xi -\left(V\star|Q|^2\right)\xi -2Q\left(V\star(Q\xi)\right),
\end{equation} has the trivial kernel
\begin{equation}\label{Ker_of_L_radial}
\ker_{|L^2(\R^3)} {\mathfrak L}_{Q}=\vect\left\{\partial_{x_1} Q, \partial_{x_2} Q, \partial_{x_3} Q\right\}
\end{equation} which stems from the translation invariance. This nondegeneracy result is an important property which is useful in implicit function type arguments. Uniqueness and nondegeneracy were originally used in~\cite{Lenzmann-09} to study a pseudo-relativistic model, and then in \cite{KriRapMar-09, Liu-09, RotaNodari-10,Stuart-10,FraLieSei-13, Sok-14} for other models.

The purpose of this paper is to study the case of \emph{anisotropic} media, for which the corresponding potential is
\begin{equation}\label{anisotropic_potential}
V(x)=\frac1{|x|}-\frac1{\sqrt{\det(M^{-1})}|M^{1/2} x|}, \qquad 0<M\leq1,
\end{equation} where $M^{-1}\geq1$ is the (real and symmetric) static dielectric matrix of the medium. The mathematical expression is simpler in the Fourier domain: $$\widehat{V}(k)=4\pi\left(\frac1{|k|^2}-\frac1{k^T M^{-1} k}\right).$$

The form of the potential $V$ in the anisotropic case is well-known in the physics literature and it has recently been derived by Lewin and Rougerie from a microscopic model of quantum crystals in~\cite{LewRou-11}.

From a technical point of view, the fact that $V$ in~\eqref{anisotropic_potential} is a difference of two Coulomb type potentials complicates the analysis. For this reason, we will also consider a simplified anisotropic model where $V$ is replaced by
\begin{equation}\label{anisotropic_potential_simpl}
V(x)=\frac1{|(1-S)^{-1} x|},\qquad 0\leq S<1,
\end{equation} and $S$ is also a real and symmetric  matrix. This simplified potential can be seen as an approximation of the potential~\eqref{anisotropic_potential} in the weakly anisotropic regime, that is, when $M$ is close to an homothecy.

In this paper, we derive several properties of minimizers of ${\mathscr E}^V$ and of positive solutions to the nonlinear equation~\eqref{Pekar_euler_lagrange_notnormalize}, when $V$ is given by formulas~\eqref{anisotropic_potential} and~\eqref{anisotropic_potential_simpl}. After some preparations in Section~\ref{section_settings_first_prop}, we quickly discuss the existence of minimizers and the compactness of minimizing sequences in Section~\ref{section_exist}. Then, based on the fundamental non degeneracy result by Lenzmann~\cite{Lenzmann-09}, we prove in Section~\ref{uniqueness_weakly_anisotropic} the uniqueness and non-degeneracy of minimizers in a weakly anisotropic material. In Section~\ref{section_rearr_prop}, considering back general anisotropic materials, we investigate the symmetry properties of minimizers using rearrangement inequalities. Finally we discuss the linearized operator in Section~\ref{chapter_linearized_oper}. By using Perron--Frobenius type arguments, we are able to prove that for $\psi$ a positive solution of the so-called \emph{Choquard--Pekar} equation~\eqref{Pekar_euler_lagrange_notnormalize} sharing the symmetry properties of $V$, we have
\begin{equation}\label{kernel_intro}
\ker {\mathfrak L}_\psi=\vect\left\{\partial_x \psi, \partial_y \psi, \partial_z \psi\right\} \bigoplus \ker \left({{\mathfrak L}_\psi}\right)_{|L^2_{\textrm{sym}}(\R^3)}.
\end{equation} Where $L^2_{\textrm{sym}}(\R^3)$ is the subspace of function in $L^2(\R^3)$ sharing the symmetry properties of $V$. For instance, in the general case where the three eigenvalues of $M$ (or $S$) are distinct from each other and $V$ is decreasing in the corresponding directions, $L^2_{\textrm{sym}}(\R^3)$ is the subspace of functions that are even in these directions. On the other hand, if exactly two eigenvalues are equal, it is the subspace of cylindrical functions that are also even in the directions of the principal axis.

The main difficulty in proving~\eqref{kernel_intro} is that the operator ${\mathfrak L}_\psi$ is non-local and therefore the ordering of its eigenvalues is not obvious. The next step would be to prove that $\ker {{\mathfrak L}_\psi}_{|L^2_{\textrm{sym}}(\R^3)}=\{0\}$ which we only know for now in the weakly anisotropic regime (\cth{Weakly_anisotr_uniqueness_minim} below) and in the radial case (see~\cite{Lenzmann-09}). We hope to come back to this problem in the future.

\section{Elementary properties}\label{section_settings_first_prop}
We define the energy ${\mathscr E}^V$ as in~\eqref{Pekar_energy_functional} and consider, for all $\lambda>0$, the minimization problem 
\begin{equation}\label{minimization_pb}
I^V(\lambda):=\min\limits_{\substack{\psi \in H^1(\R^3) \\\norm{\psi}^2_2=\lambda}}{\mathscr E}^V(\psi).
\end{equation}

Let $(e_1,e_2,e_3)$ be the principal axis of the medium, that is, such that each $e_i\in\R^3$ is a normalized eigenvector associated with the eigenvalue $m_i$ of the real symmetric matrix $M$, where $0< m_1\leq m_2\leq m_3\leq 1$ (we demand, in addition, that at least $m_2<1$), or associated with the eigenvalue $s_i$ of the real symmetric matrix $S$ where $0\leq s_3\leq s_2\leq s_1< 1$ in the simplified model.

We define the map $M\mapsto V$ as
\begin{equation}\label{def_M_to_V}
\begin{aligned}
\left\{0< M\leq1\; |\; M\textrm{ symmetric real}\right\} &\to L^2(\R^3) + L^4(\R^3)\\
M&\mapsto V(x)=\frac1{|x|}-\frac1{\sqrt{\det(M^{-1})}|M^{1/2} x|}
\end{aligned}
\end{equation} with, in particular, $M\equiv\textrm{Id}\mapsto V\equiv\bar{0}$ which corresponds to the vacuum. And, in the simplified model, $S\mapsto V$ is defined as
\begin{equation}\label{def_S_to_V}
\begin{aligned}
\left\{0\leq M<1\; |\; M\textrm{ symmetric real}\right\} &\to L^2(\R^3) + L^4(\R^3)\\
S&\mapsto V(x)=|(1-S)^{-1}x|^{-1}
\end{aligned}
\end{equation} with, in particular, $S\equiv0\mapsto V\equiv V_0$. We denote the isotropic potentials by $V_c(x)=(1-c)|x|^{-1}$, for $0\leq c\leq1$, and $I^V_c$ the associated minimization problem.

Both maps are well-defined. Indeed, let $V$ as in~\eqref{def_M_to_V} or~\eqref{def_S_to_V} then one can easily show that there exist $a>b\geq0$ such that
\begin{equation}\label{aniso_propri_V_large}
\forall x \in \R^3_*,\; 0\leq b|x|^{-1}\leq V(x) \leq a|x|^{-1} \leq |x|^{-1}.
\end{equation} Consequently, $V \in L^2(\R^3) + L^4(\R^3)$. Moreover, if we restrict ourselves to $0< M<1$ then there exist $a>b>0$ such that
\begin{equation}\label{aniso_propri_V}
\forall x \in \R^3_*,\; 0< b|x|^{-1}\leq V(x) \leq a|x|^{-1} \leq |x|^{-1}.
\end{equation}

\begin{lemme}\label{majoration_(fg)star-(V-W)}Let $M\mapsto V$ be defined as in \eqref{def_M_to_V}, $S\mapsto V$ as in~\eqref{def_S_to_V} and let $f,g$ be two functions in $H^1(\R^3)$. Then $V\star(fg)\in W^{1,\infty}$ and, for any $0<\alpha<1$, we have
\begin{enumerate}
	\item locally Lipschitzity of
	\begin{align*}
\left\{\alpha< M\leq1\; |\; M\textrm{ symmetric real}\right\}\times H^1\times H^1 &\to W^{1,\infty}\\
(M,f,g)&\mapsto V\star(fg),
\end{align*}
	\item uniform Lipschitzity of
	\begin{align*}
\left\{0\leq S<\alpha\; |\; S\textrm{ symmetric real}\right\}\times H^1\times H^1 &\to W^{1,\infty}\\
(S,f,g)&\mapsto V\star(fg).
\end{align*}
\end{enumerate}
\end{lemme}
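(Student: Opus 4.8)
The plan is to proceed in three stages: first establish that $V\star(fg)\in W^{1,\infty}$ for a fixed admissible $V$, then obtain the quantitative bounds on $\|V\star(fg)\|_{W^{1,\infty}}$ in terms of $\|f\|_{H^1}$, $\|g\|_{H^1}$, and finally control the dependence on the matrix parameter to get the two Lipschitz statements. For the first stage I would use the pointwise bound \eqref{aniso_propri_V_large}, namely $0\le V(x)\le a|x|^{-1}\le|x|^{-1}$, to write $V=V_{\text{near}}+V_{\text{far}}$ with $V_{\text{near}}=V\mathds{1}_{|x|\le 1}\in L^2(\R^3)$ and $V_{\text{far}}=V\mathds{1}_{|x|>1}\in L^\infty(\R^3)$, both with norms controlled by a universal constant. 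Since $f,g\in H^1(\R^3)\hookrightarrow L^2\cap L^6\cap L^4\cap L^3$, the product $fg\in L^1\cap L^2\cap L^3$ with $\|fg\|_{L^p}\lesssim\|f\|_{H^1}\|g\|_{H^1}$; then $V_{\text{near}}\star(fg)\in L^\infty$ by Young ($L^2\cdot L^2\to L^\infty$) and $V_{\text{far}}\star(fg)\in L^\infty$ by Young ($L^\infty\cdot L^1\to L^\infty$), so $V\star(fg)\in L^\infty$ with the desired bound. For the gradient, $\nabla(V\star(fg))=V\star\nabla(fg)=V\star(f\nabla g+g\nabla f)$; each term is a convolution of an $L^2+L^\infty$ kernel with an $L^1+L^2$ function (using $f\nabla g\in L^1$ via Cauchy--Schwarz and $f\nabla g\in L^{3/2}\subset$ interpolated spaces), giving $\nabla(V\star(fg))\in L^\infty$ and hence $V\star(fg)\in W^{1,\infty}$.

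**For the Lipschitz estimates** the key is that $(M,f,g)\mapsto V\star(fg)$ factors through $M\mapsto V\in L^2+L^4$ (which is the content of \eqref{def_M_to_V}) composed with the bilinear map $(V,f,g)\mapsto V\star(fg)$. The bilinear map is continuous and locally Lipschitz by the boundedness estimates just established, since the difference $V\star(fg)-\tilde V\star(\tilde f\tilde g)$ telescopes into $(V-\tilde V)\star(fg)+\tilde V\star((f-\tilde f)g)+\tilde V\star(\tilde f(g-\tilde g))$, each term estimated as above. So it remains to show that $M\mapsto V$ — equivalently $M\mapsto \widehat V(k)=4\pi(|k|^{-2}-(k^TM^{-1}k)^{-1})$, or directly $M\mapsto\det(M^{-1})^{-1/2}|M^{1/2}x|^{-1}$ — is locally Lipschitz from $\{\alpha<M\le 1\}$ into $L^2+L^4$. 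The map $M\mapsto M^{1/2}$ and $M\mapsto M^{-1}$ are smooth on the set $\{M>\alpha\}$ of positive-definite matrices bounded below, with derivative bounds depending only on $\alpha$; writing $|M^{1/2}x|^{-1}-|\tilde M^{1/2}x|^{-1}$ and splitting the $x$-integral into $|x|\le1$ (controlled in $L^2$) and $|x|>1$ (controlled in $L^4$), one gets a bound of the form $C_\alpha|M-\tilde M|$ uniformly, which yields \emph{local} Lipschitzity on $\{\alpha<M\le1\}$ for each $\alpha<1$. In the simplified model $S\mapsto V=|(1-S)^{-1}x|^{-1}$ the map $S\mapsto(1-S)^{-1}$ is smooth with derivative bounded by a constant depending only on $\alpha$ on $\{0\le S<\alpha\}$ (since $\|(1-S)^{-1}\|\le(1-\alpha)^{-1}$ there), and crucially this bound is \emph{uniform} over the whole parameter set $\{0\le S<\alpha\}$, which gives the stronger uniform Lipschitzity claimed in (2). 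The asymmetry between (1) and (2) — local in the $M$-model, uniform in the $S$-model — is exactly because the $M$-model potential degenerates ($V\to 0$) as $M\to\mathrm{Id}$ in a way that still requires the cutoff $M>\alpha$ away from the singular set $m_1\to 0$, whereas in the $S$-model the relevant bound $\|(1-S)^{-1}\|\le(1-\alpha)^{-1}$ is genuinely uniform on the closed-off parameter range.

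**The main obstacle** I anticipate is the careful bookkeeping in showing $M\mapsto V$ is Lipschitz into the \emph{sum space} $L^2+L^4$ rather than into a single $L^p$: one must produce, for each pair $M,\tilde M$, an explicit decomposition $V-\tilde V=g_2+g_4$ with $\|g_2\|_{L^2}+\|g_4\|_{L^4}\lesssim_\alpha|M-\tilde M|$, and the natural near/far-field split has to be done so the constants do not blow up — here the lower bound $M>\alpha$ enters to keep $|M^{1/2}x|\gtrsim_\alpha|x|$ so that the near-field singularity stays $L^2$-integrable and the far-field decay stays $L^4$-integrable, uniformly. A secondary technical point is verifying $\nabla(V\star(fg))=V\star\nabla(fg)$ rigorously (differentiation under the convolution), which follows from dominated convergence once the integrability of the difference quotients is in hand, but should be stated. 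Everything else — Young's inequality, Sobolev embeddings, the telescoping trick for bilinear maps, smoothness of matrix square root and inverse on positive-definite matrices bounded below — is standard and I would invoke it without detailed computation.
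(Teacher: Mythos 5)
Your overall architecture (decompose $V$ into a sum-space $L^2+L^\infty$, estimate convolutions by Young, then factor the dependence on the matrix through $M\mapsto V\in L^2+L^4$ and a bilinear map) is genuinely different from what the paper does. The paper never decomposes $V$ into a sum space; it keeps the pointwise bound $|V(x)|\leq |x|^{-1}$ from \eqref{aniso_propri_V_large} (and, for the Lipschitz part, a pointwise bound $|V_M(x)-V_N(x)|\lesssim_\alpha\|M-N\|\,|x|^{-1}$ derived from the integral representation of $M^{1/2}-N^{1/2}$) and applies the Hardy inequality directly under the convolution: $\bigl(|\cdot|^{-1}\star|fg|\bigr)(x)\leq\|f\|_2\bigl(\smallint|g(y)|^2|x-y|^{-2}\dd y\bigr)^{1/2}\leq 2\|f\|_2\|\nabla g\|_2$. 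That one mechanism dispatches the $L^\infty$ bound, the gradient bound, and both Lipschitz estimates.

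Your route, as written, has a genuine endpoint gap at the gradient estimate. You need $\|V_{\textrm{near}}\star(f\nabla g)\|_\infty<\infty$. But $V_{\textrm{near}}=|x|^{-1}\mathds{1}_{|x|\leq1}$ lies in $L^p$ only for $p<3$ (it is in weak $L^3$ but not $L^3$), while $f\nabla g$ lies in $L^s$ only for $s\in[1,3/2]$ (from $f\in L^q$, $q\in[2,6]$, and $\nabla g\in L^2$). Young's inequality with $L^\infty$ output demands $1/p+1/s=1$, i.e.\ $s=p'>3/2$, and this exceeds the range in which $f\nabla g$ actually lies. Your phrasing that $f\nabla g$ is ``an $L^1+L^2$ function'' glosses over this: $f\nabla g$ is not in $L^2$ for general $f,g\in H^1(\R^3)$, and the $L^2$-kernel-against-$L^1$-function pairing yields $L^2$, not $L^\infty$. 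You could repair this with a Lorentz refinement (O'Neil-type inequality: $L^{3,\infty}$ kernel against $L^{3/2,1}$, and $f\nabla g\in L^{3/2,1}$ does hold by Lorentz--H\"older from $f\in L^{6,2}$, $\nabla g\in L^{2,2}$), but this is a non-trivial upgrade that your sketch does not mention; the Hardy argument the paper uses is the shorter fix and you already have the needed pointwise bound in hand. The same gap recurs, via the bilinear factoring, in your treatment of the Lipschitz estimates.

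As a secondary point, your explanation that the $M$-model is only locally Lipschitz because ``$V\to0$ as $M\to\mathrm{Id}$'' is not what creates the asymmetry. In the paper the $M$-bound contains an $|\sqrt{\det M}-\sqrt{\det N}|$ factor (absent in the $S$-model, where $(1-S)^{-1}$ appears directly), and it is this determinant term --- together with the $\alpha$-dependent constants from the $M^{1/2}$ representation --- that is only claimed to be locally Lipschitz; the $S$-model escapes this because the difference $V_S-V_T$ reduces to the resolvent identity $(1-T)^{-1}-(1-S)^{-1}=(1-S)^{-1}(T-S)(1-T)^{-1}$ with a uniform bound $\|(1-S)^{-1}\|\leq(1-\alpha)^{-1}$.
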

\begin{proof}[Proof of~\clm{majoration_(fg)star-(V-W)}]First, for any $f\in L^2(\R^3)$ and $g\in H^1(\R^3)$, by \eqref{aniso_propri_V_large} together with Hardy inequality, $\left|V\star(fg)(x)\right|\leq (|\cdot|^{-1}\star |fg|)(x)\leq 2\norm{f}_2\norm{\nabla g}_2$ holds. Consequently, for any $f,g\in H^1(\R^3)$, we have
\begin{align*}
\norm{V\star(fg)}_{W^{1,\infty}}&\leq \norm{V\star(fg)}_\infty + \norm{V\star(g \nabla f)}_\infty + \norm{V\star(f \nabla g)}_\infty \\
	&\leq 2\norm{f}_2\norm{\nabla g}_2 + 4\norm{\nabla f}_2\norm{\nabla g}_2 \\
	&\leq 6\norm{f}_{H^1}\norm{g}_{H^1}.
\end{align*} Thus $V\star(fg)$ is in $W^{1,\infty}$. For the rest of the proof, we denote by $\norm{M}$ the spectral norm of $M$ and fix an $\alpha$ such that $0<\alpha<1$.

For $(S,T) \in \left\{0\leq M<\alpha\; |\; M\textrm{ symmetric real}\right\}^2$, $f\in L^2(\R^3)$, $g\in H^1(\R^3)$ and $x\in\R^3$, we have
\begin{align*}
\left|(V_S-V_T)\star(fg)(x)\right|&\leq \left|\frac{|(1-T)^{-1}\cdot|-|(1-S)^{-1}\cdot|}{|(1-S)^{-1}\cdot||(1-T)^{-1}\cdot|}\right| \star |fg| (x)\\
	&\leq \frac{\left|\left[(1-T)^{-1}-(1-S)^{-1}\right]\cdot\right|}{|\cdot|^2} \star |fg| (x)\\
	&\leq \frac{\left|(1-S)^{-1}(T-S)(1-T)^{-1}\cdot\right|}{|\cdot|^2} \star |fg| (x)\\
	&\leq \norm{(1-S)^{-1}}\norm{T-S}\norm{(1-T)^{-1}} \frac1{|\cdot|} \star |fg| (x)\\
	&\leq 2(1-\alpha)^{-2}\norm{f}_2\norm{\nabla g}_2\norm{S-T}.
\end{align*} Thus, for any $f,g\in H^1(\R^3)$, we have $$\norm{(V_S-V_T)\star(fg)}_{W^{1,\infty}}\leq 6(1-\alpha)^{-2}\norm{f}_{H^1}\norm{g}_{H^1}\norm{S-T},$$ which concludes the proof of \emph{(2)}.

For $(M,N) \in \left\{\alpha< M\leq1\; |\; M\textrm{ symmetric real}\right\}^2$, we have
\begin{align*}
M^{1/2}-N^{1/2}&=\pi^{-1}\displaystyle{\int_o^\infty}{\left(\frac{M}{s+M}-\frac{N}{s+N}\right)\frac{\dd s}{\sqrt{s}} }\\
	&=\pi^{-1}\displaystyle{\int_o^\infty}{\frac1{s+M}(M-N)\frac1{s+N} \sqrt{s}\dd s },
\end{align*} which leads to $$\norm{M^{\frac12}-N^{\frac12}}\leq\frac{\norm{M-N}}{\pi}\displaystyle{\int_o^\infty}{\frac{\sqrt{s}}{(s+\alpha)^2} \dd s }=\frac{\norm{M-N}}{\pi\sqrt{\alpha}}\displaystyle{\int_o^\infty}{\frac{\sqrt{s}}{(s+1)^2} \dd s }=\frac{\norm{M-N}}{2\sqrt{\alpha}}.$$ Moreover, with a similar computation and since $\det M, \det N>\alpha^3$, we obtain $$\left|\sqrt{\det M}-\sqrt{\det N}\right|\leq\frac{\left|\det M-\det N\right|}{2\alpha^{3/2}}.$$ Thus, for $f\in L^2(\R^3)$, $g\in H^1(\R^3)$ and $x\in\R^3$, we have
\begin{align*}
\left|(V_{M}-V_{N})\star(fg)(x)\right|&\leq \frac1{\sqrt{\det N^{-1}}} |fg|\star\frac{\left|\left|M^{1/2}\cdot\right|-\left|N^{1/2}\cdot\right|\right|}{\left|M^{1/2}\cdot\right|\left|N^{1/2}\cdot\right|}(x) \\
	&\phantom{\leq} +\left|\frac1{\sqrt{\det N^{-1}}}-\frac1{\sqrt{\det M^{-1}}}\right| |fg|\star\left|M^{1/2}\cdot\right|^{-1}(x)\\
	&\leq 2\sqrt{\det N}\norm{M^{-1}}^{1/2}\norm{N^{-1}}^{1/2} \norm{f}_2\norm{\nabla g}_2 \norm{M^{1/2}-N^{1/2}} \\
	&\phantom{\leq} +2\norm{M^{-1}}^{1/2}\norm{f}_2\norm{\nabla g}_2 \left|\sqrt{\det N}-\sqrt{\det M}\right|\\
	&\leq \left(\norm{M-N}+\alpha^{-1/2}\left|\det N-\det M\right|\right)\alpha^{-3/2} \norm{f}_2\norm{\nabla g}_2.
\end{align*} Finally, the determinant being locally Lipschitz, we obtain that $M \mapsto V\star(fg)$ is locally Lipschitz.
\end{proof}

Since $M^{-1}$ is real and symmetric, there exists $R \in O(3)$ such that $$R^TMR=\textrm{diag}(m_3,m_2,m_1)$$ and so, for any $x\in\R^3$, after a simple computation, we have
$$V(Rx)=|x|^{-1}-\left|\textrm{diag}\left((m_1m_2)^{-1/2},(m_1m_3)^{-1/2},(m_2m_3)^{-1/2}\right)x\right|^{-1},$$ where $0<\sqrt{m_1m_2}\leq\sqrt{m_1m_3}\leq\sqrt{m_2m_3}<1$ since $m_2<1$. Thus,
we can consider, without any loss of generality, that
\begin{equation}\label{V_final_definition}
	\left\{
		\begin{aligned}
			M&=\textrm{diag}(m_1,m_2,m_3) \quad\textrm{ with } 0< m_3\leq m_2\leq m_1<1,\\
			M&\mapsto V(x)=\frac1{|x|}-\frac1{|M^{-1} x|}
		\end{aligned}
	\right.
\end{equation}

Similarly, for the simplified model, we can also assume that
\begin{equation}\label{V_simplified_final_definition}
V(x)=|\textrm{diag}(1-s_1,1-s_2,1-s_3)^{-1}x|^{-1}, \qquad 0\leq s_3\leq s_2\leq s_1<1.
\end{equation}

For clarity, from now on we denote by ${\mathscr E}_M$ (resp. ${\mathscr E}_S$) the energy and by $I_M(\lambda)$ (resp. $I_S(\lambda)$) the minimization problem since $V$ depends only on the matrix $M$ (resp. on the matrix $S$). However, for shortness, we will omit the subscripts when no confusion is possible.

\begin{lemme}\label{symmetric_also_minimizer}
Let $\psi\in H^1(\R^3)$ be a solution of the equation~\eqref{Pekar_euler_lagrange_notnormalize}, for $V$ defined as in~\eqref{V_final_definition} or in~\eqref{V_simplified_final_definition}, then $(x,y,z)\mapsto\psi(\pm x,\pm y,\pm z)$ are $H^1(\R^3)$-solutions to~\eqref{Pekar_euler_lagrange_notnormalize}.
\end{lemme}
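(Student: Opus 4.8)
\emph{Plan.} The key point is that, in both normal forms~\eqref{V_final_definition} and~\eqref{V_simplified_final_definition}, the potential $V$ is invariant under each of the eight sign-flip maps $\sigma\colon(x,y,z)\mapsto(\pm x,\pm y,\pm z)$. Indeed $V$ is expressed only through $|x|$ and $|D^{-1}x|$ with $D$ a \emph{diagonal} matrix ($D=M$, resp.\ $D=\mathrm{diag}(1-s_1,1-s_2,1-s_3)$), and flipping the sign of a coordinate changes neither $|x|$ nor $|D^{-1}x|$; hence $V\circ\sigma=V$. Each such $\sigma$ is moreover orthogonal, with $\sigma^2=\mathrm{Id}$ and $|\det\sigma|=1$.

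First I would record the two elementary commutation facts. Since $\sigma\in O(3)$, one has $\Delta(\psi\circ\sigma)=(\Delta\psi)\circ\sigma$, and the change of variables $y\mapsto\sigma y$ (which is measure preserving) together with $V\circ\sigma=V$ gives, for $\psi\in H^1(\R^3)$,
\[
\bigl(V\star|\psi\circ\sigma|^2\bigr)(x)=\int_{\R^3}V(x-\sigma z)\,|\psi(z)|^2\dd z=\int_{\R^3}V(\sigma x- z)\,|\psi(z)|^2\dd z=\bigl(V\star|\psi|^2\bigr)(\sigma x),
\]
where in the middle step one writes $x-\sigma z=\sigma(\sigma x-z)$ and uses $V(\sigma\,\cdot)=V(\cdot)$. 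By \clm{majoration_(fg)star-(V-W)} the function $V\star|\psi|^2$ belongs to $W^{1,\infty}(\R^3)$, so all the terms below make sense pointwise.

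Then it suffices to substitute. Writing $\psi_\sigma:=\psi\circ\sigma$, the two identities above yield
\[
-\tfrac12\Delta\psi_\sigma-\bigl(V\star|\psi_\sigma|^2\bigr)\psi_\sigma=\Bigl[-\tfrac12\Delta\psi-\bigl(V\star|\psi|^2\bigr)\psi\Bigr]\circ\sigma=(-\mu\psi)\circ\sigma=-\mu\,\psi_\sigma,
\]
so $\psi_\sigma$ solves~\eqref{Pekar_euler_lagrange_notnormalize} with the same Lagrange multiplier $\mu$. Finally $\psi_\sigma\in H^1(\R^3)$ because composition with an orthogonal map is an isometry of $H^1(\R^3)$. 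Applying this to the eight maps $\sigma$ gives the statement.

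There is essentially no hard step here: the only points requiring (minimal) care are checking the invariance $V\circ\sigma=V$, which is immediate once $M$ (resp.\ $S$) has been diagonalized as in Section~\ref{section_settings_first_prop}, and ensuring the nonlinear term is pointwise meaningful, which is exactly what \clm{majoration_(fg)star-(V-W)} provides.
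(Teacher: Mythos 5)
Your proof is correct and follows exactly the route the paper has in mind; the paper's own proof is a one-line remark ("This follows from the symmetry properties of $V$"), and you have simply supplied the details — diagonal form of $M$ (resp.\ $S$) gives $V\circ\sigma=V$ for each sign flip $\sigma$, orthogonality of $\sigma$ gives commutation with $\Delta$ and with convolution, and then one substitutes. Nothing to change.
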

\begin{proof}[Proof of~\clm{symmetric_also_minimizer}]
This follows from the symmetry properties of $V$.
\end{proof}

\section{Existence of minimizers}\label{section_exist}
We state in this section the existence of minimizers for the minimization problems (see Appendix in~\cite{Ricaud-PhD} for the proof). We first give some properties of these variational problems.
\begin{lemme}\label{lemma_ppt_I_anisotro} Let $V$ be defined as in~\eqref{V_final_definition} or~\eqref{V_simplified_final_definition} and $I$ be defined as in~\eqref{minimization_pb}. Then
\begin{equation}\label{I_neg_anisotro}
I(\lambda)=\lambda^3I(1) < 0, \textrm{ if } \lambda>0.
\end{equation}
Consequently,
\begin{enumerate}
	\item $\lambda \mapsto I(\lambda)$ is $C^\infty$ on $\R^+$,
	\item $I(\lambda) < I(\lambda-\lambda')+I(\lambda')$, for any $\lambda$ et $\lambda'$ such that $0 < \lambda' < \lambda$,
\end{enumerate}
and, in particular:
\begin{enumerate}
	\setcounter{enumi}{2}
	\item $I(\lambda) < I(\lambda')$, for any $0 \leq \lambda' < \lambda$.
\end{enumerate}
\end{lemme}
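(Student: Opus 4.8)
The backbone of the statement is the scaling identity $I(\lambda)=\lambda^{3}I(1)$, and the plan is to obtain it from a single change of variables. Both admissible potentials --- $V(x)=|x|^{-1}-|M^{-1}x|^{-1}$ in~\eqref{V_final_definition} and $V(x)=|\mathrm{diag}(1-s_{1},1-s_{2},1-s_{3})^{-1}x|^{-1}$ in~\eqref{V_simplified_final_definition} --- are homogeneous of degree $-1$, i.e. $V(tx)=t^{-1}V(x)$ for every $t>0$. Given $\psi\in H^{1}(\R^{3})$ I would set $\psi_{\lambda}(x):=\lambda^{2}\psi(\lambda x)$; a direct computation then gives $\norm{\psi_{\lambda}}_{2}^{2}=\lambda\norm{\psi}_{2}^{2}$, $\int_{\R^{3}}|\nabla\psi_{\lambda}|^{2}\dd x=\lambda^{3}\int_{\R^{3}}|\nabla\psi|^{2}\dd x$ and, using the homogeneity of $V$ inside the substitution,
\[
\int_{\R^{3}}\!\int_{\R^{3}}|\psi_{\lambda}(y)|^{2}|\psi_{\lambda}(x)|^{2}V(x-y)\dd y\dd x=\lambda^{3}\int_{\R^{3}}\!\int_{\R^{3}}|\psi(y)|^{2}|\psi(x)|^{2}V(x-y)\dd y\dd x ,
\]
so that $\mathscr{E}^{V}(\psi_{\lambda})=\lambda^{3}\mathscr{E}^{V}(\psi)$. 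Since $\psi\mapsto\psi_{\lambda}$ is a bijection from $\{\norm{\psi}_{2}^{2}=1\}$ onto $\{\norm{\phi}_{2}^{2}=\lambda\}$ (inverse $\phi\mapsto\lambda^{-2}\phi(\cdot/\lambda)$), taking the infimum on both sides of $\mathscr{E}^{V}(\psi_{\lambda})=\lambda^{3}\mathscr{E}^{V}(\psi)$ yields $I(\lambda)=\lambda^{3}I(1)$.

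Next I would show that $-\infty<I(1)<0$. For the lower bound, I would use~\eqref{aniso_propri_V_large} to dominate the interaction term by the Coulomb one and then bound $\int\!\int|\psi(y)|^{2}|\psi(x)|^{2}|x-y|^{-1}\dd y\dd x$ by a constant times $\norm{\psi}_{2}^{3}\norm{\nabla\psi}_{2}$, using the Hardy (or Hardy--Littlewood--Sobolev) inequality together with the Sobolev embedding $H^{1}(\R^{3})\hookrightarrow L^{6}(\R^{3})$ and interpolation; on $\{\norm{\psi}_{2}^{2}=1\}$ this makes $\mathscr{E}^{V}(\psi)\geq\tfrac12\norm{\nabla\psi}_{2}^{2}-C\norm{\nabla\psi}_{2}$, hence bounded below, so $I(1)>-\infty$. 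For strict negativity, I would fix a Gaussian $\psi$ with $\norm{\psi}_{2}=1$ and apply the $L^{2}$-preserving dilation $\psi^{(\sigma)}(x):=\sigma^{3/2}\psi(\sigma x)$: again by the homogeneity of $V$,
\[
\mathscr{E}^{V}(\psi^{(\sigma)})=\frac{\sigma^{2}}{2}\int_{\R^{3}}|\nabla\psi|^{2}\dd x-\frac{\sigma}{2}\int_{\R^{3}}\!\int_{\R^{3}}|\psi(y)|^{2}|\psi(x)|^{2}V(x-y)\dd y\dd x ,
\]
and since $V\geq0$ and $V$ is positive on a set of positive measure the double integral is strictly positive, so the term linear in $\sigma$ dominates as $\sigma\to0^{+}$ and $\mathscr{E}^{V}(\psi^{(\sigma)})<0$ for $\sigma$ small enough. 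Hence $I(1)<0$.

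The three consequences then drop out of $I(\lambda)=\lambda^{3}I(1)$ with $I(1)<0$. Item~(1) is immediate, $\lambda\mapsto\lambda^{3}I(1)$ being a monomial, hence $C^{\infty}$ on $\R^{+}$. For item~(2), dividing the desired inequality by $I(1)<0$ reverses it and reduces it to $\lambda^{3}>(\lambda-\lambda')^{3}+(\lambda')^{3}$ for $0<\lambda'<\lambda$; writing $a=\lambda-\lambda'>0$ and $b=\lambda'>0$ this is just $(a+b)^{3}-a^{3}-b^{3}=3ab(a+b)>0$. Item~(3) follows from~(2) together with $I(\lambda-\lambda')=(\lambda-\lambda')^{3}I(1)<0$ when $0<\lambda'<\lambda$, and from $I(\lambda)=\lambda^{3}I(1)<0=I(0)$ (the $\lambda=0$ constraint set being $\{0\}$) when $\lambda'=0$; alternatively it is simply the strict monotonicity of $\lambda\mapsto\lambda^{3}I(1)$.

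Everything here is routine. The only points that require some care are the bookkeeping of the degree-$(-1)$ homogeneity of $V$ under the two dilations --- which is precisely what pins down the exponents in $\lambda^{2}\psi(\lambda x)$ and $\sigma^{3/2}\psi(\sigma x)$ --- and the lower bound $I(1)>-\infty$, for which one genuinely needs the Sobolev-type control of the interaction energy and not merely its nonnegativity. Beyond these I do not expect any real obstacle.
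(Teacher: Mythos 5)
Your proof is correct and follows essentially the same route as the paper: the scaling $\psi_\lambda=\lambda^2\psi(\lambda\cdot)$ to obtain $I(\lambda)=\lambda^3 I(1)$, then the $L^2$-preserving dilation $\psi^{(\sigma)}=\sigma^{3/2}\psi(\sigma\cdot)$ to show $I(1)<0$ (the paper quotes the lower bound $V\geq b|x|^{-1}$ from~\eqref{aniso_propri_V} where you argue more softly via $V\geq0$ and $V>0$ a.e., but both are valid here), and finally the elementary inequality $(a+b)^3>a^3+b^3$ for the subadditivity. Your extra remarks about $I(1)>-\infty$ and the $\lambda'=0$ endpoint are harmless additions that the paper leaves implicit.
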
 
\begin{proof}[Proof of \clm{lemma_ppt_I_anisotro}] Let $\psi \in H^1(\R^3)$ with $\norm{\psi}_{L^2}^2=1$, then $\psi_\lambda:=\lambda^2\psi(\lambda \cdot) \in H^1(\R^3)$ and $\norm{\psi_\lambda}_{L^2}^2=\lambda$ and, by a direct computation, ${\mathscr E}(\psi_\lambda)=\lambda^3{\mathscr E}(\psi)$ which leads to $I(\lambda)=\lambda^3I(1)$. If we now define $\psi_t = t^{3/2}\psi(t\cdot)$ and use \eqref{aniso_propri_V}, we find that 
 \begin{align*}
 {\mathscr E}(\psi_t) &\leq \frac{1}{2} \norm{\nabla\psi_t}_{L^2}^2 - \frac{b}{2} \norm{|\psi_t|^2 \left(|\psi_t|^2\star|\cdot|^{-1}\right)}_{L^2}^2 \\
	&\leq\frac{t^2}{2}\norm{\nabla\psi}_{L^2}^2 - \frac{b t}{2}\norm{|\psi|^2\left(|\psi|^2\star|\cdot|^{-1}\right)}_{L^2}^2,
 \end{align*} and taking $t$ small enough leads to the claimed strict negativity. The rest follows immediately.
\end{proof}

\begin{lemme}\label{anisot_convex_I}
Let  $V$ be defined as in~\eqref{V_final_definition} or~\eqref{V_simplified_final_definition}. Let $I$ be as in~\eqref{minimization_pb} and let $\lambda > 0$. Then $I(t\lambda) > t I(\lambda), \; \textrm{for all } t \in (0,1)$.
\end{lemme}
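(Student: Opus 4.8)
The plan is to deduce this directly from the scaling identity $I(\lambda)=\lambda^3 I(1)$ established in \clm{lemma_ppt_I_anisotro}, together with the strict negativity $I(1)<0$. Indeed, using $I(t\lambda)=(t\lambda)^3 I(1)=t^3\lambda^3 I(1)$ and $I(\lambda)=\lambda^3 I(1)$, the claimed inequality $I(t\lambda)>tI(\lambda)$ is equivalent to $t^3\lambda^3 I(1) > t\lambda^3 I(1)$, i.e., after dividing by $t\lambda^3<0$ (which reverses the inequality), to $t^2 < 1$. This last statement is exactly $t\in(0,1)$, so the lemma follows. I would write this as a short one-line computation: since $I(1)<0$ and $0<t<1$ gives $t^3<t$, multiplying by $\lambda^3 I(1)<0$ reverses the sign to yield $t^3\lambda^3 I(1) > t\lambda^3 I(1)$, which is precisely $I(t\lambda)>tI(\lambda)$.

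The only inputs needed are therefore the two conclusions of the previous lemma, both of which are already available in the excerpt: the homogeneity relation $I(\lambda)=\lambda^3 I(1)$ and $I(1)<0$. No additional variational or PDE argument is required. In particular, there is no genuine obstacle here; the proof is a purely algebraic consequence of degree-three homogeneity of $I$ and its negativity, and the role of this lemma is presumably to feed into the concentration-compactness / binding-type arguments in Section~\ref{section_exist} (it gives, e.g., strict subadditivity of $\lambda\mapsto I(\lambda)$ as a corollary, recovering item~(2) of \clm{lemma_ppt_I_anisotro} and hence the compactness of minimizing sequences).

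If one preferred a proof not invoking the explicit scaling law, one could instead argue by convexity: $I$ is homogeneous of degree $3$, hence $\lambda\mapsto I(\lambda)$ is (strictly) convex on $\R^+$ with $I(0)=0$, and for a strictly convex function vanishing at $0$ one has $I(t\lambda)=I(t\lambda+(1-t)\cdot 0) < tI(\lambda)+(1-t)I(0)=tI(\lambda)$ — but this has the inequality in the \emph{wrong} direction because $I<0$ makes $\lambda\mapsto I(\lambda)$ strictly \emph{concave} in the relevant sense; the clean route really is the direct substitution above, so that is the one I would present.
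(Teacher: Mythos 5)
Your proof is correct and follows essentially the same route as the paper: both proofs invoke the cubic homogeneity $I(t\lambda)=t^3 I(\lambda)$ and the strict negativity $I(\lambda)<0$ from Lemma~\ref{lemma_ppt_I_anisotro}, then observe that $t^3<t$ for $t\in(0,1)$ and multiply by the negative quantity to reverse the inequality. The paper's version is just the one-line chain $0>I(t\lambda)=t^3 I(\lambda)>tI(\lambda)$, which is exactly your computation with $I(\lambda)=\lambda^3 I(1)$ left unexpanded.
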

\begin{proof}[Proof of~\clm{anisot_convex_I}]
Let $t \in (0,1)$. By  \clm{lemma_ppt_I_anisotro}, $0>I(t\lambda)=t^3I(\lambda)>tI(\lambda)$.
\end{proof}

These two lemmas imply the existence of minimizers and the compactness of minimizing sequences, as stated in the following theorem which gives also some properties of these minimizers.
\begin{thm}[Existence of a minimizer]\label{anisotrop_existence_minimiseur} Let $V$ be as in~\eqref{V_final_definition} or~\eqref{V_simplified_final_definition} and $\lambda>0$. Then $I(\lambda)$ has a minimizer and any minimizing sequence strongly converges in $H^1(\R^3)$ to a minimizer, up to extraction of a subsequence and after an appropriate space translation.

Moreover for any minimizer $\psi$, we have
\begin{enumerate}
	\item $\psi$ is a $H^2(\R^3)$-solution of the Choquard--Pekar equation~\eqref{Pekar_euler_lagrange_notnormalize}\\with $-\mu=\frac{d}{d\lambda}I(\lambda)<0$ being the smallest eigenvalue of the self-adjoint operator $H_{\psi}:=-\Delta/2 - |\psi|^2\star V$ and being simple;
	\item \hfill	
	\makebox[0pt][r]{%
		\begin{minipage}[b]{\textwidth}
			\begin{equation}\label{anisotrop_existence_minimiseur_norm_equal}
				\mu\lambda=-\lambda \frac{d}{d\lambda}I(\lambda)=-3\lambda^3I(1)=\frac32\norm{\nabla\psi}_2^2=\frac34\pscal{V\star|\psi|^2, |\psi|^2};
			\end{equation}
		\end{minipage}
	}		
	\item $|\psi|$ is a minimizer and $|\psi|>0$;
	\item $\psi = z|\psi|$ for a given $|z|=1$.
\end{enumerate}
\end{thm}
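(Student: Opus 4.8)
The plan is to combine the concentration--compactness method with the Euler--Lagrange equation and a Perron--Frobenius argument. For the existence and compactness statement, I would first note that any minimizing sequence $(\psi_n)$ with $\norm{\psi_n}_2^2 = \lambda$ is bounded in $H^1(\R^3)$: by~\eqref{aniso_propri_V_large} and Hardy's inequality, $V\star|\psi|^2 \leq |\cdot|^{-1}\star|\psi|^2 \leq 2\norm{\psi}_2\norm{\nabla\psi}_2$ pointwise, hence $\pscal{V\star|\psi|^2, |\psi|^2} \leq 2\norm{\psi}_2^3\norm{\nabla\psi}_2$, so that $\mathscr{E}(\psi) \geq \tfrac12\norm{\nabla\psi}_2^2 - \norm{\psi}_2^3\norm{\nabla\psi}_2$ is bounded below and coercive in $\norm{\nabla\psi}_2$. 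Then I would run the concentration--compactness lemma of Lions on the densities $\rho_n = |\psi_n|^2$: \emph{vanishing} is impossible since it would make the interaction term vanish and force $\liminf\mathscr{E}(\psi_n) \geq 0 > I(\lambda)$, against~\eqref{I_neg_anisotro}; \emph{dichotomy} is impossible by the strict binding inequality $I(\lambda) < I(\lambda-\lambda') + I(\lambda')$ of \clm{lemma_ppt_I_anisotro}. Hence \emph{compactness} holds, i.e.\ there are translations $y_n$ such that $\psi_n(\cdot+y_n)$ is tight; up to a subsequence it converges weakly in $H^1$ and, by tightness and Rellich, strongly in $L^2(\R^3)$ to some $\psi$ with $\norm{\psi}_2^2 = \lambda$. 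Interpolating strong $L^2$ with bounded $H^1$ yields strong $L^{12/5}$ convergence, so the interaction term passes to the limit; combined with weak lower semicontinuity of the gradient, this gives $\mathscr{E}(\psi) \leq I(\lambda)$, hence $\mathscr{E}(\psi) = I(\lambda)$ (as $\norm{\psi}_2^2 = \lambda$) and $\norm{\nabla\psi_n(\cdot+y_n)}_2 \to \norm{\nabla\psi}_2$, so the convergence is strong in $H^1$ and $\psi$ is a minimizer.

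Now let $\psi$ be any minimizer. The Lagrange multiplier rule produces $\mu\in\R$ with $-\tfrac12\Delta\psi - (V\star|\psi|^2)\psi = -\mu\psi$, which is~\eqref{Pekar_euler_lagrange_notnormalize}. Since $V\star|\psi|^2 \in W^{1,\infty}(\R^3)$ by \clm{majoration_(fg)star-(V-W)}, the right-hand side $(V\star|\psi|^2 - \mu)\psi$ lies in $L^2$, so $\psi \in H^2(\R^3)$ by elliptic regularity; this proves the first half of~(1). For~(2): pairing the equation with $\psi$ gives $\tfrac12\norm{\nabla\psi}_2^2 - \pscal{V\star|\psi|^2, |\psi|^2} = -\mu\lambda$, while, $V$ being homogeneous of degree $-1$, the curve $t\mapsto t^{3/2}\psi(t\cdot)$ keeps the constraint and satisfies $\mathscr{E}(t^{3/2}\psi(t\cdot)) = \tfrac{t^2}{2}\norm{\nabla\psi}_2^2 - \tfrac t2\pscal{V\star|\psi|^2,|\psi|^2}$, which is minimal at $t=1$; this yields the Pohozaev identity $\norm{\nabla\psi}_2^2 = \tfrac12\pscal{V\star|\psi|^2,|\psi|^2}$. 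Combining the two identities gives $\mu\lambda = \tfrac32\norm{\nabla\psi}_2^2 = \tfrac34\pscal{V\star|\psi|^2,|\psi|^2}$ and $\mathscr{E}(\psi) = -\tfrac12\norm{\nabla\psi}_2^2$; with $I(\lambda) = \lambda^3 I(1)$ from~\eqref{I_neg_anisotro} this becomes $\mu\lambda = -3\lambda^3 I(1) = -\lambda\tfrac{d}{d\lambda}I(\lambda)$, so that $-\mu = \tfrac{d}{d\lambda}I(\lambda) < 0$; this is~\eqref{anisotrop_existence_minimiseur_norm_equal}.

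It remains to identify $-\mu$ as the lowest eigenvalue of $H_\psi$, show its simplicity, and deduce~(3) and~(4). The crucial point is that $\widehat V \geq 0$ (because $M^{-1}\geq1$, resp.\ $(1-S)^{-1}\geq1$), so $h\mapsto\pscal{V\star h, h}$ is a nonnegative quadratic form; writing $h = |\phi|^2 - |\psi|^2$ and expanding $\pscal{V\star|\phi|^2,|\phi|^2}$, one finds $\mathscr{E}(\phi) - \mathscr{E}(\psi) \leq \pscal{H_\psi\phi,\phi} - \pscal{H_\psi\psi,\psi}$ for every $\phi$ with $\norm{\phi}_2^2 = \lambda$. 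Hence, if $\psi$ did not minimize $\phi\mapsto\pscal{H_\psi\phi,\phi}$ over the sphere $\norm{\phi}_2^2 = \lambda$, a competitor would contradict the minimality of $\psi$ for $\mathscr{E}$; so $\psi/\sqrt\lambda$ minimizes the Rayleigh quotient of $H_\psi$, and comparing with the equation gives $-\mu = \min\sigma(H_\psi)$. As $-V\star|\psi|^2$ is bounded, continuous and vanishes at infinity, $\sigma_{\mathrm{ess}}(H_\psi) = [0,\infty)$, so $-\mu < 0$ is an isolated eigenvalue, and the positivity improvement of $e^{t\Delta/2}$ (Perron--Frobenius) makes it simple. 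Since $\mathscr{E}$ depends on $\psi$ only through $|\psi|^2$ and $\norm{\nabla|\psi|}_2 \leq \norm{\nabla\psi}_2$, the function $|\psi|$ is also a minimizer, hence a nonnegative ground state of $H_{|\psi|} = H_\psi$; the strong maximum principle (Harnack) then gives $|\psi| > 0$, proving~(3). Finally, writing $\psi = \psi_1 + \i\psi_2$ with $\psi_1,\psi_2$ real, both are real eigenfunctions of the real operator $H_\psi$ at the simple eigenvalue $-\mu$, hence proportional to $|\psi|$; therefore $\psi = z|\psi|$ with $|z| = 1$ after normalization, which is~(4).

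The main obstacle is the concentration--compactness step, and specifically the exclusion of dichotomy: it relies on the strict binding inequality of \clm{lemma_ppt_I_anisotro} and on a careful localization of the \emph{nonlocal} and \emph{anisotropic} interaction $\pscal{V\star|\psi_n|^2,|\psi_n|^2}$ when $\psi_n$ is split into two spatially separated bumps. The only new features compared with the classical isotropic Choquard--Pekar analysis are the anisotropy of $V$ and, in the model~\eqref{V_final_definition}, the fact that $V$ is a difference of two Coulomb potentials; but the sign $\widehat V\geq0$ and the two-sided bound~\eqref{aniso_propri_V_large} reduce every estimate needed here to the classical Hardy--Littlewood--Sobolev and Hardy inequalities, so the scheme goes through as in the isotropic case.
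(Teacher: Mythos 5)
Your proposal is correct, and each step is sound; since the paper defers its own proof of this theorem entirely to the appendix of~\cite{Ricaud-PhD}, a line-by-line comparison is not possible here, but the route you sketch is a standard and valid one. A few brief remarks. Your exclusion of vanishing and dichotomy via Lions concentration--compactness, using $I(\lambda)<0$ and the strict subadditivity from~\clm{lemma_ppt_I_anisotro}, is exactly what the two-sided bound~\eqref{aniso_propri_V_large} makes possible and requires no radial symmetry of $V$ (unlike Lieb's 1977 rearrangement argument, which the paper mentions precisely to signal that a different method is needed in the anisotropic setting). Your computation of the Pohozaev/virial identity and the resulting chain of equalities in~\eqref{anisotrop_existence_minimiseur_norm_equal} checks out. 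The key insight that makes the rest work is $\widehat V\geq 0$, which indeed holds in both models ($M^{-1}\geq 1$ for~\eqref{V_final_definition}; for~\eqref{V_simplified_final_definition}, $\widehat V(k)=4\pi\det(1-S)\,|(1-S)k|^{-2}>0$); this gives the convexity inequality $\mathscr E(\phi)-\mathscr E(\psi)\leq\pscal{H_\psi\phi,\phi}-\pscal{H_\psi\psi,\psi}$, hence $-\mu=\inf\sigma(H_\psi)$, and then simplicity follows from positivity improvement of $e^{-tH_\psi}$, since $-V\star|\psi|^2$ is bounded and $e^{t\Delta/2}$ is positivity improving (Reed--Simon Theorem~XIII.44-type argument). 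The passage from simplicity of $-\mu$ to $|\psi|>0$ and to $\psi=z|\psi|$ is standard and you carry it out correctly (note that $H_{|\psi|}=H_\psi$ because the operator only sees $|\psi|^2$, as you observe).
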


See Appendix in~\cite{Ricaud-PhD} for the proof. Note that for the isotropic potentials $V_c$, Lieb proved several of these results in 1977~\cite{Lieb-77} using only the fact that $|x|^{-1}$ is radially decreasing.

\section{Uniqueness in a weakly anisotropic material}\label{uniqueness_weakly_anisotropic}
We recall that the uniqueness of the minimizer, up to phases and space translations, in the isotropic case, was proven by Lieb in~\cite{Lieb-77}. In this section, we extend this result to the case of \emph{weakly anisotropic materials}, meaning that we consider static dielectric matrices close to an homothecy.

We first prove the continuity of $I_M(\lambda)$, with respect to $(M,\lambda)$, which we will need in the proof of uniqueness.
\begin{lemme}[Minimums' convergence]\label{Conv_minimum_I_with_lambda_M}
Let $V$ be defined as in~\eqref{V_final_definition} or~\eqref{V_simplified_final_definition}, $I$ be defined as in~\eqref{minimization_pb} and $(\lambda,\lambda')\in\left(\R_+^*\right)^2$. Then $$I_{M'}(\lambda') \underset{\substack{\norm{M'-M}\rightarrow 0 \\ |\lambda'-\lambda|\rightarrow 0}}{\xrightarrow{\hspace*{1.5cm}}} I_M(\lambda).$$
Thus, the continuity of the corresponding Euler-Lagrange multiplier, $-\mu_{M',\lambda'}$, holds as well: $$\mu_{M',\lambda'} \underset{\substack{\norm{M'-M}\rightarrow 0 \\ |\lambda'-\lambda|\rightarrow 0}}{\xrightarrow{\hspace*{1.5cm}}} \mu_{M,\lambda}.$$
\end{lemme}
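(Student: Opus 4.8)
The plan is to prove the continuity statement by an upper/lower semicontinuity argument exploiting the scaling relation $I(\lambda)=\lambda^3 I(1)$ from \clm{lemma_ppt_I_anisotro} together with the Lipschitz-type control on $V\star(fg)$ from \clm{majoration_(fg)star-(V-W)}. First, by the scaling identity it suffices to treat $\lambda'=\lambda=1$: indeed $I_{M'}(\lambda')=(\lambda')^3 I_{M'}(1)$ and $I_M(\lambda)=\lambda^3 I_M(1)$, so if $I_{M'}(1)\to I_M(1)$ and $\lambda'\to\lambda$ then $I_{M'}(\lambda')\to I_M(\lambda)$, using also that $I_{M'}(1)$ stays bounded (which itself follows once one direction of the convergence is established, or directly from the two-sided bound~\eqref{aniso_propri_V_large} on $V$ that gives uniform upper and lower energy bounds on a fixed test function and via the argument in the proof of \clm{lemma_ppt_I_anisotro}).

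For the upper bound $\limsup_{M'\to M} I_{M'}(1)\le I_M(1)$, I would fix an almost-minimizer $\psi$ for $I_M(1)$ (or the minimizer itself, which exists by \cth{anisotrop_existence_minimiseur}), use it as a trial state for $I_{M'}(1)$, and estimate
\begin{equation*}
\left|{\mathscr E}_{M'}(\psi)-{\mathscr E}_M(\psi)\right|=\frac12\left|\pscal{(V_{M'}-V_M)\star|\psi|^2,\,|\psi|^2}\right|\le \frac12\norm{(V_{M'}-V_M)\star|\psi|^2}_\infty\norm{\psi}_2^2,
\end{equation*}
which tends to $0$ as $\norm{M'-M}\to0$ by \clm{majoration_(fg)star-(V-W)} (applied with $f=g=\psi\in H^1$, and noting $M,M'$ stay in a set $\{\alpha<M\le 1\}$ resp.\ $\{0\le S<\alpha\}$ for $M'$ near $M$). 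Hence $I_{M'}(1)\le {\mathscr E}_{M'}(\psi)\le I_M(1)+o(1)$.

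For the lower bound $\liminf_{M'\to M} I_{M'}(1)\ge I_M(1)$, I take a sequence $M_n\to M$ with $I_{M_n}(1)\to\liminf$, and let $\psi_n$ be a minimizer for $I_{M_n}(1)$, which exists by \cth{anisotrop_existence_minimiseur}. From~\eqref{anisotrop_existence_minimiseur_norm_equal} and the uniform lower bound $V_{M_n}\ge b|\cdot|^{-1}$ with $b$ controlled uniformly for $M_n$ in a neighborhood of $M$ (from~\eqref{aniso_propri_V} — here one should check the constants can be taken uniform, which is where a little care is needed since for the non-simplified potential $b$ could degenerate if $M\to\mathrm{Id}$; but $M$ is fixed with $m_2<1$, so a neighborhood keeps $b$ bounded below), the energies $I_{M_n}(1)$ are bounded below and the kinetic energies $\norm{\nabla\psi_n}_2^2=-2\mu_n\lambda=-6I_{M_n}(1)$ are bounded, so $(\psi_n)$ is bounded in $H^1$. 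Then ${\mathscr E}_M(\psi_n)\ge I_M(1)$ trivially, and $|{\mathscr E}_{M_n}(\psi_n)-{\mathscr E}_M(\psi_n)|\le \frac12\norm{(V_{M_n}-V_M)\star|\psi_n|^2}_\infty\to0$ by the \emph{uniform} Lipschitz/locally Lipschitz bound of \clm{majoration_(fg)star-(V-W)}, which gives $\norm{(V_{M_n}-V_M)\star(fg)}_{W^{1,\infty}}\le C\norm{f}_{H^1}\norm{g}_{H^1}\norm{M_n-M}$ with $C$ depending only on $\alpha$, combined with the uniform $H^1$-bound on $\psi_n$. Therefore $I_{M_n}(1)={\mathscr E}_{M_n}(\psi_n)\ge {\mathscr E}_M(\psi_n)+o(1)\ge I_M(1)+o(1)$, giving the lower bound. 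Combining the two bounds yields $I_{M'}(1)\to I_M(1)$, hence $I_{M'}(\lambda')\to I_M(\lambda)$ by scaling.

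Finally, the convergence of the Lagrange multipliers is immediate from the formula $-\mu_{M',\lambda'}=\frac{d}{d\lambda}I_{M'}(\lambda')$ together with~\eqref{anisotrop_existence_minimiseur_norm_equal}, which gives the closed form $\mu_{M',\lambda'}=-3(\lambda')^2 I_{M'}(1)$; since $I_{M'}(1)\to I_M(1)$ and $\lambda'\to\lambda$, we get $\mu_{M',\lambda'}\to -3\lambda^2 I_M(1)=\mu_{M,\lambda}$. The main obstacle I anticipate is not any of the limiting arguments themselves but the bookkeeping of uniformity of the constants in~\eqref{aniso_propri_V}--\eqref{aniso_propri_V_large} and in \clm{majoration_(fg)star-(V-W)} over a neighborhood of the fixed matrix $M$ (staying bounded away from $\mathrm{Id}$ in the relevant sense and, for the $M$-model, away from the boundary $M=0$), which is what guarantees both the $H^1$-boundedness of the minimizers $\psi_n$ and the smallness of the perturbation term; handling the two potentials~\eqref{V_final_definition} and~\eqref{V_simplified_final_definition} in parallel requires invoking the appropriate one of the two items of \clm{majoration_(fg)star-(V-W)}.
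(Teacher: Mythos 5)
Your proof is correct and follows essentially the same route as the paper's: the paper sandwiches $I_M(\lambda)-I_{M'}(\lambda)$ between $\mathcal E_M(\psi)-\mathcal E_{M'}(\psi)$ and $\mathcal E_M(\psi')-\mathcal E_{M'}(\psi')$ using the two minimizers, then invokes \clm{majoration_(fg)star-(V-W)} and the scaling $I(\lambda)=\lambda^3 I(1)$; you phrase the same two inequalities as separate $\limsup$/$\liminf$ bounds and add the (worthwhile) remark that uniform $H^1$-bounds on the $\psi_{M_n}$ and uniform constants in \clm{majoration_(fg)star-(V-W)} over a neighborhood of $M$ are needed. One small slip: for $\lambda=1$, equation~\eqref{anisotrop_existence_minimiseur_norm_equal} gives $\norm{\nabla\psi_n}_2^2=\tfrac23\mu_n=-2I_{M_n}(1)$, not $-2\mu_n$ nor $-6I_{M_n}(1)$, and the uniform lower bound on $I_{M_n}(1)$ comes from the upper bound $V_{M_n}\le|\cdot|^{-1}$ (comparison with the full Choquard energy), not from $V_{M_n}\ge b|\cdot|^{-1}$; neither error affects the conclusion.
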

\begin{proof}[Proof of~\clm{Conv_minimum_I_with_lambda_M}]Let $\psi$ (resp. $\psi'$) be a minimizer of $I_M(\lambda)$ (resp. $I_{M'}(\lambda)$) for a given $\lambda>0$.

First, for any $\phi\in H^1(\R^3)$, we have $$\left|{\mathcal E}_M(\phi)-{\mathcal E}_{M'}(\phi)\right|=\frac12\left|\pscal{|\phi|^2,|\phi|^2\star(V-V')}\right|\leq\frac12\norm{|\phi|^2\star(V-V')}_\infty\norm{\phi}_2^2.$$ Thus, by~\clm{majoration_(fg)star-(V-W)}, $M\mapsto{\mathcal E}_M(\phi)$ is Lipschitz for any $\phi\in H^1(\R^3)$. Moreover $${\mathcal E}_M(\psi)-{\mathcal E}_{M'}(\psi) \leq I_M(\lambda) - I_{M'}(\lambda) \leq {\mathcal E}_M(\psi')-{\mathcal E}_{M'}(\psi'),$$ which implies that $M\mapsto I_M(\lambda)$ is Lipschitz for any $\lambda>0$.

Thanks to~\clm{lemma_ppt_I_anisotro}, we conclude the proof of the convergence of $I$ since $$\left|I_M(\lambda) - I_{M'}(\lambda')\right| \lesssim \left|I_M(1)\right| \left|\lambda^3 - (\lambda')^3\right| + \norm{M-M'}.$$ Then, the equality $-\mu_{M,\lambda}=3\lambda^2I_M(1)$ gives the convergence of the $\mu_{M',\lambda'}$'s.
\end{proof} We now give our theorem of uniqueness in the weakly anisotropic case.

\begin{thm}[Uniqueness and non-degeneracy in the weakly anisotropic case]\label{Weakly_anisotr_uniqueness_minim}\hspace{0.1cm}\\
Let $\lambda>0$.
\begin{enumerate}[leftmargin=0.5cm, label=\roman*.]
	\item Let $0 < s < 1$. There exists $\epsilon>0$ such that, for every real symmetric $3\times3$ matrix $0<M<1$ with $\norm{M-s\cdot\textrm{\emph{Id}}} <\epsilon$, the minimizer $\psi$ of the minimization problem $I_M(\lambda)$, for $V(x)=|x|^{-1}-|M^{-1} x|^{-1}$ as in~\eqref{V_final_definition}, is unique up to phase and space translations.
	\item Let $0 \leq s < 1$. There exists $\epsilon>0$ such that, for every real symmetric $3\times3$ matrix $0\leq S<1$ with $\norm{S-s\cdot\textrm{\emph{Id}}} <\epsilon$, the minimizer $\psi$ of the minimization problem $I_S(\lambda)$, for $V(x)=|(1-S)^{-1}x|^{-1}$ as in~\eqref{V_simplified_final_definition}, is unique up to phase and space translations.
\end{enumerate}
Moreover, in both cases, the minimizer is even along each eigenvectors of $M$ and $\ker {\mathfrak L}_\psi=\vect\left\{\partial_x \psi, \partial_y \psi, \partial_z \psi\right\}$, where ${\mathfrak L}_\psi$ is the linearized operator defined in~\eqref{op_linearise_radial}.
\end{thm}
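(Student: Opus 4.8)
The plan is to prove uniqueness via an implicit function theorem / contraction argument, treating the anisotropic problem as a perturbation of Lieb's isotropic problem, and leveraging Lenzmann's non-degeneracy result as the crucial invertibility input. First I would fix the isotropic reference: for $V_s(x)=(1-s)|x|^{-1}$ the minimizer $Q=Q_{s,\lambda}$ of $I_{V_s}(\lambda)$ is unique up to phase and translation (Lieb \cite{Lieb-77}), and by Lenzmann \cite{Lenzmann-09} the linearized operator $\mathfrak{L}_Q$ has kernel exactly $\vect\{\partial_{x_1}Q,\partial_{x_2}Q,\partial_{x_3}Q\}$; I would record that $Q$ is radial and hence even in each coordinate. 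The strategy is then: (a) show any minimizer of $I_M(\lambda)$ for $M$ near $s\cdot\textrm{Id}$ is close in $H^1$ (up to translation/phase) to $Q$; (b) fix the translation/phase degrees of freedom by imposing orthogonality conditions $\langle \psi-Q,\partial_{x_i}Q\rangle=0$ and reality of $\psi$; (c) apply the implicit function theorem to the Euler--Lagrange map restricted to the complement of the symmetry-induced kernel.

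For step (a), I would argue by contradiction using compactness: if there were minimizers $\psi_n$ for $M_n\to s\cdot\textrm{Id}$ staying (after translation) a fixed distance from the $Q$-orbit, then since $\mathcal{E}_{M_n}\to\mathcal{E}_{V_s}$ uniformly on $H^1$-bounded sets (by \clm{majoration_(fg)star-(V-W)}) and $I_{M_n}(\lambda)\to I_{V_s}(\lambda)$ (by \clm{Conv_minimum_I_with_lambda_M}), the $\psi_n$ form an almost-minimizing sequence for the isotropic problem; then the concentration-compactness / strong-$H^1$-convergence statement of \cth{anisotrop_existence_minimiseur} forces, up to translation and subsequence, $\psi_n\to$ some isotropic minimizer, i.e.\ a translate/phase of $Q$, contradicting the separation. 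This also pins down the Lagrange multiplier $\mu_{M,\lambda}\to\mu_{s,\lambda}$. For step (c), I would set up the nonlinear map $F(M,\xi,\mu)=\big(-\tfrac12\Delta(Q+\xi)-\mu(Q+\xi)-(V_M\star|Q+\xi|^2)(Q+\xi)\,,\ \|Q+\xi\|_2^2-\lambda\big)$ acting on real-valued $\xi$ in the $L^2$-orthogonal complement $N^\perp$ of $N:=\vect\{\partial_{x_i}Q\}$, with values in $N^\perp\times\R$; note $\mathfrak{L}_Q$ restricted as a map $N^\perp\cap H^2\to N^\perp$ is an isomorphism precisely by Lenzmann's non-degeneracy (and $-\mu_{s,\lambda}$ being a simple ground-state eigenvalue of $H_Q$, from \cth{anisotrop_existence_minimiseur}(1), handles the $\mu$-direction via the $L^2$-constraint). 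Since $M\mapsto V_M\star(\cdot)$ is locally Lipschitz $H^1\to W^{1,\infty}$ by \clm{majoration_(fg)star-(V-W)}, $F$ is $C^1$ in $(\xi,\mu)$ and continuous (Lipschitz) in $M$ near $s\cdot\textrm{Id}$, $F(s\cdot\textrm{Id},0,\mu_{s,\lambda})=0$, and the partial differential $D_{(\xi,\mu)}F$ there is invertible; the implicit function theorem yields, for $M$ in a neighborhood, a unique small solution $(\xi_M,\mu_M)$ in $N^\perp$, hence by step (a) the unique minimizer modulo the symmetry group.

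For the evenness claim I would use \clm{symmetric_also_minimizer}: if $\psi$ is the (unique up to phase/translation) minimizer, then $\psi(\pm x,\pm y,\pm z)$ is also a minimizer, hence equals a translate-times-phase of $\psi$; after normalizing the translation (e.g.\ centering so that $\langle\psi,\partial_{x_i}Q\rangle=0$, or centering the ``center of mass'', which the reflections also fix) and the phase (taking $\psi>0$ as in \cth{anisotrop_existence_minimiseur}), uniqueness forces $\psi(\pm x,\pm y,\pm z)=\psi(x,y,z)$, i.e.\ evenness in each $e_i$-direction. Finally, the kernel statement: $\ker\mathfrak{L}_\psi\supseteq\vect\{\partial_{x}\psi,\partial_{y}\psi,\partial_{z}\psi\}$ is automatic from translation invariance (differentiating the Euler--Lagrange equation); for the reverse inclusion, the implicit function theorem construction gives that $\mathfrak{L}_\psi$ is invertible on $N_\psi^\perp$ for $M$ near $s\cdot\textrm{Id}$ — more carefully, one shows $\dim\ker\mathfrak{L}_\psi$ cannot jump up from $3$ under the small perturbation, either by a direct continuity/perturbation argument for the finite-dimensional near-zero spectral subspace of the self-adjoint $\mathfrak{L}_\psi$ (using that $\mathfrak{L}_\psi\to\mathfrak{L}_Q$ in an appropriate resolvent sense via \clm{majoration_(fg)star-(V-W)} and the $H^2$-closeness of $\psi$ to $Q$), or by noting that the non-degeneracy of the IFT Jacobian on $N_\psi^\perp$ is exactly the statement that $\mathfrak{L}_\psi$ has no kernel transverse to $N_\psi$.

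\medskip

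\noindent\textbf{Main obstacle.} I expect the delicate point to be step (c)'s non-degeneracy transfer: one must control $\ker\mathfrak{L}_\psi$ for the \emph{anisotropic} $\psi$, not merely $\ker\mathfrak{L}_Q$. Because $\mathfrak{L}_\psi$ is non-local and its eigenvalues need not be ordered in an obvious way, ruling out a ``new'' kernel element appearing near $0$ under perturbation requires a genuine eigenvalue-continuity argument (the near-zero spectral projection of $\mathfrak{L}_\psi$ depends continuously on $M$, has rank $3$ at $M=s\cdot\textrm{Id}$, and the three directions $\partial_{x_i}\psi$ already account for rank $3$), or equivalently a clean statement that the IFT can be run with the symmetry-broken orthogonality conditions so that invertibility of the restricted Jacobian is preserved. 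This must be done uniformly enough (Lipschitz-in-$M$ bounds from \clm{majoration_(fg)star-(V-W)}, and the uniform $H^1\to H^2$ elliptic estimates for the Euler--Lagrange equation) that a single $\epsilon$ works, which is where most of the technical care goes.
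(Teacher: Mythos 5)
Your proposal is correct and follows essentially the same route as the paper: perturb around the isotropic minimizer $Q$, use Lenzmann's non-degeneracy as the invertibility input for an implicit-function argument, use compactness of minimizing sequences to show any anisotropic minimizer lies in the IFT neighborhood, and deduce evenness and the kernel identity from uniqueness. The main differences are organizational. The paper eliminates $\mu$ from the IFT system entirely by exploiting the scaling identity $I(\lambda)=\lambda^3 I(1)$, which gives an explicit continuous formula $\mu_M=3\lambda^2 I_M(1)$; it then works with the single map $G(\psi,M)=\psi+(-\Delta/2+\mu_M)^{-1}u(\psi,M)$ on $(\ker\mathfrak{L}_Q)^\perp$, whereas you carry $\mu$ as an additional unknown with a mass constraint (a Lyapunov--Schmidt-type setup). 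Both work; the paper's is a bit leaner because of the explicit $\mu_M$. The paper also proves evenness already at the \emph{local} stage: the eight reflections of $\psi(M)$ all solve $G(\cdot,M)=0$, all lie in $(\ker\mathfrak{L}_Q)^\perp$ (because $Q$ is radial), and all lie in the $\delta$-ball around $Q$, so local IFT uniqueness forces them to coincide; your route through \clm{symmetric_also_minimizer} plus global uniqueness is equivalent but needs the global step first. Finally, for the kernel count, the obstacle you flag is resolved in the paper by a clean two-sided dimension argument rather than a spectral-projection continuity estimate: the IFT invertibility gives $\ker\mathfrak{L}_{\psi}\cap(\ker\mathfrak{L}_Q)^\perp=\{0\}$, hence $\dim\ker\mathfrak{L}_\psi\leq\dim\ker\mathfrak{L}_Q=3$; the evenness makes $\partial_x\psi,\partial_y\psi,\partial_z\psi$ pairwise orthogonal and nonzero, hence $\dim\ker\mathfrak{L}_\psi\geq3$, and equality follows. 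Your second alternative (invertibility of the restricted Jacobian) is the right instinct, but you phrase it with $N_\psi^\perp$ where the paper crucially keeps working with the fixed reference complement $(\ker\mathfrak{L}_Q)^\perp$ on which the IFT is actually run; this keeps the argument from becoming circular. (Also a minor sign typo: your map $F$ should carry $+\mu(Q+\xi)$, matching the Euler--Lagrange equation $-\frac12\Delta\psi+\mu\psi-(V\star|\psi|^2)\psi=0$.)
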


The proof of this theorem is based on a perturbative argument around the isotropic case, using the implicit functions theorem. The fundamental nondegeneracy result in the isotropic case, proved by Lenzmann in~\cite{Lenzmann-09}, is a key ingredient of our proof.

\begin{proof}[Proof of \cth{Weakly_anisotr_uniqueness_minim}] The proof of $ii$ being similar to the one of $i$, we will only give the latter. Let us fix $0 < s < 1$, define ${\mathscr D}:=\left\{0< M<1\; |\; M\textrm{ symmetric real}\right\}$ and denote by $Q$ the unique positive minimizer of the isotropic minimization problem $I(\lambda):=I_{s\cdot\textrm{Id}}(\lambda)$ for $V(x)=V_{s\cdot\textrm{Id}}(x)=(1-s)|x|^{-1}$, which is radial and solves~\eqref{Pekar_euler_lagrange_notnormalize}:
\begin{equation*}
-\frac1{2}\Delta Q +\mu Q - (|Q|^2\star V)Q=0,
\end{equation*} with $\norm{Q}^2_2=\lambda$. There $\lambda$ is fixed hence is $\mu:=\mu_{s\cdot\textrm{Id},\lambda}>0$ by~\clm{lemma_ppt_I_anisotro}.

\addtocontents{toc}{\SkipTocEntry} 
\subsection*{Step 1: Implicit function theorem and local uniqueness.}
By Proposition 5 in~\cite{Lenzmann-09}, we know that the linearized operator ${\mathfrak L}_Q$ given by
\begin{equation}\label{Linearized_L_radial}
{\mathfrak L}_Q\xi = -\frac12\Delta\xi + \mu\xi - \left(V \star |Q|^2\right)\xi - 2Q\left(V \star (Q\xi)\right),
\end{equation}
acting on $L^2(\R^3)$ with domain $H^2(\R^3)$, has the kernel
\begin{equation}\label{Kernel_L_radial}
\ker {\mathfrak L}_Q=\vect\left\{\partial_{x_1}Q,\partial_{x_2}Q,\partial_{x_3}Q\right\}.
\end{equation}
Let us define $u$ as
\begin{equation*}
\begin{aligned}
H^1(\R^3,\R) \times {\mathscr D} &\overset{u}{\to} L^2(\R^3,\R) \\
(\psi,M)&\mapsto - \left(|\psi|^2\star V\right)\psi
\end{aligned}
\end{equation*}
 and $G$ as
\begin{equation*}
\begin{aligned}
(\ker {\mathfrak L}_Q)^\perp \times {\mathscr D} &\overset{G}{\to} H^1(\R^3,\R) \\
(\psi,M)&\mapsto\psi + \left(-\Delta/2 + \mu_M\right)^{-1}u(\psi,M),
\end{aligned}
\end{equation*} where $(\ker {\mathfrak L}_Q)^\perp$ is the orthogonal of $\ker {\mathfrak L}_Q$ for the scalar product of $L^2(\R^3)$, which we endow with the norm of $H^1(\R^3)$, and $\mu_M:=\mu_{M,\lambda}=3\lambda^2 I_M(1)$. We emphasize here that we consider real valued functions, meaning that we are constructing a branch of real valued solutions. Moreover, $G(\psi,M)=0$ is equivalent to $-\frac1{2}\Delta \psi +\mu \psi - (|\psi|^2\star V)\psi=0$. Differentiating with respect to $x_i$, for $i=1,2,3$, we get ${\mathfrak L}_\psi\partial_{x_i}\psi=0$, for $i=1,2,3$, and thus $\vect\left\{\partial_x \psi, \partial_y \psi, \partial_z \psi\right\}\subset\ker {\mathfrak L}_\psi$.

By the Hardy-Littlewood-Sobolev and Sobolev inequalities, $u$ is well defined. Moreover, splitting $u(\psi,M) - u(\psi',M')$ into three pieces and using~\eqref{aniso_propri_V} together with the Hardy inequality, one obtains
\begin{align*}
\norm{u(\psi,M) - u(\psi',M')}_{L^2}
&\leq \norm{V\star|\psi|^2}_{L^\infty}\norm{\psi-\psi'}_{L^2} + \norm{(V-V')\star|\psi|^2}_{L^\infty}\norm{\psi'}_{L^2} \\
&\phantom{\leq} + \norm{V'\star\left(|\psi|^2-|\psi'|^2\right)}_{L^\infty}\norm{\psi-\psi'}_{L^2} \\
&\leq \norm{V\star|\psi|^2}_{L^\infty}\norm{\psi-\psi'}_{L^2} + \norm{(V-V')\star|\psi|^2}_{L^\infty}\norm{\psi'}_{L^2} \\
&\phantom{\leq} + 2\left(\norm{\psi}_{H^1}+\norm{\psi'}_{H^1}\right)\norm{\psi-\psi'}_{L^2}^2.
\end{align*}
Therefore, using~\clm{majoration_(fg)star-(V-W)}, $u$ is locally Lipschitz on $H^1(\R^3,\R) \times {\mathscr D}$. Then, since $\left(-\Delta/2 + \mu_M\right)^{-1}$ maps $L^2(\R^3)$ onto $H^2(\R^3)\subset H^1(\R^3)$, $G$ is also well defined. Moreover, since $\normt{(-\Delta+\nu)^{-1}}_{L^2\to H^2}\leq\max\{1,\nu^{-1}\}$ (for $\nu>0$) and $$\left(-\Delta/2 + a\right)^{-1} - \left(-\Delta/2 + b\right)^{-1}=(b-a)\left(-\Delta/2 + a\right)^{-1}\left(-\Delta/2 + b\right)^{-1},$$ for all $a,b>0$, we have
\begin{multline*}
	\norm{G(\psi,M) - G(\psi',M')}_{H^1}\\
	\begin{aligned}
   		&\leq \norm{\psi-\psi'}_{H^1} + \norm{\left(-\Delta/2 + \mu_M\right)^{-1}\left(u(\psi,M)-u(\psi',M')\right)}_{H^1} \\
		&\phantom{\leq} +\left|\mu_{M'}-\mu_M\right|\norm{\left(-\Delta/2 + \mu_M\right)^{-1}\left(-\Delta/2 + \mu_{M'}\right)^{-1}u(\psi',M')}_{H^1}\\
		&\lesssim \norm{\psi-\psi'}_{H^1} + \max\{2,(\mu_M)^{-1}\}\norm{u(\psi,M)-u(\psi',M')}_{L^2} \\
		&\phantom{\leq} +\max\{2,(\mu_M)^{-1}\}\max\{2,(\mu_{M'})^{-1}\}\norm{u(\psi',M')}_{L^2}\norm{M'-M},
   	\end{aligned}
\end{multline*} which proves that $G$ is also locally Lipschitz.

A simple computation shows that
\begin{equation}\label{diff_u}
\partial_\psi u(\psi,M)\xi=- \left(|\psi|^2\star V\right)\xi - 2\psi\left((\psi\xi)\star V\right),
\end{equation} acting on $\xi\in (\ker {\mathfrak L}_Q)^\perp$,  and that
\begin{equation}\label{dif_partialf_G}
\partial_\psi G(\psi,M)=1+\left(-\Delta/2 + \mu_M\right)^{-1}\partial_\psi u(\psi,M).
\end{equation} We claim $\partial_\psi G(\phi,M)$, defined from $(\ker {\mathfrak L}_Q)^\perp \times {\mathscr D}$ into ${\mathcal L}\left((\ker {\mathfrak L}_Q)^\perp,L^2(\R^3,\R)\right)$, to be continuous. Indeed, $$\norm{\partial_\psi u(\psi,M)\xi}_{L^2}\leq \norm{\xi}_{L^2}\norm{V\star |\psi|^2}_{L_\infty}+2\norm{\psi}_{L^2}\norm{(\psi\xi)\star V}_{L^\infty}\leq3\norm{\psi}_{H^1}\norm{\psi}_{L^2} \norm{\xi}_{L^2},$$ thus $\partial_\psi u(\psi,M)\xi\in L^2(\R^3,\R)$ for any $(\phi,M,\xi)\in (\ker {\mathfrak L}_Q)^\perp \times {\mathscr D}\times (\ker {\mathfrak L}_Q)^\perp$.

Splitting again the term into pieces and using~\eqref{aniso_propri_V}, for $\xi\in L^2(\R^3,\R)$, one obtains
\begin{multline*}
	\norm{\partial_\psi u(\psi,M)\xi- \partial_1 u(\psi',M')\xi}_{L^2}\\
	\begin{aligned}
   		&\leq \norm{V \star \left(|\psi|^2 - |\psi'|^2\right)}_{L^\infty}\norm{\xi}_{L^2} + \norm{\left(V - V'\right) \star |\psi'|^2}_{L^\infty}\norm{\xi}_{L^2} \\
		&\phantom{\leq} + 2\norm{V \star (\psi\xi)}_{L^\infty}\norm{\psi - \psi'}_{L^2} + 2\norm{V \star ((\psi - \psi')\xi)}_{L^\infty}\norm{\psi'}_{L^2} \\
		&\phantom{\leq} + 2\norm{(V - V') \star (\psi'\xi)}_{L^\infty}\norm{\psi'}_{L^2} \\
		&= O\left(\norm{(\psi,M) - (\psi',M')}_{H^1 \times {\mathscr D}}\right)\norm{\xi}_{L^2}.
   	\end{aligned}
\end{multline*} Then, since
\begin{multline*}
	\norm{\partial_\psi G(\psi,M)\xi - \partial_{\psi} G(\psi',M')\xi}_{H^1}\\
	\begin{aligned}
		&\lesssim \max\{2,(\mu_M)^{-1}\}\norm{\partial_\psi u(\psi,M)-\partial_\psi u(\psi',M')}_{L^2} \\
		&\phantom{\leq} +\max\{2,(\mu_M)^{-1}\}\max\{2,(\mu_{M'})^{-1}\}\norm{\partial_\psi u(\psi',M')}_{L^2}\norm{M'-M},
   	\end{aligned}
\end{multline*} we have $$\normt{\partial_\psi G(\psi,M) - \partial_{\psi} G(\psi',M')} \rightarrow 0, \textrm{ if } \norm{(\psi,M)-(\psi',M')}_{H^1 \times {\mathscr D}} \rightarrow 0.$$ This concludes the proof of the continuity of $\partial_\psi G(\phi,M)$ from $(\ker {\mathfrak L}_Q)^\perp \times {\mathscr D}$ into ${\mathcal L}\left((\ker {\mathfrak L}_Q)^\perp,H^1(\R^3,\R)\right)$.

We now apply the implicit function theorem to $G$. Indeed, by the definition of $(\ker {\mathfrak L}_Q)^\perp$, the restriction of ${\mathfrak L}_Q$ to $(\ker {\mathfrak L}_Q)^\perp$ has a trivial kernel. On the other hand, the operator $\left(-\Delta/2 + \mu_M\right)^{-1}\partial_\psi u(Q,s\cdot\textrm{Id})$ is compact on $L^2(\R^3)$ (see Appendix in~\cite{Ricaud-PhD}), therefore $-1$ does not belong to its spectrum. We deduce from this the existence of the inverse operator
\begin{equation}\label{inverse_diff_part_psi}
\left(\partial_\psi G(Q,s\cdot\textrm{Id})\right)^{-1} : \textrm{Ran}(G)\subset H^1(\R^3,\R)\to (\ker {\mathfrak L}_Q)^\perp.
\end{equation} Then, by the continuity of $G$ and $\partial_\psi G$, the existence of $\left(\partial_\psi G(Q_s,s\cdot\textrm{Id})\right)^{-1}$ and since $G(Q,s\cdot\textrm{Id})=0$, the inverse function theorem 1.2.1 of~\cite{Cha05} implies that there exist $\delta,\epsilon>0$ such that there exists a unique $\psi(M)\in (\ker {\mathfrak L}_Q)^\perp$ satisfying:
\begin{equation}\label{local_uniqueness_radial_implicit}
G(\psi(M),M)=0 \quad \textrm{for } \norm{M-s\cdot\textrm{Id}}\leq\epsilon \textrm{ and }\norm{\psi(M)-Q}_{H^1}\leq \delta.
\end{equation} Moreover, the map $M\mapsto \psi(M)$ is continuous.

Additionally, $\ker \partial_\psi G(\psi(M),M)=\{0\}$, i.e. $\ker_{|(\ker {\mathfrak L}_Q)^\perp} {\mathfrak L}_\psi=\{0\}$ which leads to $\dim \ker\left({\mathfrak L}_\psi\right)\leq3$ since $\dim \ker\left({\mathfrak L}_Q\right)=3$ by~\eqref{Kernel_L_radial}.

We now claim that $\psi(M)$ is symmetric with respect to the three eigenvectors of $M$, $\{e_i\}_{i=1,2,3}$, and consequently that, for $i=1,2,3$, $\partial_{x_i} \psi(M)$ is odd along $e_i$ and even along $e_j$ for $j\neq i$. Indeed $V$ being symmetric, the eight functions $(x,y,z)\mapsto\psi(M)(\pm x, \pm y, \pm z)$, which are in $(\ker {\mathfrak L}_Q)^\perp$, are zeros of $G(\cdot,M)$. If $\psi(M)$ were not symmetric with respect to each $e_i$, then at least two of the functions $(x,y,z)\mapsto\psi(M)(\pm x, \pm y, \pm z)$ would be distinct functions but both verifying~\eqref{local_uniqueness_radial_implicit}, since $Q$ is symmetric with respect to each $e_i$, which is impossible by local uniqueness.

Thus the $\partial_{x_i} \psi(M)$'s are orthogonal and we have $\dim \vect\left\{\partial_x \psi, \partial_y \psi, \partial_z \psi\right\}=3$. Since $\vect\left\{\partial_x \psi, \partial_y \psi, \partial_z \psi\right\}\subset\ker {\mathfrak L}_\psi$, this leads to $\dim \ker\left({\mathfrak L}_\psi\right)\geq3$. Which proves that $\ker {\mathfrak L}_\psi=\vect\left\{\partial_x \psi, \partial_y \psi, \partial_z \psi\right\}$.

Let us emphasize that, at this point, we do not know the masses $\norm{\psi(M)}_2^2$ of those $\psi(M)$. Note also that we could prove here that $|\psi|>0$, since $-\mu_M$ stays the first eigenvalue by continuity and with a Perron--Frobenius type argument, but we do not give the details here since this fact will be a consequence of Step 2.

\addtocontents{toc}{\SkipTocEntry} 
\subsection*{Step 2: Global uniqueness.}
Let $(M_n)_n$ be a sequence of matrices in ${\mathscr D}$ such that $M_n\underset{n\to\infty}{\longrightarrow} s\cdot\textrm{Id}$ and let $(\psi_{M_n})_n$ be a sequence of minimizers of $\left(I_{M_n}(\lambda)\right)_n$ which we can suppose, up to phase, strictly positive by~\cth{anisotrop_existence_minimiseur} and, up to a space translation (for each $M_n$), in $(\ker {\mathfrak L}_Q)^\perp$. Indeed, for any $\psi\in H^1(\R^3)$, let us define the continuous function $f(\tau):=\int\nabla Q(\cdot)\psi(\cdot-\tau)$ which is bounded, by the Cauchy-Schwarz inequality. Then $\int{f(\tau)\dd\tau}=\int{\psi(x)\int{\nabla Q(x-\tau)\dd \tau}\dd x}=0$ since $\int\nabla Q=0$. Thus, $f$ being continuous, there exists $\tau$ such that $f(\tau)=\int{\psi(x-\tau)\nabla Q(x) \dd x}=0$, i.e. $\psi(\cdot-\tau)\in (\ker {\mathfrak L}_Q)^\perp$ since $\ker {\mathfrak L}_Q=\vect\left\{\partial_{x_1}Q,\partial_{x_2}Q,\partial_{x_3}Q\right\}$.

By continuity of $(I_{M_n}(\lambda))_n$, given by~\clm{Conv_minimum_I_with_lambda_M}, $(\psi_{M_n})_n$ is a minimizing sequence of $I_{s\cdot\textrm{Id}}(\lambda)$. So, by~\cth{anisotrop_existence_minimiseur}, $(\psi_{M_n})_n$ strongly converges in $H^1(\R^3)$ to a minimizer of $I_{s\cdot\textrm{Id}}(\lambda)$, up to extraction of a subsequence. But, since the $\psi_{M_n}$ are positive and in $(\ker {\mathfrak L}_Q)^\perp$, they converge to a positive minimizer of $I_{s\cdot\textrm{Id}}(\lambda)$ in $(\ker {\mathfrak L}_Q)^\perp$ which is $Q$.

So, there exists $\epsilon'\leq\epsilon$ such that if $\norm{M-s\cdot\textrm{Id}}_\infty\leq\epsilon'$, then each $\psi_{M_n}$ verifies $G(\psi_{M_n},M_n)=0$, by definition of $\left(\psi_{M_n}\right)_n$, and $\norm{\psi_{M_n}-Q}_{H^1}\leq \delta$ i.e. verifies~\eqref{local_uniqueness_radial_implicit}. So the $\psi_{M_n}$ are unique (up to phases and spaces translation). Which concludes the proof of~\cth{Weakly_anisotr_uniqueness_minim}.

Moreover, we now know that, in fact, the masses $\norm{\psi(M_n)}_2^2$ of the unique $\psi(M_n)$ found in the local result were in fact all equal to $\lambda$. We also proved incidentally that our choice of translation to obtain $(\psi_{M_n})_n\subset (\ker {\mathfrak L}_Q)^\perp$ was, in fact, unique.
\end{proof}

\section{Rearrangements and symmetries}\label{section_rearr_prop}
The goal of this section is to prove that minimizers are symmetric and strictly decreasing in the directions along which $V$ is decreasing,  without assuming that $V$ is close to the isotropic case as we did in the previous section. More precisely, we will consider here the general anisotropic case $m_3\leq m_2\leq m_1$ (resp. $s_3\leq s_2\leq s_1$) and, in particular, the two cylindrical cases $m_3= m_2< m_1$ (resp. $s_3= s_2< s_1$) and $m_3< m_2= m_1$ (resp. $s_3< s_2= s_1$). Our main result in this section is~\cth{sym_minimiseurs_sous_condition} below. As a preparation, we first give conditions for $V$ to be its own Steiner symmetrization.

As in~\cite{Capriani-12}, for $f$ defined on $\R^n=\vect\{e_1,\dots,e_n\}$, we denote:
\begin{itemize}
\item by $f^*$ its \emph{Schwarz symmetrization}, for $n\geq1$;
\item by $\st_{i_1,\dots,i_k}(f)$ its \emph{Steiner symmetrization (in codimension $k$) with respect to the subspace spanned by $e_{i_1},\dots,e_{i_k}$},  for $n\geq2$ and $1\leq k<n$.
\end{itemize} Let us remark that the Steiner symmetrization $\st_{i_1,\dots,i_k}(f)$ of $f$ is the Schwarz symmetrization of the function $(x_{i_1}, \cdots, x_{i_k})\mapsto f(x_1, \cdots, x_n)$.

\begin{prop}[Criterion for $V$ to be its own Steiner symmetrization]\label{cond_sym_V_epsiM}\hspace{0.1cm}

\begin{enumerate}
	\item Let $V$ be given by~\eqref{V_final_definition}, with $0< m_3\leq m_2\leq m_1<1$. Then $V=\emph{\st}_1(V)$ (thus $V$ is $e_1$-symmetric strictly decreasing). Moreover, for $k\in\{2,3\}$, $V=\emph{\st}_k(V)$ (thus $V$ is $e_k$-symmetric strictly decreasing) if and only if
	\refstepcounter{equation}\label{cond_sym_V_M_default}
			\begin{equation*}\setcounter{equation}{1}\tag{\theequation$_k$}\label{cond_sym_V_M_k}
				m_1^3 \leq m_k^2.
			\end{equation*} Moreover,
	\begin{enumerate}[label=\roman*.]
		\item  if $m_3< m_2= m_1$, then $V=\emph{\st}_{1,2}(V)$. Thus $V$ is $(e_1,e_2)$-radial strictly decreasing.
		\item  if $m_3= m_2<m_1$, then $V=\emph{\st}_{2,3}(V)$ --- thus $V$ is $(e_2,e_3)$-radial strictly decreasing --- if and only if
			\begin{equation}\label{cond_sym_V_M_2_3}
				m_1^3\leq m_2^2 = m_3^2;
			\end{equation}
	\end{enumerate}
	\item Let $V$ be given by~\eqref{V_simplified_final_definition}, with $0\leq s_3\leq s_2\leq s_1<1$. Then $V=\emph{\st}_k(V)$ (thus $V$ is $e_k$-symmetric strictly decreasing) for $k=1,2,3$. Moreover,
	\begin{enumerate}[label=\roman*.]
		\item if $s_3< s_2=s_1$, then $V=\emph{\st}_{1,2}(V)$. Thus $V$ is $(e_1,e_2)$-radial strictly decreasing;
		\item if $s_3=s_2<s_1$, then $V=\emph{\st}_{2,3}(V)$. Thus $V$ is $(e_2,e_3)$-radial strictly decreasing.
	\end{enumerate}
\end{enumerate}
\end{prop}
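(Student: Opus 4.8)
The plan is to reduce every assertion to a single one-variable monotonicity statement. By the remark preceding~\cpr{cond_sym_V_epsiM}, $V=\st_{i_1,\dots,i_k}(V)$ means that for (a.e.) fixed values of the coordinates other than $x_{i_1},\dots,x_{i_k}$, the function $(x_{i_1},\dots,x_{i_k})\mapsto V(x)$ is its own Schwarz symmetrization, i.e.\ is radially non-increasing. For the codimension-one claims this simply says that $V$ is even in $x_k$ (immediate from $M=\textrm{diag}(m_1,m_2,m_3)$) and non-increasing in $t:=x_k^2$ once the remaining coordinates are frozen; for the cylindrical claims, the equality of two eigenvalues makes $V$ depend on the corresponding pair of variables only through the squared radius $t$ (e.g.\ $t=x_1^2+x_2^2$ when $m_1=m_2$), and a function of that pair which is already radial equals its Schwarz symmetrization iff it is non-increasing in $t$. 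In all cases one is left to examine
$$g(t)=(t+a)^{-1/2}-(\beta t+b)^{-1/2},\qquad t\ge0,$$
where $\beta>1$ is the inverse square of the eigenvalue attached to the symmetrized direction(s) and $a:=\sum x_j^2$, $b:=\sum m_j^{-2}x_j^2\ge a$ are formed from the frozen coordinates.

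The key point is the elementary fact that $g'(t)=\tfrac12\bigl(\beta(\beta t+b)^{-3/2}-(t+a)^{-3/2}\bigr)$ has the same sign as the affine function $t\mapsto(\beta^{2/3}-\beta)\,t+(\beta^{2/3}a-b)$ (raise the two positive quantities to the power $2/3$ and clear denominators). Since $\beta>1$ gives $\beta^{2/3}-\beta<0$, this affine function is strictly decreasing, so $g'\le0$ on $[0,\infty)$ if and only if $\beta^{2/3}a\le b$, and in that case $g$ is in fact strictly decreasing on $(0,\infty)$ (also when $a=b=0$, where $g(t)=(1-\beta^{-1/2})t^{-1/2}$). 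Conversely, if $\beta^{2/3}a>b$ then $g'(0)>0$, so $g$ increases near $0$ and $V$ is not its own Steiner symmetrization along that direction.

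It remains to compare $b/a$ with $\beta^{2/3}$ over the admissible slices. As $b/a$ is a convex combination of the numbers $m_j^{-2}$, $j$ ranging over the frozen indices, its infimum is $m_p^{-2}$ with $m_p:=\max_j m_j$ over those indices, attained on the slice supported by the $e_p$-axis; hence $V=\st_k(V)$ iff $m_p^{-2}\ge\beta^{2/3}=m_k^{-4/3}$, i.e.\ $m_p^3\le m_k^2$. For $k=1$ one has $m_p=m_2$ and $m_2^2\le m_1^2<m_1^{4/3}$ (using $m_1<1$), so the condition holds automatically; for $k\in\{2,3\}$ one has $m_p=m_1$ and $m_1^3\le m_k^2$ is exactly~\eqref{cond_sym_V_M_k}, the converse direction being produced by the slice $\{x_1=1,\ x_j=0\}$ (remaining transverse index $j$). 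The cylindrical cases are the same bookkeeping: when $m_1=m_2$, the condition is $m_3^2\le m_1^{4/3}$, true since $m_3\le m_1<1$; when $m_2=m_3$, it is $m_1^2\le m_2^{4/3}$, i.e.\ \eqref{cond_sym_V_M_2_3}, again with the converse coming from the slice $x_1=1$. Finally, for the simplified potential~\eqref{V_simplified_final_definition} one has $V(x)=\bigl(\alpha_1^2x_1^2+\alpha_2^2x_2^2+\alpha_3^2x_3^2\bigr)^{-1/2}$ with $\alpha_i:=(1-s_i)^{-1}\ge1$; as a function of any single $x_k^2$ it equals $(\alpha_k^2t+c)^{-1/2}$ with $c\ge0$ constant, which is even and strictly decreasing, so $V=\st_k(V)$ for every $k$ unconditionally, and the equality of two of the $s_i$ makes $V$ radially strictly decreasing in the corresponding coordinate plane, giving $(i)$ and $(ii)$.

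The part I expect to require the most care is not any single estimate but the reduction and slice-selection: verifying that ``$V=\st_{i_1,\dots,i_k}(V)$'' is genuinely equivalent to the a.e.-slice monotonicity of $g$ (using that $g$ is, on each affine line, continuous, positive and vanishing at infinity), and correctly pinning down which eigenvalue governs $\inf(b/a)$ so that the ``only if'' direction is genuinely witnessed by an explicit admissible slice. Everything else is the one-variable calculus of the second paragraph.
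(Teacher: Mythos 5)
Your proof is correct and follows essentially the same route as the paper's: differentiate $V$ in the symmetrized direction and reduce the sign condition on that derivative to a polynomial inequality in the eigenvalues. The paper packages this as nonnegativity of a diagonal quadratic form in $(x_1^2,x_2^2,x_3^2)$, whereas you parametrize by $t=x_k^2$, observe that the sign of $g'$ is governed by an affine function of $t$ with negative slope, and then optimize over the transverse slice via the convex-combination remark; this is the same computation with a slightly different bookkeeping, landing on the identical condition $m_1^3\le m_k^2$ (vacuously true for $k=1$ and in the $(e_1,e_2)$-cylindrical case).
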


\begin{proof}[Proof of~\cpr{cond_sym_V_epsiM}]Suppose $V$ is given by~\eqref{V_final_definition}, then it obviously has the claimed properties of symmetry and, moreover, the cylindrical ones in cases \emph{i.} and \emph{ii.}. So the proof that $V$ is equal to its symmetrization is reduced to the proof of decreasing properties.

For any $x\neq0$ and $k=1,2,3$, we have
\begin{equation}\label{cond_sym_V_epsiM_computation}
\partial_{|x_k|}V(x_1,x_2,x_3)=\frac{ m_k^{-2}|x_k|}{( m_1^{-2}x_1^2+ m_2^{-2}x_2^2+ m_3^{-2}x_3^2)^{3/2}}-\frac{|x_k|}{(x_1^2+x_2^2+x_3^2)^{3/2}}.
\end{equation} Thus, $V=\st_k(V)$ and $V$ is radially decreasing with respect to $x_k$ if and only if $$
0 \leq ( m_1^{-2}- m_k^{-4/3})x_1^2+( m_2^{-2}- m_k^{-4/3})x_2^2+( m_3^{-2}- m_k^{-4/3})x_3^2 \quad\textrm{ a.e. on } \R^3$$ which is equivalent to $m_1 \leq m_k^{2/3}$. Consequently, $V=\st_1(V)$ always holds.

If $m_3= m_2<m_1$, denoting $u=|(x_2,x_3)|$, and computing $\partial_uV$, we obtain that $V=\st_{2,3}(V)$ if and only if $m_1\leq m_2^{2/3} = m_3^{2/3}$, in which case $V$ is $(e_2,e_3)$-radial decreasing.

If $m_3< m_2= m_1$, denoting $u=|(x_1,x_2)|$, and computing $\partial_uV$, we obtain that $V=\st_{1,2}(V)$ if and only if $m_3 \leq  m_2^{2/3}=m_1^{2/3}$, which always holds thus $V$ is $(e_1,e_2)$-radial decreasing.

We now need to prove the strict monotonicity. Thanks to \eqref{cond_sym_V_epsiM_computation}, $\nabla V=0$ holds only on measure-zero sets (note that we use the computation but do not use any condition on $m_1$, $m_2$ and $m_3$ except that they are strictly less than $1$). Thus $\left|\{V=t\} \right|~=~0$ for any $t \in \R_+$ and then $\left|\{V^*=t\} \right|=0$ for any $t \in \R_+$. Hence $V^* \textrm{ is radially strictly decreasing}$. Same results of strict decreasing hold for Steiner symmetrizations since, as noted before, a Steiner symmetrization is a Schwarz symmetrization on a subspace.

The proof for $V$ given by~\eqref{V_simplified_final_definition} is very similar and easier.
\end{proof}

We now state our main result about the symmetries of minimizers.
\begin{thm}[Symmetries of minimizers]\label{sym_minimiseurs_sous_condition}Let $\lambda>0$.

\begin{enumerate}	
	\item Let $V$ be given by~\eqref{V_final_definition} and $\psi_M\geq0$ be a minimizer of $I_M(\lambda)$. Then, up to a space translation, $\psi_M$ is $e_1$-symmetric strictly decreasing. If $m_1^3 \leq m_2^2$ as in $(\ref{cond_sym_V_M_default}_2)$, then $\psi_M$ is also $e_2$-symmetric strictly decreasing. Finally, if $m_1^3 \leq m_3^2$ as in $(\ref{cond_sym_V_M_default}_3)$, then $\psi_M$ is additionally $e_3$-symmetric strictly decreasing. Moreover,
		\begin{enumerate}[label=\roman*.]
			\item if $m_3<m_2=m_1$, then $\psi_M$ is \emph{cylindrical strictly decreasing} with axis~$e_3$. Meaning that $\psi_M$ is $(e_1,e_2)$-radial strictly decreasing. If additionally $(\ref{cond_sym_V_M_default}_3)$ holds, then $\psi_M$ is \emph{cylindrical-even strictly decreasing} with axis~$e_3$. This means that $\psi_M$ is \emph{cylindrical strictly decreasing} with axis~$e_3$ and $e_3$-symmetric strictly decreasing;
			\item if $m_3=m_2<m_1$ and $m_1^3\leq m_2^2 = m_3^2$ as in~\eqref{cond_sym_V_M_2_3}, then $\psi_M$ is \emph{cylindrical-even strictly decreasing} with axis~$e_1$.
		\end{enumerate}
	\item Let $V$ be given by~\eqref{V_simplified_final_definition} and $\psi_S\geq0$ be a minimizer of $I_S(\lambda)$. Then, up to a space translation, $\psi_S$ is $e_k$-symmetric strictly decreasing for $k=1,2,3$. Moreover,
		\begin{enumerate}[label=\roman*.]
			\item if $s_3< s_2=s_1$, then $\psi_S$ is \emph{cylindrical-even strictly decreasing} with axis~$e_3$;
			\item If $s_3=s_2<s_1$, then $\psi_S$ is \emph{cylindrical-even strictly decreasing} with axis~$e_1$.
		\end{enumerate}
\end{enumerate}
\end{thm}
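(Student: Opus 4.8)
The plan is to follow the classical rearrangement argument that Lieb used in the isotropic case~\cite{Lieb-77}, replacing the Schwarz symmetrization by the appropriate Steiner symmetrizations and using as the crucial new input the \emph{strict} monotonicity properties of $V$ recorded in~\cpr{cond_sym_V_epsiM}. Throughout, $\psi=\psi_M\geq0$ is a minimizer, hence by~\cth{anisotrop_existence_minimiseur} a positive $H^2$-solution of~\eqref{Pekar_euler_lagrange_notnormalize} with $-\mu<0$ the (simple) ground-state eigenvalue of $H_\psi$; by~\clm{majoration_(fg)star-(V-W)} and elliptic regularity $\psi$ is continuous, and $\psi>0$ everywhere.

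Fix an index $k$ for which the relevant hypothesis of~\cpr{cond_sym_V_epsiM} holds, so that $V=\st_k(V)$ with $V$ being $e_k$-symmetric strictly decreasing, and apply $\st_k$ to $\psi$. The P\'olya--Szeg\H{o} inequality gives $\norm{\nabla\st_k(\psi)}_2^2\leq\norm{\nabla\psi}_2^2$, while writing the interaction energy as $\iint_{\R^3\times\R^3}|\psi(x)|^2\,V(x-y)\,|\psi(y)|^2\dd x\dd y$ and applying the Riesz rearrangement inequality under Steiner symmetrization in the single coordinate $x_k$ (together with $V=\st_k(V)$) yields $\pscal{V\star|\st_k(\psi)|^2,\,|\st_k(\psi)|^2}\geq\pscal{V\star|\psi|^2,\,|\psi|^2}$; the kernel $V\lesssim|\cdot|^{-1}$ being only locally integrable is handled by a routine truncation. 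Since $\norm{\st_k(\psi)}_2=\norm{\psi}_2=\sqrt\lambda$, it follows that ${\mathscr E}(\st_k(\psi))\leq{\mathscr E}(\psi)=I(\lambda)$, so $\st_k(\psi)$ is again a minimizer, and therefore \emph{both} of the previous inequalities are in fact equalities.

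Next I invoke the characterization of the equality case in the Riesz inequality with a strictly symmetric decreasing kernel (as in~\cite{Lieb-77}): equality forces $|\psi(\cdot)|^2$, viewed as a function of $x_k$ for a.e.\ fixed values of the remaining coordinates, to coincide with a translate of its one-dimensional symmetric decreasing rearrangement; by continuity of $\psi$ this translation $a_k\in\R$ is independent of the transverse variables, so $\psi(\cdot-a_ke_k)$ is $e_k$-symmetric and decreasing in $x_k$. Strict decrease is then obtained from the fixed-point representation $\psi=(-\Delta/2+\mu)^{-1}\big((V\star|\psi|^2)\psi\big)$: the resolvent kernel (a Yukawa potential) is strictly radially decreasing, and the convolution of a strictly symmetric decreasing function with the nonnegative, not-identically-zero, $e_k$-symmetric decreasing function $(V\star|\psi|^2)\psi$ is strictly $e_k$-symmetric decreasing. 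Replacing $\psi$ by the translate and iterating over all admissible indices $k$ gives the symmetry and monotonicity assertions: a Steiner symmetrization in $x_j$ preserves the symmetry already acquired in $x_k$ for $k\neq j$, and the corresponding translation is again a constant vector, so a single global space translation suffices. The two cylindrical statements~\emph{i.} and~\emph{ii.} are proved in exactly the same way, now using the codimension-two Steiner symmetrization $\st_{1,2}$ or $\st_{2,3}$ (that is, the Schwarz symmetrization in the corresponding coordinate plane), the two-dimensional P\'olya--Szeg\H{o} and Riesz inequalities, and the plane-radial strict monotonicity of $V$ from~\cpr{cond_sym_V_epsiM}; since $\st_{1,2}$ preserves $e_3$-symmetry, the extra $e_3$-conclusion in case~\emph{i.} is then obtained by one further application of $\st_3$. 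Part~(2) is identical and simpler, since by~\cpr{cond_sym_V_epsiM} the potential~\eqref{V_simplified_final_definition} is $e_k$-symmetric strictly decreasing for every $k=1,2,3$ with no additional constraint.

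The main obstacle is the equality analysis that upgrades ``$\st_k(\psi)$ is also a minimizer'' into ``$\psi$ is a translate of $\st_k(\psi)$, strictly decreasing'': one must check that equality in the Riesz inequality genuinely pins down $|\psi|^2$ up to translation, that this translation does not depend on the transverse variables, and separately produce the \emph{strict} monotonicity, for which the strictness of $V$ alone is not enough and the regularizing resolvent representation above is used. A secondary, purely bookkeeping, difficulty is to verify that the translations produced by the successive one- and two-codimensional symmetrizations are mutually compatible and can be collected into one global translation; this is where the fact that symmetrization in one group of variables preserves symmetry already gained in a complementary group is repeatedly used.
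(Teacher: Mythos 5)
Your outline follows the paper's strategy closely: from $\norm{\st_k(\psi)}_2=\sqrt\lambda$, P\'olya--Szeg\H{o}, Riesz, and $V=\st_k(V)$ you deduce that $\st_k(\psi)$ is again a minimizer and that both rearrangement inequalities are saturated (this is exactly \clm{Corollaire_nrj_psi_egale_nrj_symmetrisee}), and you then want to upgrade the Riesz equality to ``$\psi$ is a translate of $\st_k(\psi)$.'' The gap is in how you carry out this last step.

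After fibering the Riesz equality in the $x_k$-variable, Lieb's one-dimensional equality lemma produces a translation $a_k(y,y')$ depending \emph{a priori} on the pair of transverse variables $(y,y')$. To turn this into a single constant $a_k$, the paper (\cpr{equality_Riesz}) uses that the fiber $\psi^{\st_k}(y,\cdot)$ is \emph{strictly} decreasing for a.e.\ $y$: this makes the translation satisfying $\psi(y,\cdot)=\psi^{\st_k}(y,\cdot-a)$ unique, so $a_k(y,y')$ depends on $y$ alone, by symmetry on $y'$ alone, hence is constant. Your appeal to ``continuity of $\psi$'' cannot replace this. Continuity is perfectly compatible with plateaus in $\psi^{\st_k}(y,\cdot)$, in which case $a_k(y)$ is not even uniquely determined; and even granting uniqueness, continuity would at best make $a_k(y)$ continuous, not constant. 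The paper obtains the needed strict monotonicity by first proving that any $H^2$-solution of~\eqref{Pekar_euler_lagrange_notnormalize} is real analytic (\clm{sol_analytique}), so all its level sets -- including the fiberwise ones -- are Lebesgue-null, which yields \clm{psi_symetrisation_strict_decr}. Your a posteriori Yukawa-resolvent bootstrap does produce strict monotonicity of $\psi$, but only \emph{after} one already knows $\psi$ equals a translate of $\st_k(\psi)$; it comes too late in the logical order to feed into the equality analysis where the strictness is actually consumed. Supply the analyticity (or some other proof that $|\{\psi^{\st_k}(y,\cdot)=t\}|=0$ for a.e.\ $y,t$) at that point and the rest of your argument, including the bookkeeping of successive Steiner symmetrizations into a single global translation, aligns with the paper's \cpr{equality_Riesz}(3).
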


To prove the symmetry properties of the minimizers, we need symmetrizations of a minimizer to be minimizers, which is proved in the following lemma.
\begin{lemme}\label{Corollaire_nrj_psi_egale_nrj_symmetrisee}
Suppose that $V$, given by~\eqref{V_final_definition} or by~\eqref{V_simplified_final_definition}, verifies one of the \emph{symmetric strictly decreasing} property (resp. \emph{radial strictly decreasing} property) described in~\cpr{cond_sym_V_epsiM}, and define $\psi^\emph{\st}$ the symmetrization of $\psi$ corresponding to this symmetric strictly decreasing property  of $V$.

If $\psi$ is a minimizer then $\psi^\emph{\st}$ too. Moreover the following equalities hold
\begin{enumerate}[label=\roman*.]
	\item $\norm{\nabla\psi}_2^2=\norm{\nabla\psi^\emph{\st}}_2^2$,
	\item $\pscal{|\psi|^2,|\psi|^2\star V}_2=\pscal{|\psi^\emph{\st}|^2,|\psi^\emph{\st}|^2\star V}_2$.
\end{enumerate}
\end{lemme}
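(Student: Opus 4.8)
The plan is to combine two classical rearrangement inequalities with the variational definition of a minimizer, and then to promote the resulting energy inequality to the two claimed equalities by an equality-forcing argument. Since the interaction term in ${\mathscr E}$ involves only $|\psi|$, and since $\norm{\nabla|\psi|}_2\le\norm{\nabla\psi}_2$ with equality whenever $\psi$ is a minimizer (because $|\psi|$ is then a minimizer with the same energy, by \cth{anisotrop_existence_minimiseur}), we may and do assume $\psi\ge0$. We write $\st$ for the Steiner symmetrization --- in codimension $1$ or $2$ according to which case of \cpr{cond_sym_V_epsiM} we are in --- for which $V=\st(V)$; recall that $V\ge0$ by~\eqref{aniso_propri_V_large} and that symmetrization commutes with post-composition by an increasing function, so that $(\psi^2)^{\st}=(\psi^{\st})^2=|\psi^{\st}|^2$.

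First I would check that $\psi^{\st}$ is admissible for $I(\lambda)$: Steiner symmetrization is equimeasurable, so $\norm{\psi^{\st}}_2^2=\norm{\psi}_2^2=\lambda$, and $\psi\in H^1(\R^3)$ forces $\psi^{\st}\in H^1(\R^3)$ together with the P\'olya--Szeg\H{o} inequality
\begin{equation}\label{eq:PS_in_Lemma}
\norm{\nabla\psi^{\st}}_2^2\le\norm{\nabla\psi}_2^2.
\end{equation}
For a Steiner symmetrization this follows, as usual, from the one- or two-dimensional P\'olya--Szeg\H{o} inequality applied in the symmetrized variables for each fixed value of the transverse variables, together with the control of the derivatives in the transverse variables coming from the regularity of the distribution function (see \cite{Capriani-12}). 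For the interaction term, Riesz's rearrangement inequality applied in the symmetrized subspace, fibrewise in the transverse variables, together with $V=\st(V)$ and $(\psi^2)^{\st}=|\psi^{\st}|^2$, gives
\begin{equation}\label{eq:Riesz_in_Lemma}
\pscal{|\psi|^2,|\psi|^2\star V}\le\pscal{|\psi^{\st}|^2,|\psi^{\st}|^2\star V}.
\end{equation}
Both sides of~\eqref{eq:Riesz_in_Lemma} are finite --- the left one because ${\mathscr E}(\psi)>-\infty$, the right one because $|\psi^{\st}|^2\star V\in L^\infty(\R^3)$ by~\clm{majoration_(fg)star-(V-W)} applied with $f=g=\psi^{\st}\in H^1(\R^3)$ --- and in any case Riesz's inequality holds with values in $[0,+\infty]$, so no integrability assumption on $V$ is needed.

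Combining~\eqref{eq:PS_in_Lemma} and~\eqref{eq:Riesz_in_Lemma} yields ${\mathscr E}(\psi^{\st})\le{\mathscr E}(\psi)$. Since $\psi^{\st}$ is admissible and $\psi$ is a minimizer, ${\mathscr E}(\psi^{\st})\ge I(\lambda)={\mathscr E}(\psi)$, so ${\mathscr E}(\psi^{\st})={\mathscr E}(\psi)=I(\lambda)$ and $\psi^{\st}$ is a minimizer. Finally, from
\begin{equation*}
0={\mathscr E}(\psi^{\st})-{\mathscr E}(\psi)=\tfrac12\bigl(\norm{\nabla\psi^{\st}}_2^2-\norm{\nabla\psi}_2^2\bigr)-\tfrac12\bigl(\pscal{|\psi^{\st}|^2,|\psi^{\st}|^2\star V}-\pscal{|\psi|^2,|\psi|^2\star V}\bigr),
\end{equation*}
the first bracket being $\le0$ by~\eqref{eq:PS_in_Lemma} and the second being $\ge0$ by~\eqref{eq:Riesz_in_Lemma}, both must vanish, which is precisely the content of \emph{i.} and \emph{ii.}. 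No single step is a real obstacle here; the only point deserving care is to invoke the two rearrangement inequalities in the correct generality --- Steiner symmetrization in codimension two for the cylindrical cases, and a potential $V$ lying merely in $L^2(\R^3)+L^4(\R^3)$ --- and, as indicated above, neither of these causes any difficulty.
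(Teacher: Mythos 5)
Your proof is correct and proceeds as the paper does through the energy comparison ${\mathscr E}(\psi^\st)\le{\mathscr E}(\psi)$ via P\'olya--Szeg\H{o} and Riesz, and the reverse inequality from admissibility of $\psi^\st$, hence $\psi^\st$ is a minimizer. Where you genuinely depart from the paper is in extracting the two equalities \emph{i.}\ and \emph{ii.}\ from ${\mathscr E}(\psi^\st)={\mathscr E}(\psi)$: you observe that $0={\mathscr E}(\psi^\st)-{\mathscr E}(\psi)$ is the difference of a nonpositive bracket (kinetic) and a nonnegative bracket (interaction), so each must vanish separately. The paper instead invokes the scaling/virial relation~\eqref{anisotrop_existence_minimiseur_norm_equal} from \cth{anisotrop_existence_minimiseur}, which ties $\norm{\nabla\psi}_2^2$ and $\pscal{V\star|\psi|^2,|\psi|^2}$ to the Lagrange multiplier $\mu=-3\lambda^2 I(1)$, common to all minimizers at mass $\lambda$. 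Your route is more elementary and self-contained (it needs only the two rearrangement inequalities you already used), whereas the paper's relies on an extra structural identity; both are correct, and yours arguably makes clearer that the equality in \emph{i.}\ and \emph{ii.}\ is forced precisely because each rearrangement inequality is saturated.
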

\begin{proof}[Proof of~\clm{Corollaire_nrj_psi_egale_nrj_symmetrisee}]On one hand, since the symmetrization conserves the $L^2$ norm and $\psi$ is a minimizer, we have ${\mathscr E}(\psi)\leq{\mathscr E}(\psi^\st)$. On the other hand, given the Riesz inequality (see~\cite{Burchard-96}), the fact that the kinetic energy is decreasing under symmetrizations (see Theorem 2.1 in~\cite{Capriani-12}) and since $V=V^\st$ by~\cpr{cond_sym_V_epsiM}, we have ${\mathscr E}(\psi)\geq{\mathscr E}(\psi^\st)$. So finally $I(\lambda)={\mathscr E}(\psi)={\mathscr E}(\psi^\st)$. Consequently, given~\eqref{anisotrop_existence_minimiseur_norm_equal} in~\cth{anisotrop_existence_minimiseur} and that minimizers $\psi$ and $\psi^\st$ have the same Lagrange multiplier $\mu=-3\lambda^2 I(1)$, we immediately obtain both equalities.
\end{proof}

Using the analycity of minimizers (\clm{sol_analytique}) we can now prove the strict monotonicity of Steiner symmetrizations of minimizers.
\begin{lemme}\label{psi_symetrisation_strict_decr}
Let $\lambda>0$ and $\psi$ be a real minimizer of $I(\lambda)$ for $V$ given by~\eqref{V_final_definition} or by~\eqref{V_simplified_final_definition}, then $\psi^*$ is radially strictly decreasing. Moreover, for any permutation $\{i,j,k\}$ of $\{1,2,3\}$, we have
\begin{enumerate}[label=\roman*.]
	\item for any $x \in \vect\{e_j,e_k\}$, $\emph{\st}_i(\psi)(x,\cdot)$ is radially strictly decreasing,
	\item for any
 $x \in \vect\{e_i\}$, $\emph{\st}_{j,k}(\psi)(x,\cdot)$ is radially strictly decreasing.
\end{enumerate}
\end{lemme}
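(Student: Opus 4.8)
The plan is to combine three ingredients: (a) the fact, from \clm{Corollaire_nrj_psi_egale_nrj_symmetrisee}, that each relevant symmetrization $\psi^\st$ of a minimizer $\psi$ is again a minimizer (and hence, by \cth{anisotrop_existence_minimiseur}, a strictly positive $H^2$-solution of~\eqref{Pekar_euler_lagrange_notnormalize} that is real-analytic by \clm{sol_analytique}); (b) the equality cases in the rearrangement inequalities used to prove \clm{Corollaire_nrj_psi_egale_nrj_symmetrisee}, namely the Riesz inequality (Burchard~\cite{Burchard-96}) and the strict monotonicity of the kinetic energy under symmetrization (Capriani~\cite{Capriani-12}); and (c) the analyticity, which upgrades ``non-strictly decreasing on a plateau of positive measure'' to ``constant on a nonempty open set'', a contradiction for a nonzero analytic function that decays at infinity. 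The three statements — $\psi^*$ radially strictly decreasing, $\st_i(\psi)(x,\cdot)$ radially strictly decreasing, $\st_{j,k}(\psi)(x,\cdot)$ radially strictly decreasing — all have the same structure, so I would prove the Schwarz case in detail and indicate the (identical) adaptation for Steiner symmetrizations, recalling the remark in the text that a Steiner symmetrization is a Schwarz symmetrization on a subspace.

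First I would recall that since $\psi\geq0$ is a minimizer it is, up to translation, strictly positive, lies in $H^2(\R^3)$, solves~\eqref{Pekar_euler_lagrange_notnormalize}, and is real-analytic on $\R^3$ by \clm{sol_analytique}; moreover $|\psi|^2\star V$ is continuous and tends to $0$ at infinity, and $\psi\to0$ at infinity, so $\psi^*$ (being equimeasurable with $\psi$) is a genuine radially non-increasing function tending to $0$. Suppose, for contradiction, that $\psi^*$ is not strictly decreasing: then there is an interval of radii on which $\psi^*$ is constant, equivalently there is a level $t>0$ with $|\{\psi>t\}\setminus\{\psi\geq t'\}|>0$ for some $t'>t$, i.e. a superlevel set $\{\psi\geq t\}$ of positive measure on whose ``thickened boundary'' $\psi$ takes values in a degenerate range. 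I would phrase the contradiction through the equality case of the Riesz rearrangement inequality: since by \clm{Corollaire_nrj_psi_egale_nrj_symmetrisee} we have $\pscal{|\psi|^2,|\psi|^2\star V}=\pscal{|\psi^\st|^2,|\psi^\st|^2\star V}$ with $V=V^\st$ strictly radially (or Steiner-) decreasing by \cpr{cond_sym_V_epsiM}, Burchard's characterization of equality~\cite{Burchard-96} forces $|\psi|^2$ — hence $\psi$ — to be a translate of its own symmetrization. Thus $\psi$ is itself radial (resp. shares the symmetry), analytic, decaying, and non-strictly-decreasing on a set of positive measure; but then $\psi$ agrees with a constant on a set of positive measure, so by analyticity $\psi$ is constant, contradicting $\psi\in H^2$, $\psi\not\equiv0$.

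For item~(i), fix $x\in\vect\{e_j,e_k\}$ and apply the same argument to the two-variable slice: by \clm{Corollaire_nrj_psi_egale_nrj_symmetrisee} the Steiner symmetrization $\st_i(\psi)$ is a minimizer, hence analytic, and the equality case of the Riesz inequality (now on the $e_i$-fibers, with $V$ Steiner-symmetric in that direction) forces $\psi$ itself to coincide with $\st_i(\psi)$ up to translation in the $e_i$ variable; if the slice $\st_i(\psi)(x,\cdot)$ failed to be strictly decreasing for $x$ in a set of positive $\vect\{e_j,e_k\}$-measure, then $\psi$ would be constant on a subset of $\R^3$ of positive Lebesgue measure, contradicting analyticity; and the strictly-decreasing-for-a.e.-$x$ statement self-improves to every $x$ by continuity of $\psi$ together with the fact, already used in \cpr{cond_sym_V_epsiM}, that plateaus of a symmetrized function correspond to plateaus of the original. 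Item~(ii) is identical with $\st_i$ replaced by $\st_{j,k}$ and the fiber being the $\vect\{e_j,e_k\}$-plane. The main obstacle, and the place needing the most care, is the passage from ``the symmetrized minimizer has a plateau'' to ``$\psi$ itself is constant on an open set'': this requires genuinely invoking the \emph{strict} monotonicity of $V=V^\st$ together with the \emph{equality case} in the Riesz/kinetic rearrangement inequalities (not just the inequalities themselves) so that one may transfer the plateau from $\psi^\st$ back to $\psi$ before applying analyticity; handling the non-strict inequality $0<M\le 1$ (so that $V$ is only non-strictly bounded below) and the cylindrical borderline cases $m_1^3=m_2^2$ etc. where $V^\st=V$ holds with equality is the delicate bookkeeping.
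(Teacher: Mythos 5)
There is a genuine gap. Your argument routes through \clm{Corollaire_nrj_psi_egale_nrj_symmetrisee} and the equality case in Riesz' inequality (Burchard), both of which require $V=V^\st$ for the particular symmetrization at hand. But \clm{psi_symetrisation_strict_decr} is stated unconditionally: it asserts that $\psi^*$ is strictly decreasing and that every Steiner symmetrization $\st_i(\psi)$, $\st_{j,k}(\psi)$ is strictly decreasing, for \emph{every} permutation $\{i,j,k\}$, without any hypothesis such as $(\ref{cond_sym_V_M_default}_k)$. In the anisotropic case $V\neq V^*$ in general, so $\psi^*$ need not be a minimizer, and none of \clm{Corollaire_nrj_psi_egale_nrj_symmetrisee}, the Riesz equality case, or the claim that $\psi$ is a translate of its own symmetrization apply. (You already sense trouble in the last paragraph — worrying about ``$V^\st=V$ with equality'' — but the real issue is that $V^\st=V$ may simply fail, which collapses your chain of implications.) The Burchard machinery is thus both unusable in the full generality claimed and, where it does apply, unnecessary.

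The paper's argument bypasses this entirely by never touching the energy of the symmetrized function. The key observation is that ``$\psi^*$ is radially strictly decreasing'' is \emph{equivalent} to ``$\left|\{\psi^*=t\}\right|=0$ for all $t>0$'', which by equimeasurability of $\psi$ and $\psi^*$ is the same as ``$\left|\{\psi=t\}\right|=0$ for all $t>0$''. The latter follows at once from \clm{sol_analytique}: $\psi$ is real-analytic (this is where minimality is used — to get the Euler--Lagrange equation and elliptic analyticity) and not constant, so its level sets have measure zero. For the Steiner cases one applies the identical observation fiberwise: for each fixed $x$ in the orthogonal subspace, $\psi(x,\cdot)$ is still analytic and a Steiner symmetrization is a Schwarz symmetrization of the fiber. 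No rearrangement \emph{inequality} — let alone its equality case — is needed, and the argument is valid regardless of whether $V$ shares the symmetry. If you want to salvage your proposal you should delete the Burchard/Riesz step and the claim $\psi=\psi^\st(\cdot-a)$, and go directly from analyticity of $\psi$ (and of its fibers) to measure-zero level sets, then to strict monotonicity of the rearrangement.
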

\begin{proof}[Proof of~\clm{psi_symetrisation_strict_decr}]
By~\cth{anisotrop_existence_minimiseur}, $\psi$ is a solution of~\eqref{Pekar_euler_lagrange_notnormalize} in $H^2(\R^3, \R)$ with a real Lagrange multiplier $\mu$. Then, by the following lemma (see the Appendix of~\cite{Ricaud-PhD} for the proof), $\psi$ is real analytic.
\begin{lemme}\label{sol_analytique}
Any $\psi \in H^2(\R^3, \R)$ solution of~\eqref{Pekar_euler_lagrange_notnormalize} for $\mu\in \R$ is analytic.
\end{lemme}

Thus $\left|\{\psi=t\} \right|~=~0$ for any $t \in \R_+$ and this is equivalent to $\left|\{\psi^*=t\} \right|=0$ for any $t \in \R_+$. Hence $\psi^* \textrm{ is radially strictly decreasing}$.

Given that for any $1\leq k<3$ and any $x \in \R^{3-k}$, $\psi(x,\cdot)$ is analytic and since a Steiner symmetrization is a Schwarz symmetrization, we obtain $ii.$ and $iii.$ by the same reasoning to $\psi(x,\cdot)$.
\end{proof}

Finally, to prove our~\cth{sym_minimiseurs_sous_condition} on the symmetries of minimizers, we need a result on the case of equality in Riesz' inequality for Steiner's symmetrizations.
We emphasize that different Steiner symmetrizations do not commute in general. However, if the Steiner symmetrizations are made with respect to the vectors of an orthogonal basis then the radial strictly decreasing properties are preserved.

For shortness, we write $u^{\st_k}:=\st_{k}(u)$ and, in cylindrical cases, $u^{\st_{1,2}}:=\st_{1,2}(u)$ and $u^{\st_{2,3}}:=\st_{2,3}(u)$.

\begin{prop}[Steiner symmetrization: case of equality for $g$ strictly decreasing]\label{equality_Riesz}
Let $f, g, h$ be three measurable functions on $\R^3$ such that $g>0$ and $f,h\geq0$ where $0\neq f\in L^p(\R^3)$, with $1\leq p\leq+\infty$, and $0\neq h\in L^q(\R^3)$, with $1\leq q\leq+\infty$. Define $$J(f,g,h)=\frac{1}{2}\displaystyle{\int_{\R^3}}{\displaystyle{\int_{\R^3}}{f(x)g(x-y)h(y) \dd x} \dd y}\leq\infty.$$
\begin{enumerate}[leftmargin=0.6cm]
	\item Let $(i,j,k)$ be a permutation of $(1,2,3)$ and $J\left(f^{\emph{\st}_i},g,h^{\emph{\st}_i}\right)<\infty$. If for any $(x_j,x_k)\in \R^2$ the functions $g$, $f^{\emph{\st}_i}$ and $h^{\emph{\st}_i}$ are all strictly decreasing with respect to $|x_i|$, then $$J(f,g,h)=J\left(f^{\emph{\st}_i},g,h^{\emph{\st}_i}\right) \Leftrightarrow \exists\, a \in \R^3,
	\left\{
		\begin{aligned}
			f&=f^{\emph{\st}_i}(\cdot-a),\\
			h&=h^{\emph{\st}_i}(\cdot-a),
		\end{aligned}
	\right. \;\;\textrm{ a.e. on } \R^3.$$
	\item Let $(i,j,k)$ be a permutation of $(1,2,3)$ and $J\left(f^{\emph{\st}_{j,k}},g,h^{\emph{\st}_{j,k}}\right)<\infty$. If for any $x_i\in \R$ the functions $g$, $f^{\emph{\st}_{j,k}}$ and $h^{\emph{\st}_{j,k}}$ are all radially strictly decreasing with respect to $(x_j,x_k)$, then $$J(f,g,h)=J\left(f^{\emph{\st}_{j,k}},g,h^{\emph{\st}_{j,k}}\right) \Leftrightarrow \exists\, a \in \R^3,
	\left\{
		\begin{aligned}
			f&=f^{\emph{\st}_{j,k}}(\cdot-a),\\
			h&=h^{\emph{\st}_{j,k}}(\cdot-a).
		\end{aligned}
	\right. \;\;\textrm{ a.e. on } \R^3.$$
	\item Let $\emph{\st}$ and $\emph{\st}'$ be two Steiner symmetrizations, acting on two orthogonal directions, $T=\emph{\st}'\circ \emph{\st}$ and $J\left(f^T,g,h^T\right)<\infty$. If the functions $g$, $f^{\emph{\st}}$, $h^{\emph{\st}}$ are all radially strictly decreasing in the direction (or the plane) of $\emph{\st}$, and $g$, $f^{\emph{\st}'}$ and $h^{\emph{\st}'}$ are all radially strictly decreasing in the direction (or the plane) of $\emph{\st}'$, then $$J(f,g,h)=J\left(f^T,g,h^T\right) \Leftrightarrow \exists\, a \in \R^3,
	\left\{
		\begin{aligned}
			f&=f^T(\cdot-a),\\
			h&=h^T(\cdot-a).
		\end{aligned}
	\right. \;\;\textrm{ a.e. on } \R^3.$$
\end{enumerate}
\end{prop}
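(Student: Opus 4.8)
The plan is to reduce the three-function Riesz equality cases to the classical two-function case of equality, which is precisely the content of the results of Burchard~\cite{Burchard-96} (for Schwarz symmetrization) combined with the layer-cake decomposition. The key observation is that each Steiner symmetrization is, fiberwise, a Schwarz symmetrization in the complementary variables, so the strict monotonicity hypothesis on $g$ (in those same fiber variables) is exactly what is needed to force rigidity. I will treat part (1) in detail and indicate how (2) and (3) follow by the same argument or by iteration.

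For part (1): fix the permutation $(i,j,k)$ and write $x=(x_i,x')$ with $x'=(x_j,x_k)$. By Fubini, one can view $J(f,g,h)$ as an integral over $(x',y')\in\R^2\times\R^2$ of the one-dimensional convolution-type quantity $\int_\R\int_\R f(x_i,x')g(x_i-y_i,x'-y')h(y_i,y')\dd x_i\dd y_i$. The Riesz rearrangement inequality in one dimension applied in the $x_i$-variable (with $x',y'$ as parameters) gives $J(f,g,h)\le J(f^{\st_i},g^{\st_i},h^{\st_i})=J(f^{\st_i},g,h^{\st_i})$, the last equality because $g=g^{\st_i}$ by hypothesis (more precisely, $g$ is symmetric strictly decreasing in $|x_i|$ for a.e. $x'$, hence already equal to its own Steiner symmetrization in that variable). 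Equality in $J$ therefore forces equality, for a.e. $(x',y')$, in the underlying one-dimensional Riesz inequality with the fixed strictly-decreasing middle function $g(\cdot,x'-y')$. The one-dimensional equality case (Burchard, or an elementary direct argument using that a strictly decreasing $g$ has level sets of measure zero) then yields that for a.e. $x'$ there is a translation $t(x')\in\R$ with $f(\cdot,x')=f^{\st_i}(\cdot-t(x'),x')$ and $h(\cdot,x')=h^{\st_i}(\cdot-t(x'),x')$ a.e. in $x_i$. The remaining point is to upgrade this family of fiberwise translations to a single global translation $a$; this is where the argument that the translation parameter $t(x')$ must be (a.e.) independent of $x'$ enters — one uses that $f^{\st_i}$ and $h^{\st_i}$ are \emph{jointly} symmetric decreasing in $x_i$ together with, say, a connectedness/continuity argument on the support, or the fact that $f$ and $h$ being genuine $L^p$, $L^q$ functions forces measurability of $x'\mapsto t(x')$ and then the two compatibility conditions (for $f$ and for $h$) pin it down to a constant since $\st_i$ acts only in $x_i$.

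For part (2), the argument is identical with the roles reversed: write $x=(x_i,x')$ with $x'=(x_j,x_k)\in\R^2$, regard $J$ as an integral over $(x_i,y_i)\in\R\times\R$ of a two-dimensional Riesz quantity in the $x'$-variables, apply the Riesz inequality in $\R^2$ fiberwise, use $g=g^{\st_{j,k}}$ (radial strictly decreasing in $(x_j,x_k)$), and invoke the two-dimensional equality case of Burchard--Fröhlich--Lieb to get fiberwise translations $a(x_i)\in\R^2$, which again must be constant. For part (3), write $T=\st'\circ\st$ and interpolate: by part (1) or (2) applied to the inner symmetrization $\st$, equality $J(f,g,h)=J(f^T,g,h^T)$ together with $J(f,g,h)\le J(f^{\st},g,h^{\st})\le J(f^T,g,h^T)$ forces $J(f,g,h)=J(f^{\st},g,h^{\st})$ and $J(f^{\st},g,h^{\st})=J(f^T,g,h^T)$; applying the appropriate case of (1)/(2) to each of these two equalities gives $f=f^{\st}(\cdot-b)$, $h=h^{\st}(\cdot-b)$ and $f^{\st}=f^T(\cdot-c)$, $h^{\st}=h^T(\cdot-c)$ with $b,c\in\R^3$, whence $f=f^T(\cdot-b-c)$ and similarly for $h$. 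Here one uses that $\st$ and $\st'$ act on orthogonal directions so that $f^{\st}$ still satisfies the strict monotonicity hypothesis needed to apply the $\st'$-equality case (as emphasized just before the statement).

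The main obstacle is the passage from fiberwise translations to a single global translation: the naive Fubini reduction only delivers a measurable function $x'\mapsto t(x')$ (resp. $x_i\mapsto a(x_i)$), and ruling out a genuine dependence requires using that \emph{both} $f$ and $h$ must be translated by the \emph{same} amount on each fiber while each is separately a symmetric-decreasing rearrangement in the complementary variables after a further symmetrization — equivalently, one invokes the full equality case of the Riesz inequality in the ambient dimension rather than the fiberwise one, at the cost of checking that the hypotheses of~\cite{Burchard-96} (in particular the strict-decrease and finiteness conditions) are met; the strict monotonicity of $g$ supplied by~\cpr{cond_sym_V_epsiM} is exactly the ingredient that makes this work.
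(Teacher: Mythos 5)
Your overall scheme for part (1) matches the paper's: Fubini reduction to a one-dimensional Riesz quantity in the $x_i$-variable with the complementary coordinates frozen as parameters, fiberwise equality forced by strict positivity of the deficit, application of Lieb's one-dimensional equality case for a strictly decreasing middle function (Lemma~3 in~\cite{Lieb-77}, which the paper recalls as Lemma~\ref{Ineq_Riesz_case_equality_three_strict_decreas_rearrang_and_g_symm}), and then an upgrade from a family of fiberwise translations to a single global one. Parts (2) and (3) by the same mechanism and by chaining $J(f,g,h)\le J(f^{\st},g,h^{\st})\le J(f^T,g,h^T)$ respectively, also as in the paper.

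However, the step you yourself flag as the main obstacle — showing that the fiberwise translation is constant — is not actually resolved in your write-up, and this is a genuine gap. You offer three candidate remedies (support connectedness/continuity, measurability of $x'\mapsto t(x')$ plus "compatibility conditions," or retreating to Burchard's full ambient-dimensional equality theorem), none of which is carried out, and the last of which would not apply directly because the symmetrization here is Steiner, not Schwarz. The paper's argument at this point is clean and should be singled out: after Fubini the translation parameter depends a priori on \emph{both} tuples $(x_j,x_k)$ and $(x_j',x_k')$, written $a_i(y,y')$; but the equality $f(y,\cdot)=f^{\st_i}(y,\cdot-a_i(y,y'))$ involves only $y$, and since $f^{\st_i}(y,\cdot)$ is strictly decreasing in $|x_i|$ the admissible shift is \emph{unique} given $y$, so $a_i(y,y')$ is a function of $y$ alone; by the exact same reasoning applied to the equality for $h$ (which involves only $y'$), it is a function of $y'$ alone; hence it is constant. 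Your phrase "the two compatibility conditions pin it down" gestures at this but does not make the argument, and your alternative suggestions would either require additional hypotheses or are not applicable. As a secondary, smaller point: in your part~(3), applying case (1)/(2) to $J(f^{\st},g,h^{\st})=J(f^T,g,h^T)$ requires that $(f^{\st})^{\st'}=f^T$ is strictly decreasing in the $\st'$-directions, whereas the hypothesis is phrased in terms of $f^{\st'}$; this is salvageable (the two functions are equimeasurable fiberwise, so one inherits the measure-zero-level-sets property from the other), but you should say so, since the paper avoids the issue by substituting $f=f^{\st}(\cdot-a,\cdot)$ first and then working with $J(f,g,h)=J(f^{\st'},g,h^{\st'})$ directly.
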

\begin{proof}[Proof of~\cpr{equality_Riesz}]
The implications $\Leftarrow$ all follow from a simple changes of variable. We show the implications $\Rightarrow$ and start with \emph{(1)}. Define, for any permutation $(i,j,k)$ of $(1,2,3)$ and any $(x_j,x'_j,x_k,x'_k)\in\R^4$, the functions 
$$J_i(f,g,h)(x_j,x'_j,x_k,x'_k)=\frac{1}{2}\displaystyle{\int_{\R}}{\displaystyle{\int_{\R}}{f(X)g(X-X')h(X') \dd x_i} \dd x'_i},$$  where $X=(x_1,x_2,x_3)$ and $X'=(x_1',x_2',x_3')$. We claim that for almost all $(x_j,x'_j,x_k,x'_k)\in\R^4$, we have $$J_i(f,g,h)(x_j,x'_j,x_k,x'_k)=J_i(f^{\st_i},g,h^{\st_i})(x_j,x'_j,x_k,x'_k).$$ Indeed, assume that there exists a non-zero measure set $E \subset \R^2\times\R^2$ such that $J_i(f,g,h)(y,y')\neq J_i(f^{\st_i},g,h^{\st_i})(y,y')$ for any $(y,y')\in E$. Thus, by Riesz inequality on $\R$, $J_i(f,g,h)<J_i(f^{\st_i},g,h^{\st_i})$ necessarily holds on $E$, since $g=g^{\st_i}$, and consequently $J(f,g,h)<J(f^{\st_i},g,h^{\st_i})$, reaching a contradiction.

We now use the following result of Lieb~\cite{Lieb-77}: 
\begin{lemme}[\texorpdfstring{\cite[Lemma 3]{Lieb-77}}{}: Case of equality in Riesz' inequality for $g$ strictly decreasing]\label{Ineq_Riesz_case_equality_three_strict_decreas_rearrang_and_g_symm}
Suppose $g$ is a positive spherically symmetric strictly decreasing function on $\R^n$, $f\in L^p(\R^n)$ and $h\in L^q(\R^n)$ are two nonnegative functions, with  $p,q\in[1;+\infty]$, such that $J(f^*,g,h^*)<\infty$. Then $$J(f,g,h)=J(f^*,g,h^*)\, \Rightarrow\, \exists\; a \in \R^n,\; f=f^*(\cdot-a) \textrm{ and } h=h^*(\cdot-a)\;\; \textrm{ a.e.}.$$
\end{lemme}

Thus, for almost all $(y,y')\in \R^2\times\R^2$, there exists  $a_i(y,y') \in \R$ such that $f(y,x_i)=f^{\st_i}\left(y, x_i - a_i(y,y')\right)$ and $h(y',x_i)=h^{\st_i}\left(y',x_i - a_i(y,y')\right)$,  for almost all $x_i\in \R$. Using now the assumed strict monotonicity of $f^{\st_i}(y,\cdot)$ and $h^{\st_i}(y',\cdot)$, it follows that $a_i$ does not depend on $(y,y')$, and \emph{(1)} is proved.

The case \emph{(2)} is very similar, defining this time $$J_{j,k}(f,g,h)(x_i,x_i')=\frac12\pscal{f(\cdot,x_i), g(\cdot,x_i-x_i')\star h(\cdot,x_i')}_{L^2(\R^2)},\, \forall(x_i,x_i')\in\R^2.$$

We now prove \emph{(3)}. Let $\st$ be one of the Steiner's symmetrization described \emph{(1)} and \emph{(2)} and the same for $\st'$. We claim that $$J_\st(f,g,h)=J_\st(f^{\st},g,h^{\st}) \textrm{ and } J_{\st'}(f,g,h)=J_{\st'}(f^{\st'},g,h^{\st'}), \; a.e..$$ Indeed, Riesz inequality gives $J(f,g,h)\leq J(f^{\st},g,h^{\st})\leq J(f^T,g,h^T)$. Since first and third terms are equal, the three of them are. From the first equality, there exists $a\in\R^\ell$ ($\ell=1,2$) such that $f=f^\st(\cdot-a,\cdot)$ and $h=h^\st(\cdot-a,\cdot)$. Then, since $\st$ and $\st'$ act on orthogonal directions, we have $$J(f^T,g,h^T)=J\left(f^{\st'}(\cdot+a,\cdot),g,h^{\st'}(\cdot+a,\cdot)\right)=J(f^{\st'},g,h^{\st'})$$ and so the second claim holds true too. Then, for almost every $y:=(x,z)\in \R^3$, we have
\begin{equation*}
\left\{
	\begin{aligned}
		f^{T}\left(y-(a', a)\right)&=\left(f^{\st}(x-a,\cdot)\right)^{\st'}(z-a')=f^{\st'}(x, z-a')=f(x,z)=f(y),\\
		h^{T}\left(y-(a', a)\right)&=\left(h^{\st}(x-a,\cdot)\right)^{\st'}(z-a')=h^{\st'}(x, z-a')=h(x,z)=h(y).
	\end{aligned}
\right.
\end{equation*}
\end{proof}
We now have all the ingredients to prove~\cth{sym_minimiseurs_sous_condition}.
\begin{proof}[Proof of~\cth{sym_minimiseurs_sous_condition}]Let $\psi$ be a minimizer and $\psi^\st$ one (or a composition) of its Steiner symmetrizations with a direction (or a plane) for which $V=V^\st$.

We take $f=h=|\psi|^2\in$ and $g=V$. So we have $f=h>0$ (thanks to~\cth{anisotrop_existence_minimiseur}), $g>0$ (thanks to~\eqref{aniso_propri_V}) and $J(f^\st,V,f^\st)$ finite. Indeed by the Hardy-Littlewood-Sobolev inequality and~\eqref{aniso_propri_V}, $J(f^\st,V,f^\st)\lesssim \norm{f^\st}^2_{6/5}=\norm{f}^2_{6/5}<+\infty$ since $f\in H^1(\R^3)$. Moreover, the assumption on the $m_k$'s gives that $g=g^\st$ is radially strictly decreasing by~\cpr{cond_sym_V_epsiM}, and the strict monotonicity of $f^\st=h^\st$ is obtained by~\clm{psi_symetrisation_strict_decr}.

Finally, by \clm{Corollaire_nrj_psi_egale_nrj_symmetrisee}, $\psi^\st$ is a minimizer and $$J(|\psi|^2,V,|\psi|^2)=J\left((|\psi|^2)^\st,(V)^\st,(|\psi|^2)^\st\right)=J\left((|\psi|^2)^\st,V,(|\psi|^2)^\st\right).$$ By~\cpr{equality_Riesz}, there exists $a$ such that $|\psi|^2=(|\psi|^2)^\st(\cdot-a)=(|\psi|^\st)^2(\cdot-a)$ holds a.e. thus $\psi=\psi^\st(\cdot-a)$ since $\psi\geq0$. This concludes the proof of~\cth{sym_minimiseurs_sous_condition}.
\end{proof}

\section{Study of the linearized operator}\label{chapter_linearized_oper}
In this section we study the linearized operator ${\mathfrak L}_Q$, on $L^2(\R^3)$ with domain $H^2(\R^3)$, associated with the Euler-Lagrange equation $-\Delta Q + Q - (|Q|^2\star V)Q=0$~\eqref{Pekar_euler_lagrange_notnormalize}, which is given by
\begin{equation}\label{op_linearise_adimensionne_symm}
\boxed{{\mathfrak L}_Q\xi=-\Delta\xi +\xi -\left(V\star|Q|^2\right)\xi -2Q\left(V\star(Q\xi)\right),}
\end{equation} and we give partial characterization of its kernel. We first consider the true model~\eqref{V_final_definition} for which, following the scheme in~\cite{Lenzmann-09}, we will use a Perron--Frobenius argument on subspaces adapted to the symmetries of the problem. The main difficulty will stand in dealing with the non-local operator $Q\left(V\star(Q\xi)\right)$ and, in particular, with proving that this operator is positivity improving. The fundamental use of Newton's theorem in the proof of this property in the isotropic case does not work here, therefore we need a new argument. Our proof will rely on the conditions~\eqref{cond_sym_V_M_k}'s for which $V$ is $e_k$-symmetric strictly decreasing for each $k$ (see~\cpr{cond_sym_V_epsiM}). Then we discuss in a similar way the cylindrical case for the simplified model~\eqref{V_simplified_final_definition}, which will need another argument.

\subsection{The linearized operator in the symmetric decreasing case}\label{chapter_linearized_oper_symm}
We consider the general case for $V$, given by~\eqref{V_final_definition}, verifying the three conditions~\eqref{cond_sym_V_M_k}, for $k=1,2,3$, and define the subspaces of $L^2(\R^3)$
\begin{equation}\label{decompo_L2_definition_ensemble_symm}
L^2_{\tau_x,\tau_y,\tau_z}:=
\left\{f\in L^2(\R^3)\, \left|
	\begin{aligned}
		f(-x,y,z)&= \tau_x f(x,y,z),\\
		f(x,-y,z)&= \tau_y f(x,y,z),\\
		f(x,y,-z)&= \tau_z f(x,y,z)
	\end{aligned}\right.
\right\},
\end{equation} obtained by choosing $\tau_x,\tau_y,\tau_z\in\{\pm1\}$. We prove the following theorem which is basically saying that the kernel of the linearized operator around solutions is reduced to the kernel on functions that are even in all three directions.

\begin{thm}\label{THM_Ker_L_symm}
Let $V$, be given by~\eqref{V_final_definition}, verifying~\eqref{cond_sym_V_M_k}, for all $k$, and let $Q$ be a positive and symmetric strictly decreasing (with respect to each $e_k$ separately) solution of~\eqref{Pekar_euler_lagrange_notnormalize}. Then
\begin{equation}\label{Ker_of_L_symm}
\ker {\mathfrak L}_Q=\vect\left\{\partial_x Q, \partial_y Q, \partial_z Q\right\} \bigoplus \ker \left({\mathfrak L}_Q\right)_{|L^2_{+,+,+}}.
\end{equation}
For instance, $Q$ could be a minimizer for $I_M(\lambda)$.
\end{thm}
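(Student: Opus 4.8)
The plan is to decompose $L^2(\R^3)$ according to the eight sign-sectors $L^2_{\tau_x,\tau_y,\tau_z}$ of~\eqref{decompo_L2_definition_ensemble_symm} and to analyze ${\mathfrak L}_Q$ sector by sector. Since $V$ and $Q$ are both even in each variable, ${\mathfrak L}_Q$ commutes with each of the three reflections $x\mapsto-x$, $y\mapsto-y$, $z\mapsto-z$, hence leaves each $L^2_{\tau_x,\tau_y,\tau_z}$ invariant, and $\ker{\mathfrak L}_Q$ splits as the direct sum of the eight kernels $\ker({\mathfrak L}_Q)_{|L^2_{\tau_x,\tau_y,\tau_z}}$. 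The claim~\eqref{Ker_of_L_symm} is then equivalent to showing that on the seven sectors other than $L^2_{+,+,+}$ the kernel is exactly $\vect\{\partial_x Q,\partial_y Q,\partial_z Q\}$ intersected with that sector: i.e. $\ker({\mathfrak L}_Q)_{|L^2_{-,+,+}}=\vect\{\partial_x Q\}$, $\ker({\mathfrak L}_Q)_{|L^2_{+,-,+}}=\vect\{\partial_y Q\}$, $\ker({\mathfrak L}_Q)_{|L^2_{+,+,-}}=\vect\{\partial_z Q\}$, and the kernel is trivial on the remaining four sectors $L^2_{-,-,+}$, $L^2_{-,+,-}$, $L^2_{+,-,-}$, $L^2_{-,-,-}$ (note $\partial_x Q\in L^2_{-,+,+}$ because $Q$ is even, etc.).

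Following the scheme of Lenzmann~\cite{Lenzmann-09}, the key is a Perron--Frobenius argument on each sector. First I would record that on each sector $L^2_{\tau_x,\tau_y,\tau_z}$ the operator ${\mathfrak L}_Q$ is self-adjoint, bounded below, with essential spectrum $[1,\infty)$, so any spectrum below $1$ is discrete. Write ${\mathfrak L}_Q=-\Delta+1-W-K$ with $W=V\star|Q|^2>0$ and $K\xi=2Q(V\star(Q\xi))$. The heat semigroup $e^{t\Delta}$ is positivity improving, and one checks (using $W$ bounded, $K$ bounded and positivity preserving in a suitable sense) via the Trotter product formula that $e^{-t{\mathfrak L}_Q}$ is positivity improving \emph{on the sector $L^2_{+,+,+}$}, where ``positive'' means a.e.\ nonnegative function of $(|x|,|y|,|z|)$. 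Hence the bottom eigenvalue of ${\mathfrak L}_Q$ restricted to $L^2_{+,+,+}$ is simple with a strictly positive eigenfunction. On the other sectors one instead uses the standard trick: a function in $L^2_{\tau_x,\tau_y,\tau_z}$ is determined by its restriction to the positive octant, and ${\mathfrak L}_Q$ restricted there becomes (after the unfolding) a Schrödinger-type operator on the octant with appropriate (Dirichlet on the reflected faces, Neumann on the even faces) boundary conditions, whose semigroup is again positivity improving; so the lowest eigenvalue on each such sector is simple with an eigenfunction that does not change sign on the octant.

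Now the mechanism: on $L^2_{-,+,+}$ we already know $\partial_x Q$ lies in the kernel; since $Q$ is strictly decreasing in $|x|$, $\partial_x Q$ has a strict sign on the octant $\{x>0\}$, so it is (up to sign) the ground state of ${\mathfrak L}_Q$ on this sector, which forces the bottom eigenvalue there to be $0$ and simple — hence $\ker({\mathfrak L}_Q)_{|L^2_{-,+,+}}=\vect\{\partial_x Q\}$. The same argument on $L^2_{+,-,+}$ and $L^2_{+,+,-}$ uses strict monotonicity of $Q$ in $|y|$ and $|z|$ respectively (this is exactly where the hypotheses~\eqref{cond_sym_V_M_k} for all $k$ enter, via~\cpr{cond_sym_V_epsiM} and the strict-decrease of $Q$, cf.~\clm{psi_symetrisation_strict_decr}). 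For the four ``doubly/triply odd'' sectors, say $L^2_{-,-,+}$: the bottom of the spectrum of ${\mathfrak L}_Q$ there is \emph{strictly larger} than the bottom on $L^2_{-,+,+}$, because a ground state on $L^2_{-,-,+}$ would, after undoing one reflection, give a sign-changing element on the larger octant-problem of $L^2_{-,+,+}$, contradicting simplicity/positivity there; since the bottom on $L^2_{-,+,+}$ is $0$, the bottom on $L^2_{-,-,+}$ is $>0$, so $\ker({\mathfrak L}_Q)_{|L^2_{-,-,+}}=\{0\}$, and similarly for $L^2_{-,+,-}$, $L^2_{+,-,-}$, $L^2_{-,-,-}$. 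Assembling the eight pieces gives~\eqref{Ker_of_L_symm}. The main obstacle, exactly as flagged in the text, is proving that the nonlocal piece $K\xi=2Q(V\star(Q\xi))$ is positivity improving on the relevant sectors without Newton's theorem: here one must use that $V$ itself is $e_k$-symmetric strictly decreasing (the conditions~\eqref{cond_sym_V_M_k}), so that the folded kernel $\sum_{\epsilon\in\{\pm1\}^3}V(x-\epsilon\cdot y')$ is strictly positive on the octant, which yields strict positivity of the folded operator; this, combined with a careful Trotter/Duhamel expansion controlling the interplay of $e^{t\Delta}$, $W$ and $K$, is the technical heart of the proof.
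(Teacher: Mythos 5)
Your strategy --- parity-sector decomposition plus Perron--Frobenius --- is the same as the paper's, but your execution contains a genuine gap. You claim positivity improving (hence Perron--Frobenius) on \emph{each} of the eight octant sectors $L^2_{\tau_x,\tau_y,\tau_z}$ separately. For a singly-odd sector this is fine: the folded nonlocal kernel on, say, $L^2_{-,+,+}$ involves only the first-order differences $V(x_1-y_1,\cdot)-V(x_1+y_1,\cdot)>0$, which is exactly what conditions~\eqref{cond_sym_V_M_k} ($V$ decreasing in each $|x_k|$) deliver. But on a doubly-odd sector such as $L^2_{+,-,-}$ the folded kernel is the signed sum $\sum_{\epsilon}\epsilon_2\epsilon_3\, V(x-\epsilon\cdot y)$, a second mixed difference of $V$ in $e_2$ and $e_3$, whose sign is governed by $\partial^2_{|x_2||x_3|}V$. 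For $V$ given by~\eqref{V_final_definition} this is \emph{not} implied by~\eqref{cond_sym_V_M_k}: with $m_1=0.85$, $m_2=m_3=0.8$ one has $m_1^3\approx0.614<0.64=m_2^2=m_3^2$ so $(\ref{cond_sym_V_M_default}_2)$ and $(\ref{cond_sym_V_M_default}_3)$ hold, yet $\partial^2_{x_2x_3}V\propto |x|^{-5}-m_2^{-2}m_3^{-2}|M^{-1}x|^{-5}<0$ near $e_1$ (since $m_1^{-5}\approx2.25<2.44\approx m_2^{-2}m_3^{-2}$), and one checks the folded kernel is indeed negative at suitable points of the octant. This is the same higher-order obstruction the paper records in Remark~\ref{rmq_V_M_first_order_positivity}. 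So the sector-by-sector Perron--Frobenius does not get off the ground on the doubly- and triply-odd sectors.

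The fix --- which is what the paper does --- is to use the \emph{coarser} half-space decomposition: Perron--Frobenius is proved only for ${\mathfrak L}_Q^{x-}$ on $L^2_{x-}(\R_+^*\times\R^2)$, where only the first-order fold $V(x-x',\cdot)-V(x+x',\cdot)$ appears, so only the hypotheses~\eqref{cond_sym_V_M_k} are needed. The ground state is $\partial_x Q$, which lies in $L^2_{-,+,+}$; since $L^2_{x-}=\bigoplus_{\tau_y,\tau_z}L^2_{-,\tau_y,\tau_z}$ and the ground state is simple with eigenvalue $0$, the kernels on $L^2_{-,-,+}$, $L^2_{-,+,-}$, $L^2_{-,-,-}$ are automatically trivial. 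Repeating for $y$ and $z$ sweeps up all seven non-$(+,+,+)$ sectors. Your final comparison paragraph aims at precisely this, but as written it tries to view an element of $L^2_{-,-,+}$ inside the ``octant problem of $L^2_{-,+,+}$'', which it does not belong to; the correct enclosing space is the half-space $L^2_{x-}$. (The choice of Trotter versus the paper's resolvent Neumann series is immaterial once the per-piece positivity is secured.)
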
 The proof of this result is inspired by Lenzmann's proof in~\cite{Lenzmann-09} of the fundamental similar result for the linearized operator in the radial case which corresponds to $m_1=m_2=m_3$. In that case, Lenzmann proved that $\ker \left({\mathfrak L}_Q\right)_{|L^2_{+,+,+}}=\{0\}$. Note that by the result of Section~\ref{uniqueness_weakly_anisotropic}, we know that this is still true in the weakly anisotropic case. Moreover, a theorem similar to~\cth{THM_Ker_L_symm} holds true for the simplified model~\eqref{V_simplified_final_definition} (with no conditions on the matrix $S$) but we do not state it here for shortness.

The rest of this Section~\ref{chapter_linearized_oper_symm} being dedicated to the proof of the theorem, let $V$ and $Q$ verify the assumptions of~\cth{THM_Ker_L_symm} for the entire Section~\ref{chapter_linearized_oper_symm}.

\subsubsection{Direct sum decomposition} First, one can easily verify that ${\mathfrak L}_Q$ stabilizes the spaces $L^2_{\tau_x,\tau_y,\tau_z}$. Let us then introduce the direct sum decomposition
\begin{equation*}\label{decompo_L2_symm}
L^2(\R^3)=L^2_{x-}\oplus L^2_{x+}=L^2_{y-}\oplus L^2_{y+}=L^2_{z-}\oplus L^2_{z+}
\end{equation*} where
\begin{equation*}\label{decompo_L2_symm_def}
\left\{
\begin{aligned}
L^2_{x-}&:=\bigoplus\limits_{\tau_y,\tau_z=\pm}{L^2_{-,\tau_y,\tau_z}}, \quad L^2_{x+}:=\bigoplus\limits_{\tau_y,\tau_z=\pm}{L^2_{+,\tau_y,\tau_z}}\\
L^2_{y-}&:=\bigoplus\limits_{\tau_x,\tau_z=\pm}{L^2_{\tau_x,-,\tau_z}}, \quad L^2_{y+}:=\bigoplus\limits_{\tau_x,\tau_z=\pm}{L^2_{\tau_x,+,\tau_z}}\\
L^2_{z-}&:=\bigoplus\limits_{\tau_x,\tau_y=\pm}{L^2_{\tau_x,\tau_y,-}}, \quad L^2_{z+}:=\bigoplus\limits_{\tau_x,\tau_y=\pm}{L^2_{\tau_x,\tau_y,+}}.
\end{aligned}\right.
\end{equation*}

We claim that those spaces --- with corresponding projectors $P^{x-}$, $P^{x+}$, $P^{y-}$, $P^{y+}$, $P^{z-}$ and $P^{z+}$ --- reduce the linerized operator ${\mathfrak L}_Q$ (see~\cite{Teschl-09} for a definition of reduction), where
\begin{align*}
P^{x\pm}\psi(r,\phi,z)&=\frac{\psi(x,y,z)\pm\psi(-x,y,z)}2
\end{align*} and similarly for the other projections. The reduction property is straightforward for $-\Delta +1 -\left(V\star|Q|^2\right)$. Moreover, since $Q$ is even in $x$, we have
\begin{align*}
V\star(QP^{x\pm}\psi)&=\frac{V\star(Q\phi)\pm V\star(Q\phi(-\cdot,\cdot,\cdot))}2\\
&=\frac{V\star(Q\phi)\pm[V\star(Q\phi)](-\cdot,\cdot,\cdot)}2=P^{x\pm}[V\star(Q\psi)]
\end{align*} and, $Q$ being also even in $y$ and in $z$, we obtain the result for the other projections. Thus  we can apply~\cite[Lemma 2.24]{Teschl-09} which gives us that 
\begin{equation*}\label{decompo_L2_symm_operator_L}
{\mathfrak L}_Q={\mathfrak L}_Q^{x-}\oplus {\mathfrak L}_Q^{x+}={\mathfrak L}_Q^{y-}\oplus {\mathfrak L}_Q^{y+}={\mathfrak L}_Q^{z-}\oplus {\mathfrak L}_Q^{z+},
\end{equation*} with the six operators being self-adjoint operators on the corresponding $L^2(\R^3)$ spaces with domains $P^w H^2(\R^3)$, and $w\in\{x-,x+,y-,y+,z-,z+\}$. Note that $P^{x-}H^2(\R^3)=H^2_{x-}(\R^3):=H^2(\R^3)\cap L^2_{x-}(\R^3)$ and similarly for $P^{y-}$ and $P^{z-}$.

Let us then redefine for now on the operator ${\mathfrak L}_Q^{x-}$ (resp. ${\mathfrak L}_Q^{y-}$ and ${\mathfrak L}_Q^{z-}$) by restricting it to $x$-odd (resp. $y$-odd and $z$-odd) functions through the isomorphic identifications $L^2_{x-}(\R_+^*\times\R^2)\approx L^2_{x-}(\R^3)$ and $H^2_{x-}(\R_+^*\times\R^2)\approx H^2_{x-}(\R^3)$. Thus, ${\mathfrak L}_Q^{x-}$, as an operator on $L^2_{x-}(\R_+^*\times\R^2)$ with domain $H^2_{x-}(\R_+^*\times\R^2)$, can be written
\begin{equation*}\label{def_L_moins}
{\mathfrak L}_Q^{x-}=-\Delta+1+\Phi_{(-)}+W_{(-)}
\end{equation*} where the strictly negative multiplication local operator, on $\R_+^*\times\R^2$, is
\begin{align*}
\Phi_{(-)}(x,Y)&=-\left(V\star| Q|^2\right)(x,Y)\\
	&=-\displaystyle{\int_{\R_+^*\times\R^2}}{\!\!Q^2(x',Y') [V(x-x',Y-Y')+V(x+x',Y-Y')] \dd Y' \dd x'}
\end{align*} and the non-local term $W_{(-)}$, on $\R_+^*\times\R^2$, is
\begin{multline*}
(W_{(-)}f)(x,Y)=-2 Q(x,Y) \times\\
	\times\displaystyle{\int_{\R_+^*\times\R^2}}{Q(x',Y')f(x',Y')[V(x-x',Y-Y')-V(x+x',Y-Y')] \dd Y' \dd x'}.
\end{multline*}
The same properties hold for ${\mathfrak L}_Q^{y-}$ and ${\mathfrak L}_Q^{z-}$.

The key fact to deal with the non-local operator, in order to adapt Lenzmann's proof to anisotropic case, is the positivity improving property of $-W_{(-)}$.
\begin{lemme}\label{W_pos_improv}
The operator $-W_{(-)}$ is positivity improving on $L^2_{x-}(\R_+^*\times\R^2)$.
\end{lemme}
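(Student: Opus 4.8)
The plan is to show that $-W_{(-)}$ maps a nonnegative, nonzero function $f\in L^2_{x-}(\R_+^*\times\R^2)$ to a function that is strictly positive a.e.\ on $\R_+^*\times\R^2$. Writing the action explicitly, for $f\ge0$ we have
\begin{equation*}
(-W_{(-)}f)(x,Y)=2Q(x,Y)\displaystyle{\int_{\R_+^*\times\R^2}}{Q(x',Y')f(x',Y')\,K(x,x';Y-Y')\dd Y'\dd x'},
\end{equation*}
where $K(x,x';Z):=V(x-x',Z)-V(x+x',Z)$. Since $Q>0$ everywhere by the hypotheses of \cth{THM_Ker_L_symm}, the prefactor $2Q(x,Y)$ is harmless, and everything reduces to proving that the kernel $K$ is strictly positive for all $x,x'>0$ and all $Z\in\R^2$: then, for $f\ge0$ not identically zero (so $Qf\ge0$ is not identically zero), the integral is strictly positive for a.e.\ $(x,Y)$, which is exactly the positivity improving property.

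So the heart of the matter is the pointwise inequality $V(x-x',Z)>V(x+x',Z)$ for $x,x'>0$, $Z\in\R^2$; equivalently, that $t\mapsto V(t,Z)$ is strictly decreasing in $|t|$ for each fixed $Z$. This is precisely the statement that $V$ is $e_1$-symmetric strictly decreasing, which holds under condition $(\ref{cond_sym_V_M_default}_1)$ — automatically true — by \cpr{cond_sym_V_epsiM}; indeed from formula~\eqref{cond_sym_V_epsiM_computation} one reads $\partial_{|x_1|}V(x_1,Z)<0$ for a.e.\ $x_1$, and since $V(\cdot,Z)$ is continuous and real-analytic off the origin the monotonicity is strict. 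Hence $K(x,x';Z)>0$ for all $x,x'>0$ and all $Z$. The same argument applied to the $y$- and $z$-variables (using conditions $(\ref{cond_sym_V_M_default}_2)$ and $(\ref{cond_sym_V_M_default}_3)$, which are assumed in \cth{THM_Ker_L_symm}) gives the analogous statements for ${\mathfrak L}_Q^{y-}$ and ${\mathfrak L}_Q^{z-}$.

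The remaining points are routine: one should check that $-W_{(-)}$ is a well-defined bounded operator on $L^2_{x-}(\R_+^*\times\R^2)$ (this follows from $0\le V\le|\cdot|^{-1}$ in~\eqref{aniso_propri_V_large}, boundedness of $Q$, and the Hardy or Hardy--Littlewood--Sobolev inequality, exactly as in \clm{majoration_(fg)star-(V-W)}), that $-W_{(-)}f\ge0$ whenever $f\ge0$ (clear since $Q>0$ and $K>0$), and that $-W_{(-)}f$ cannot vanish on a positive-measure set when $f\ge0$ is nonzero. For the last point, note $Qf\ge0$ has positive integral over some set of positive measure, so for every fixed $(x,Y)$ the integrand $Q(x',Y')f(x',Y')K(x,x';Y-Y')$ is $\ge0$ and strictly positive on that set (as $K>0$ everywhere), whence the integral is $>0$; multiplying by $2Q(x,Y)>0$ gives strict positivity a.e. The main obstacle, as the authors flag, is that Newton's theorem — the tool used in the isotropic case to compare $V\star(Q\,\cdot)$ evaluated at reflected points — is unavailable; the replacement is the explicit strict monotonicity~\eqref{cond_sym_V_epsiM_computation} of the anisotropic potential in each principal direction, which is exactly why the conditions~\eqref{cond_sym_V_M_k} enter.
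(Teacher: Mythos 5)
Your proof is correct and follows essentially the same route as the paper: the crux is the strict monotonicity of $V$ in $|x_1|$ (from~\cpr{cond_sym_V_epsiM}), which together with $x+x'>|x-x'|$ for $x,x'>0$ and $Q>0$ makes the kernel $V(x-x',Z)-V(x+x',Z)$ strictly positive. The paper's argument is a compressed version of what you wrote; the extra remarks you add on boundedness and the positivity-improving definition are consistent but belong to the subsequent Perron--Frobenius step rather than this lemma.
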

\begin{proof}[Proof of~\clm{W_pos_improv}]
Since $X\mapsto V(X,Y)$ is $|X|$-strictly decreasing, due to conditions~\eqref{cond_sym_V_M_k}, and $x+x'>|x-x'|$ on $(\R_+^*)^2$, we obtain, for $x,x'>0$ and $(Y,Y')\in\left(\R^2\right)^2$, that $V(x-x',Y-Y')-V(x+x',Y-Y')>0$. Moreover $Q>0$. Thus, $-W_{(-)}$ is positivity improving on $L^2_{x-}(\R_+^*\times\R^2)$.
\end{proof}

\subsubsection{Perron--Frobenius property}\label{Perron-Frobenius_property}
We can now prove that the three operators ${\mathfrak L}_Q^{x-}$, ${\mathfrak L}_Q^{y-}$ and ${\mathfrak L}_Q^{z-}$ verify a Perron--Frobenius property.
\begin{prop}[Perron--Frobenius properties]\label{symm_Perron_fro_ppty}The operators ${\mathfrak L}_Q^{x-}$, ${\mathfrak L}_Q^{y-}$ and ${\mathfrak L}_Q^{z-}$ are self-adjoint on $L^2_-(\R_+^*\times\R^2)$ with domain $H^2_-(\R_+^*\times\R^2)$ and bounded below.

Moreover they have the Perron--Frobenius property: if $\lambda^{x-}_0$ (resp. $\lambda^{y-}_0$ and $\lambda^{z-}_0$) denotes the lowest eigenvalue of ${\mathfrak L}_Q^{x-}$ (resp. ${\mathfrak L}_Q^{y-}$ and ${\mathfrak L}_Q^{z-}$), then $\lambda^{x-}_0$ (resp. $\lambda^{y-}_0$ and $\lambda^{z-}_0$) is simple and the corresponding eigenfunction $\psi^{x-}_0$ (resp. $\psi^{y-}_0$ and $\psi^{z-}_0$) is strictly positive.
\end{prop}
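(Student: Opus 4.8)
The plan is to follow the classical Perron--Frobenius scheme for Schrödinger-type operators, as in~\cite{Lenzmann-09} (see also~\cite{Teschl-09}), the only genuinely new ingredient being the positivity improving property of the non-local piece $W_{(-)}$ established in~\clm{W_pos_improv}. I treat ${\mathfrak L}_Q^{x-}$; the operators ${\mathfrak L}_Q^{y-}$ and ${\mathfrak L}_Q^{z-}$ are handled identically, using that $V$ is also $e_2$- (resp. $e_3$-) symmetric strictly decreasing by the conditions~\eqref{cond_sym_V_M_k} and~\cpr{cond_sym_V_epsiM}, so that the $y$- (resp. $z$-) analogue of~\clm{W_pos_improv} holds.

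First, self-adjointness and the lower bound. Through the odd-function identification used above, $-\Delta$ acting on $x$-odd functions is the Dirichlet Laplacian $-\Delta_D$ on $\R_+^*\times\R^2$, which is self-adjoint on $H^2_{x-}(\R_+^*\times\R^2)$. The potential $\Phi_{(-)}=-V\star|Q|^2$ is bounded by~\clm{majoration_(fg)star-(V-W)} (and continuous, vanishing at infinity since $Q$ decays), while $W_{(-)}$ has the \emph{symmetric} integral kernel $-2\,Q(x,Y)\,[V(x-x',Y-Y')-V(x+x',Y-Y')]\,Q(x',Y')$, which belongs to $L^2\big((\R_+^*\times\R^2)^2\big)$ because $Q$ decays exponentially and $|z|^{-2}$ is locally integrable on $\R^3$; hence $W_{(-)}$ is a bounded self-adjoint Hilbert--Schmidt operator. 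Therefore ${\mathfrak L}_Q^{x-}=-\Delta_D+1+\Phi_{(-)}+W_{(-)}$ is self-adjoint on $H^2_{x-}(\R_+^*\times\R^2)$ and bounded below, and since $\Phi_{(-)}$ and $W_{(-)}$ are $-\Delta_D$-compact, Weyl's theorem gives $\sigma_{\mathrm{ess}}({\mathfrak L}_Q^{x-})=[1,\infty)$. Moreover, differentiating~\eqref{Pekar_euler_lagrange_notnormalize} with respect to $x$ yields ${\mathfrak L}_Q(\partial_x Q)=0$, and since $Q$ is smooth, even and strictly decreasing in $|x|$, the function $-\partial_x Q$ is strictly positive on $\R_+^*\times\R^2$ and belongs to $H^2_{x-}(\R_+^*\times\R^2)$; thus $0$ is an eigenvalue of ${\mathfrak L}_Q^{x-}$, lying strictly below its essential spectrum.

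Next, the heat semigroup is positivity improving. Write ${\mathfrak L}_Q^{x-}=A+W_{(-)}$ with $A:=-\Delta_D+1+\Phi_{(-)}$. By the reflection principle the Dirichlet heat kernel on $\R_+^*\times\R^2$ is strictly positive for $x,x'>0$, and since $1+\Phi_{(-)}$ is a bounded multiplication operator, the Feynman--Kac formula (or a norm-convergent Dyson expansion in $\Phi_{(-)}$, whose leading term $e^{t\Delta_D}$ is already positivity improving) shows that $e^{-tA}$ is positivity improving for every $t>0$. The Duhamel expansion $$e^{-t{\mathfrak L}_Q^{x-}}=e^{-tA}+\sum_{k\geq1}\int_{s_0+\dots+s_k=t}e^{-s_0 A}\,(-W_{(-)})\,e^{-s_1 A}\cdots(-W_{(-)})\,e^{-s_k A},$$ with integration over $s_0,\dots,s_k\geq0$, converges in operator norm because $W_{(-)}$ is bounded. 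By~\clm{W_pos_improv}, $-W_{(-)}$ is positivity improving, hence positivity preserving, so every term with $k\geq1$ is positivity preserving, whereas the $k=0$ term $e^{-tA}$ is positivity improving; therefore $e^{-t{\mathfrak L}_Q^{x-}}$ dominates a positivity improving operator and is itself positivity improving for all $t>0$.

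It only remains to invoke the Perron--Frobenius theorem for operators with positivity improving heat semigroup (see~\cite{Lenzmann-09}): any nonnegative eigenfunction corresponds to the lowest eigenvalue, which is then simple with a strictly positive eigenfunction. Since $-\partial_x Q>0$ is an eigenfunction of ${\mathfrak L}_Q^{x-}$ for the eigenvalue $0$, this forces $\lambda^{x-}_0=0$, its simplicity, and the strict positivity of $\psi^{x-}_0$, which is exactly the Perron--Frobenius property; the cases of ${\mathfrak L}_Q^{y-}$ and ${\mathfrak L}_Q^{z-}$ are identical. I expect the main obstacle to be the third step: one must keep in mind that after the odd-function identification the relevant free semigroup is the \emph{Dirichlet} heat semigroup on a half-space (strictly positive by reflection, not the free one on $\R^3$), and then check that positivity improving is preserved under the addition of the non-local term $W_{(-)}$ --- which works precisely because $-W_{(-)}$ need only be positivity \emph{preserving} at every order of the Duhamel series, the improving property being supplied by the leading term $e^{-tA}$ alone.
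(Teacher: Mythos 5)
Your proof is correct but follows a genuinely different route from the paper's, so a comparison is in order. To establish the positivity improving property of a suitable function of ${\mathfrak L}_Q^{x-}$, the paper works with the \emph{resolvent}: it uses the explicit Yukawa kernel of $(-\Delta+\mu)^{-1}$ together with the same half-space reflection argument (splitting the kernel into a difference evaluated at $(x-y,\cdot)$ and $(x+y,\cdot)$) to show positivity improving on $L^2_-(\R_+^*\times\R^2)$, then treats $\Phi_{(-)}+W_{(-)}$ \emph{together} as a single bounded positivity improving perturbation and absorbs it via the Neumann series $\left({\mathfrak L}_Q^-+\mu\right)^{-1}=\left(-\Delta+\mu+1\right)^{-1}\sum_p\left[-(\Phi_{(-)}+W_{(-)})(-\Delta+\mu+1)^{-1}\right]^p$ for $\mu\gg1$. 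You instead use the \emph{heat semigroup}: Dirichlet heat kernel positivity on the half-space by the reflection principle, then a Duhamel expansion in $W_{(-)}$ around $A=-\Delta_D+1+\Phi_{(-)}$, with the clean observation that the $k\geq1$ terms only need $-W_{(-)}$ to be positivity preserving while the $k=0$ term supplies the improving property. Both expansions ultimately feed off the same key input, namely \clm{W_pos_improv}, and both converge to Reed--Simon Thm XIII.43; the resolvent version has the minor advantage of not requiring any estimate on the growth of $e^{-tA}$ (the Neumann series converges simply because $\|(\Phi_{(-)}+W_{(-)})(-\Delta+\mu+1)^{-1}\|<1$ for $\mu$ large), whereas your semigroup version more transparently separates the local and non-local parts and records, as a useful byproduct, that $W_{(-)}$ is Hilbert--Schmidt and that $\sigma_{\mathrm{ess}}({\mathfrak L}_Q^{x-})=[1,\infty)$ --- which also explains why the lowest eigenvalue exists, a point the paper treats implicitly. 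On self-adjointness the two arguments agree in substance (bounded self-adjoint perturbation of the Dirichlet Laplacian via Kato--Rellich); the paper instead deduces it from the reduction of the operator on $L^2_-(\R^3)$. Your final paragraph ($\lambda_0^{x-}=0$ and $-\partial_x Q$ as the positive ground state) slightly anticipates the subsequent proof of~\cth{THM_Ker_L_symm}, but is harmless.
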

\begin{proof}[Proof of \cpr{symm_Perron_fro_ppty}] We follow the structure of the proof of \cite[Lemma 8]{Lenzmann-09}. Moreover, we only write the proof for ${\mathfrak L}_Q^{x-}$ which we denote ${\mathfrak L}_Q^{-}$ for simplicity. The argument is the same for the other directions.

\medskip

\noindent\textbf{Self-adjointness.} We have $Q \in H^2(\R^3)\subset C^0(\R^3)\cap L^2(\R^3)\cap L^\infty(\R^3)$ and, by \eqref{aniso_propri_V}, $V\star \left| Q\right|^2$ is in $L^4(\R^3)\cap L^\infty(\R^3)$ since $V=V_2+V_4\in L^2(\R^3)+ L^4(\R^3)$. Defining, for any $f\in L^2_-(\R_+^*\times\R^2)$, $\tilde{f}\in L^2_-(\R^3)$ by $f(\cdot,z)=\tilde{f}(\cdot,z)$ for $z\geq0$, we have $2\pscal{f,g}_{L^2_-(\R_+^*\times\R^2)}={\langle \tilde{f},\tilde{g} \rangle}_{L^2_-(\R^3)}$ and so $\Phi_{(-)}+1$ is bounded on $L_-^2(\R_+^*\times\R^2)$. Moreover, by Young inequalities, for any $\xi\in L_-^2(\R_+^*\times\R^2)$,
\begin{align*}
&\normSM{V\star (Q\tilde{\xi})}_{L^\infty}\leq\left(\norm{V_4}_{L^4}\norm{ Q}_{L^4} + \norm{V_2}_{L^2}\norm{ Q}_{L^\infty}\right)\normSM{\tilde{\xi}}_{L^2}
\end{align*} holds. Thus, for $p\in[2,\infty]$, we have $$\norm{W_{(-)}\xi}_{L^p(\R_+^*\times\R^2)}\leq2 \norm{Q}_{L^p(\R_+^*\times\R^2)}\normSM{V\star( Q\tilde{\xi})}_{L^\infty(\R_+^*\times\R^2)}\leq \norm{Q}_{L^p}\normSM{V\star( Q\tilde{\xi})}_{L^\infty}$$ and $W_{(-)}\xi\in L^2_-(\R_+^*\times\R^2)\cap L^\infty(\R_+^*\times\R^2)$. Finally, $1+\Phi_{(-)}+W_{(-)}$ and, thus, ${\mathfrak L}_Q^-$ are bounded below on $L_-^2(\R_+^*\times\R^2)$.

Finally, we deduce the self-adjointness of the operator ${\mathfrak L}_Q^-$ on $L^2_-(\R_+^*\times\R^2)$ with domain $H^2_-(\R_+^*\times\R^2)$ from the self-adjointness of the operator ${\mathfrak L}_Q^-$ on $L^2_-(\R^3)$ with domain $H^2_-(\R^3)$. 

\medskip

\noindent\textbf{Positivity improving.}
We know (see~\cite{LieLos-01} for example) that $$(-\Delta+\mu)^{-1}\xi(X)=\frac1{4\pi}\int_{\R^3}{\frac{e^{-\sqrt{\mu}|X-Y|}}{|X-Y|}\xi(Y) \dd Y}, \qquad \forall \mu>0,\; \forall \xi\in L^2(\R^3).$$ Consequently, for $\xi\in L_-^2(\R_+^*\times\R^2)$ and $(x,\tilde{x})\in\R_+^*\times\R^2$, we have $$(-\Delta+\mu)^{-1}\xi(x,\tilde{x})=\frac1{4\pi}\int_{\R_+^*\times\R^2}{\left[\frac{e^{-\sqrt{\mu}|(x-y,\tilde{x}-\tilde{y})|}}{|(x-y,\tilde{x}-\tilde{y})|} - \frac{e^{-\sqrt{\mu}|(x+y,\tilde{x}-\tilde{y})|}}{|(x+y,\tilde{x}-\tilde{y})|}\right]\xi(y,\tilde{y}) \dd y \dd \tilde{y}}.$$ Thus, with the same arguments as in the proof of~\clm{W_pos_improv}, $(-\Delta + \mu)^{-1}$ is positivity improving on $L^2_-(\R_+^*\times\R^2)$ for all $\mu>0$. Moreover, $-(\Phi_{(-)} + W_{(-)})$ is positivity improving on $L^2_-(\R_+^*\times\R^2)$ since $-\Phi_{(-)}$ is a positive multiplication operator and $-W_{(-)}$ is positivity improving by~\clm{W_pos_improv}. Then similarly to the proof of \cite[Lemma 8]{Lenzmann-09}, for $\mu\gg1$,we have $$\left({\mathfrak L}_Q^-+\mu\right)^{-1}=\left(-\Delta+\mu+1\right)^{-1}\cdot\left(1+(\Phi_{(-)} + W_{(-)})(-\Delta+\mu+1)^{-1}\right)^{-1}.$$ Since $(\Phi_{(-)} + W_{(-)})$ is bounded, we have $$\norm{(\Phi_{(-)} + W_{(-)})(-\Delta+\mu)^{-1}}_{L^2_-(\R_+^*\times\R^2)}<1,$$ for $\mu$ large enough. This implies, for $\mu\gg1$, by Neumann's expansion that $$\left({\mathfrak L}_Q^-+\mu\right)^{-1}=\left(-\Delta+\mu+1\right)^{-1}\sum\limits_{p=0}^\infty{\left[-(\Phi_{(-)} + W_{(-)})(-\Delta+\mu+1)^{-1}\right]^p},$$ which is consequently positivity improving on $L^2_-(\R_+^*\times\R^2)$ for $\mu\gg1$.

\medskip

\noindent\textbf{Conclusion.} We choose $\mu\gg1$ such that $({\mathfrak L}_Q^-+\mu)^{-1}$ is positivity improving and bounded. Then, by \cite[Thm XIII.43]{ReeSim4}, the largest eigenvalue $\sup\sigma(({\mathfrak L}_Q^-+\mu)^{-1})$ is simple and the associated eigenfunction $\psi^-_0\in L^2_-\left(\R_+^*\times\R^2\right)$ is strictly positive. Since, for any $\psi\in L^2_-\left(\R_+^*\times\R^2\right)$, having $\psi$ being an eigenfunction of ${{\mathfrak L}_Q^-}$ for the eigenvalue $\lambda$ is equivalent to having $\psi$ being an eigenfunction of $({\mathfrak L}_Q^-+\mu)^{-1}$ for the eigenvalue $(\lambda+\mu)^{-1}$, we have proved \cpr{symm_Perron_fro_ppty}.
\end{proof}

\subsubsection{Proof of \cth{THM_Ker_L_symm}}
Differentiating, with respect to $x$ the Euler-Lagrange equation $-\Delta Q + Q - (|Q|^2\star V)Q=0$~\eqref{Pekar_euler_lagrange_notnormalize}, we obtain ${\mathfrak L}_Q\partial_x Q\equiv0$. Moreover, $Q$ is positive \emph{symmetric strictly decreasing}, thus $\partial_x Q \in L^2_{x-}(\R^3)$, and this shows that ${\mathfrak L}_Q^{x-}\partial_x Q\equiv0$. Then, $Q>0$ being \emph{symmetric strictly decreasing}, $\partial_x Q<0$ on $\R_+^*\times\R^2$ and, by the Perron--Frobenius property, it is (up to sign) the unique eigenvector associated with the lowest eigenvalue of ${\mathfrak L}_Q^{x-}$, namely $\lambda^{x-}_0=0$. Since ${\mathfrak L}_Q^{x-}$ acts on $L^2_{x-}:=\bigoplus\limits_{\tau_y,\tau_z=\pm}{L^2_{-,\tau_y,\tau_z}}$, we obtain
\begin{equation*}
	\left\{\begin{aligned}
		\ker \left({\mathfrak L}_Q\right)_{|L^2_{-,+,+}(\R^3)}&=\vect\{\partial_x Q\};\\
		\ker \left({\mathfrak L}_Q\right)_{|L^2_{-,-,+}(\R^3)}&=\ker \left({\mathfrak L}_Q\right)_{|L^2_{-,+,-}(\R^3)}=\ker \left({\mathfrak L}_Q\right)_{|L^2_{-,-,-}(\R^3)}=\{0\}.
	\end{aligned}\right.
\end{equation*} This the exact same arguments for the two other directions we finally obtain that $$\ker {\mathfrak L}_Q=\vect\left\{\partial_x Q, \partial_y Q, \partial_z Q\right\} \bigoplus \ker \left({\mathfrak L}_Q\right)_{|L^2_{+,+,+}(\R^3)}.$$ Which concludes the proof of~\cth{THM_Ker_L_symm}.
\hfill\qedsymbol

\subsection{The linearized operator in the cylindrical case}\label{chapter_linearized_oper_cyl}
We now consider the case where the static dielectric matrix has exactly two identical eigenvalues. Obviously,~\cth{THM_Ker_L_symm} holds and it tells us that the kernel is reduced to the kernel on functions that are even in the $z$-direction and even in any direction of the plane orthogonal to $z$. However, this does not tell us that it is reduced to the kernel on cylindrical functions, which is what we are interested in. Indeed, instead of the kernel of ${\mathfrak L}_Q$ on $L^2_{+,+,+}(\R^3)$, we want the remaining term in the direct sum to be the kernel on $L^2_{\textrm{rad},+}(\R^3)$, namely the subset of cylindrical functions that are also even in the direction of their principal axis.

Unfortunately, our method fails to prove it for $V$ given by~\eqref{V_final_definition} since we are not able to prove a positivity improving property for the non local operator. Therefore, in this section, we will only consider the simplified model where $V$ is given by~\eqref{V_simplified_final_definition}.

We use the cylindrical coordinates $(r,z)$ where $e_z$ is the vector orthogonal to the plane of symmetry. Namely, $e_z=e_3$ if $s_3<s_2=s_1$ and $e_z=e_1$ if $s_3=s_2<s_1$. We then define the following subspaces
\begin{equation}\label{decompo_L2_cylin_definition_ensemble}
\begin{aligned}
L^2_\tau(\R^3)&:=\left\{f\in L^2(\R^3) \, | \,  f(x,y,- z)=\tau f(x,y,z)\right\}, &\textrm{ for } \tau=\pm \\
L^2_+(\R_+^*\times\R)&:=\left\{f\in L^2(\R_+^*\times\R,r\dd r\dd z) \, | \,  f(r,- z)= f(r,z)\right\} \\
L^2_{\textrm{rad},+}(\R^3)&:=L^2_+(\R_+^*\times\R)\otimes\mathds{1}.
\end{aligned}
\end{equation}

\begin{thm}\label{THM_Ker_L}

Let $V$ be given by~\eqref{V_simplified_final_definition} with $S$ having one eigenvalue of multiplicity $2$ and let $Q$ be a \emph{cylindrical-even decreasing} and positive solution of~\eqref{Pekar_euler_lagrange_notnormalize}. Then
\begin{equation}\label{Ker_of_L}
\ker {\mathfrak L}_Q=\vect\left\{\partial_x Q, \partial_y Q, \partial_z Q\right\} \oplus \ker \left({\mathfrak L}_Q\right)_{|L^2_{\emph{\textrm{rad}},+}(\R^3)}.
\end{equation}
For instance, $Q$ could be a minimizer for $I_S(\lambda)$.
\end{thm}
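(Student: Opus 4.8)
The plan is to adapt the Perron--Frobenius scheme of Section~\ref{chapter_linearized_oper_symm}, but now to decompose $L^2(\R^3)$ according to the angular momentum about the symmetry axis $e_z$ of $V$ and $Q$ rather than according to reflections alone. Writing points of $\R^3$ in cylindrical coordinates $(r,\theta,z)$ adapted to $e_z$, I would first note that $V\star|Q|^2$ is axially symmetric and that $Q$ is axially symmetric and $z$-even, so ${\mathfrak L}_Q$ commutes with the rotations about $e_z$ and with $z\mapsto-z$; hence it is reduced by each of the subspaces $\mathcal H_m:=\{f(r,z)e^{\i m\theta}:f\in L^2(\R_+^*\times\R,r\dd r\dd z)\}$, $m\in\Z$, and each $\mathcal H_m$ is further reduced by the $z$-parity splitting $\mathcal H_m=\mathcal H_{m,+}\oplus\mathcal H_{m,-}$. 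Identifying $\mathcal H_m$ with $L^2(\R_+^*\times\R,r\dd r\dd z)$ via the profile $f$, the reduced operator reads ${\mathfrak L}_Q^{(m)}=-\partial_r^2-\tfrac1r\partial_r+\tfrac{m^2}{r^2}-\partial_z^2+1-(V\star|Q|^2)+W^{(m)}$, where $W^{(m)}$ has integral kernel $-2Q(r,z)Q(r',z')G_m(r,r',z-z')$ with $G_m(r,r',\zeta):=\int_0^{2\pi}V_{\mathrm{cyl}}(r,r',\zeta,\phi)\cos(m\phi)\dd\phi$; since $V$ is the \emph{simplified} potential~\eqref{V_simplified_final_definition}, $V_{\mathrm{cyl}}(r,r',\zeta,\phi)=(A-B\cos\phi)^{-1/2}$ with $A=A(r,r',\zeta)>B=B(r,r')>0$ off a null set, $A$ strictly increasing in $|\zeta|$ and $B$ independent of $\zeta$. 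Self-adjointness of ${\mathfrak L}_Q^{(m)}$ and $\sigma_{\mathrm{ess}}({\mathfrak L}_Q^{(m)})=[1,+\infty)$ follow as in the proof of~\cpr{symm_Perron_fro_ppty}.

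The key step is the positivity and monotonicity of the angular kernels, and this is exactly where the simplified model is essential. I claim that for all $m\ge0$, $r,r'>0$, $\zeta\in\R$:
\[
G_m(r,r',\zeta)>0,\qquad G_0(r,r',\zeta)>G_1(r,r',\zeta)>G_2(r,r',\zeta)>\cdots,\qquad |\zeta|\mapsto G_m(r,r',\zeta)\ \text{strictly decreasing}.
\]
To prove it, write $t^{-1/2}=\pi^{-1/2}\int_0^\infty e^{-st}s^{-1/2}\dd s$ and use the Jacobi expansion $e^{x\cos\phi}=\sum_{n\in\Z}I_n(x)e^{\i n\phi}$, where $I_n$ is the modified Bessel function, to get $G_m(r,r',\zeta)=2\sqrt{\pi}\int_0^\infty e^{-sA}I_m(sB)s^{-1/2}\dd s$; then positivity follows from $I_m>0$, strict decrease in $m$ from $I_0(x)>I_1(x)>I_2(x)>\cdots$, and strict decrease in $|\zeta|$ from $A$ being strictly increasing in $|\zeta|$. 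Equivalently, $(A-B\cos\phi)^{-1/2}$ is a completely monotone function of $A-B\cos\phi$, hence has nonnegative angular Fourier coefficients --- a property which genuinely fails for a \emph{difference} of two inverse-distance kernels, so this argument does not apply to~\eqref{V_final_definition}. The same complete monotonicity applied to $t\mapsto e^{-\sqrt\mu\sqrt t}t^{-1/2}$ shows that the $m$-th angular component of the three-dimensional resolvent kernel is nonnegative, whence $(-\Delta^{(m)}+\mu)^{-1}$ is positivity improving on $L^2(\R_+^*\times\R,r\dd r\dd z)$ for all $\mu>0$, and on the $z$-odd sector after folding, using the decrease in $|\zeta|$.

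With these facts I would run the Perron--Frobenius argument of~\cpr{symm_Perron_fro_ppty} in each sector: since $-(V\star|Q|^2)$ is a nonnegative multiplication operator and $-W^{(m)}$ has a strictly positive kernel hence is positivity improving, a Neumann series for $\mu\gg1$ gives that $({\mathfrak L}_Q^{(m)}+\mu)^{-1}$ is positivity improving, so by~\cite[Thm.~XIII.43]{ReeSim4} the bottom eigenvalue $\lambda_0^{(m)}$ is simple with strictly positive eigenfunction (and likewise on each $z$-odd sector). Differentiating $-\Delta Q+Q-(|Q|^2\star V)Q=0$ with respect to $x,y$ yields ${\mathfrak L}_Q^{(1)}(\partial_rQ)=0$, and with respect to $z$ yields ${\mathfrak L}_Q^{(0,-)}(\partial_zQ)=0$. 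Since $Q$ is cylindrical-even \emph{strictly} decreasing --- the strict monotonicity of the positive analytic solution $Q$ being obtained as in the proof of~\clm{psi_symetrisation_strict_decr} --- we have $\partial_rQ<0$ on $\R_+^*\times\R$ and $\partial_zQ<0$ for $z>0$, so $-\partial_rQ$ and $-\partial_zQ$ are the Perron--Frobenius ground states, giving $\lambda_0^{(1)}=0$ with $\ker {\mathfrak L}_Q^{(1)}=\vect\{\partial_rQ\}$ and $\lambda_0^{(0,-)}=0$ with $\ker {\mathfrak L}_Q^{(0,-)}=\vect\{\partial_zQ\}$. Recombining $m=\pm1$ accounts for $\vect\{\partial_xQ,\partial_yQ\}$, the sector $\mathcal H_{0,-}$ for $\vect\{\partial_zQ\}$, and the sector $\mathcal H_{0,+}$ contributes precisely $\ker({\mathfrak L}_Q)_{|L^2_{\mathrm{rad},+}(\R^3)}$.

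Finally I would show $\lambda_0^{(m)}>0$ for $|m|\ge2$, so that $\ker{\mathfrak L}_Q^{(m)}=\{0\}$ there (this also kills the $z$-odd parts of the sectors $|m|=1$ by simplicity of $\lambda_0^{(1)}$; alternatively, \cth{THM_Ker_L_symm} disposes at once of all sectors other than $\mathcal H_{0,+}$, $\mathcal H_{0,-}$ and $\mathcal H_{\pm1}$). Fix $|m|\ge2$ and $0\ne f$ in the form domain of ${\mathfrak L}_Q^{(m)}$, which coincides with that of ${\mathfrak L}_Q^{(1)}$, and use the diamagnetic inequality $|\nabla|f||\le|\nabla f|$ for the kinetic and centrifugal terms, together with the pointwise positivity of the kernels of $-W^{(m)}$ and $-W^{(1)}$ to replace $f$ by $|f|$ in the non-local part:
\[
\pscal{f,{\mathfrak L}_Q^{(m)}f}\ \ge\ \pscal{|f|,{\mathfrak L}_Q^{(1)}|f|}+\pscal{|f|,\tfrac{m^2-1}{r^2}|f|}+\pscal{|f|,(W^{(m)}-W^{(1)})|f|}.
\]
The first term is $\ge\lambda_0^{(1)}\norm{|f|}_2^2=0$; the last equals $\pscal{|f|,(-W^{(1)})|f|}-\pscal{|f|,(-W^{(m)})|f|}\ge0$ because the kernel $2QQ'G_1$ of $-W^{(1)}$ dominates pointwise the kernel $2QQ'G_m$ of $-W^{(m)}$ and $|f|\ge0$; and the middle term is strictly positive since $f\not\equiv0$. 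Hence $\pscal{f,{\mathfrak L}_Q^{(m)}f}>0$ for every nonzero $f$ in the form domain, and since $\lambda_0^{(m)}$ is either an isolated eigenvalue (hence attained) or equals $\inf\sigma_{\mathrm{ess}}=1$, we conclude $\lambda_0^{(m)}>0$. Assembling the contributions of all sectors yields~\eqref{Ker_of_L}. The main obstacle is the positivity and monotonicity of the $G_m$: for a single inverse-distance kernel this reduces to classical positivity of modified Bessel functions, but for the difference~\eqref{V_final_definition} no such sign is available and $-W^{(m)}$ need not be positivity improving in the higher angular sectors --- which is precisely why the theorem is stated only for the simplified model.
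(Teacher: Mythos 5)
Your proposal is correct and follows the same overall scheme as the paper's proof: decompose $L^2(\R^3)$ into cylindrical angular modes combined with $z$-parity, show each sector reduces ${\mathfrak L}_Q$, run Perron--Frobenius on the negative sectors, identify $\lambda_0^{(1)}=0$ and $\lambda_0^{(0,-)}=0$ via $\partial_r Q<0$ and $\partial_z Q<0$, and then eliminate the sectors $|m|\ge2$. Where you differ from the paper is in the two key technical ingredients. For the positivity and monotonicity of the angular Fourier coefficients of $V$, you write $(A-B\cos\phi)^{-1/2}$ via the subordination formula $t^{-1/2}=\pi^{-1/2}\int_0^\infty e^{-st}s^{-1/2}\dd s$ and the Jacobi--Anger expansion $e^{x\cos\phi}=\sum_n I_n(x)e^{\i n\phi}$, getting $G_m=2\sqrt{\pi}\int_0^\infty e^{-sA}I_m(sB)s^{-1/2}\dd s$; positivity then comes from $I_m>0$ and, crucially, the domination $G_1>G_m$ for $m\ge2$ comes from the classical but nontrivial Bessel monotonicity $I_1(x)>I_m(x)$ for $x>0$, which you cite rather than prove. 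The paper instead proves positivity by a Legendre-type expansion with explicit positive coefficients (\clm{Taylor_expansion_Legendre}) and proves $v_1>v_n$ by an explicit manipulation of the integral $\int_0^\pi g(\theta)(\cos n\theta-\cos\theta)\,\dd\theta$ using the sign pattern of $\cos n\theta-\cos\theta$ on $[0,\pi]$; both steps are self-contained. For the elimination of the higher modes, you use the diamagnetic inequality to reduce to $|f|\ge0$ and bound $\pscal{f,{\mathfrak L}_Q^{(m)}f}$ below by the sum of $\pscal{|f|,{\mathfrak L}_Q^{(1)}|f|}\ge0$, the strictly positive centrifugal term $(m^2-1)\int r^{-2}|f|^2$, and the nonnegative term coming from $G_1-G_m>0$, concluding $\lambda_0^{(m)}>0$ because $\inf\sigma({\mathfrak L}_Q^{(m)})$ is either attained or $\ge\inf\sigma_{\mathrm{ess}}=1$. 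The paper instead evaluates the quadratic-form difference ${\mathfrak L}_1^+-{\mathfrak L}_n^+$ on the Perron--Frobenius ground state $\phi^n>0$ of ${\mathfrak L}_n^+$. Your route is a small simplification in that it does not need Perron--Frobenius for $m\ge2$; the paper's route is slightly more self-contained.

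One genuine (though inessential) error: your parenthetical claim that ``\cth{THM_Ker_L_symm} disposes at once of all sectors other than $\mathcal H_{0,+}$, $\mathcal H_{0,-}$ and $\mathcal H_{\pm1}$'' is false. The cosine modes with $m$ even, $m\ge2$, and $z$-even lie in $L^2_{+,+,+}$, about which \cth{THM_Ker_L_symm} says nothing; this is precisely why the present theorem has independent content and why the $|m|\ge2$ estimate cannot be skipped. Since you only offered this as an alternative to your main argument, it does not invalidate the proof, but the remark should be removed or corrected to refer only to those higher sectors that fall in the $L^2_{\tau_x,\tau_y,\tau_z}$ spaces with at least one minus sign.
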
 Several parts of the proof of this theorem being identical to the ones in the proof of the~\cth{THM_Ker_L_symm}, we will only give the details for the parts that differ.

\subsubsection{Cylindrical decomposition}
Since $V$ is \emph{cylindrical-even strictly decreasing} by \cpr{cond_sym_V_epsiM} and since minimizers are \emph{cylindrical-even strictly decreasing} by \cpr{sym_minimiseurs_sous_condition}, ${\mathfrak L}_Q$ commutes with rotation in the plane of symmetry. Let us then introduce the direct sum decomposition
\begin{equation}\label{decompo_L2_cylin}
L^2(\R^3)=L^2_-(\R^3)\oplus\bigoplus\limits_{n\geq0,\sigma=\pm}{L^2_+(\R_+^*\times\R) \otimes Y^{\sigma}_{n}},
\end{equation} with
\begin{equation}\label{Def_Y_n}
\left\{\begin{aligned}
Y^+_{0}&\equiv(2\pi)^{-\frac12}; \; Y^-_{0}\equiv0 ;\\
Y^+_{n}&\equiv\pi^{-\frac12}\cos(n\cdot); \; Y^-_{n}\equiv\pi^{-\frac12}\sin(n\cdot), \;\; \textrm{ for } n\geq1.
\end{aligned}\right.
\end{equation} The operator ${\mathfrak L}_Q$ stabilizes $L^2_-(\R^3)$ and the spaces $L^2_+(\R_+^*\times\R) \otimes Y^{\sigma}_{n}$.

Let us immediately decompose the potential $V$ in order to give the fundamental property in the cylindrical case (\cpr{Cyl_strict_pos_v_n} below), which is what allows us to adapt the original work of Lenzmann, namely the strict positivity of each $z$-odd terms of the cylindrical decomposition of $V$. For any $\mathbf{r}=(r,\phi,z)\in\R^3$ and $\mathbf{r'}=(r',\phi',z')\in\R^3$, defining $\mathbf{\rho}:=(r-r',0,z-z')$ and $\theta:=\phi-\phi'$, we have, as soon as $(r',z')\neq(r,z)$:
\begin{align}\label{V_explicit}
\theta\mapsto V(\mathbf{r}-\mathbf{r'})&=\frac1{\sqrt{\left|(1-S)^{-1}\rho\right|^2+2(1-s_2)^{-2}rr'(1-\cos\theta)}}>0,
\end{align} which is in $C^\infty(\R)$, $2\pi$-periodic and even. Thus, for any $\mathbf{r}\neq\mathbf{r'}$,
\begin{align*}
V(\mathbf{r}-\mathbf{r'})&=\sum\limits_{n=0}^\infty{v_n(r,r',z-z')Y^+_{n}(\phi-\phi')},
\end{align*} with
\begin{align}\label{def_coeff_decomp}
v_n(r,r',z-z')=\displaystyle{\int_{-\pi}^\pi}{V(\mathbf{r}-\mathbf{r'})Y^+_{n}(\theta)\dd\theta }=2\displaystyle{\int_{0}^\pi}{V(\mathbf{r}-\mathbf{r'})Y^+_{n}(\theta)\dd\theta }.
\end{align}

\begin{prop}\label{Cyl_strict_pos_v_n}Let $V$ be given by~\eqref{V_simplified_final_definition}, the $Y^+_n$'s by~\eqref{Def_Y_n} and the $v_n$'s by~\eqref{def_coeff_decomp} for any $(n,r,r',z,z')\in\N\times\R_+^*\times\R_+^*\times\R\times\R$. Then $$v_n(r,r',z-z')>0, \quad\quad \forall (n,r,r',z,z')\in\N\times\R_+^*\times\R_+^*\times\R\times\R.$$
\end{prop}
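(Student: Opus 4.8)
The plan is to exploit the explicit formula~\eqref{V_explicit} for $\theta\mapsto V(\mathbf{r}-\mathbf{r'})$ and to reduce the positivity of each cosine–coefficient to the classical positivity of modified Bessel functions of the first kind. The case $n=0$ is immediate: up to a positive constant, $v_0(r,r',z-z')=\int_{-\pi}^{\pi}V(\mathbf{r}-\mathbf{r'})\,\dd\theta$, which is $>0$ because $V>0$ by~\eqref{aniso_propri_V_large}. So I focus on $n\geq1$, and first on the generic configuration $(r',z')\neq(r,z)$, where $a:=|(1-S)^{-1}\rho|^{2}>0$; setting also $c:=2(1-s_2)^{-2}rr'>0$, formula~\eqref{V_explicit} reads $V(\mathbf{r}-\mathbf{r'})=\bigl(a+c(1-\cos\theta)\bigr)^{-1/2}$.

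The key step is a subordination identity (an instance of $u\mapsto u^{-1/2}$ being completely monotone): using $t^{-1/2}=\pi^{-1/2}\int_{0}^{\infty}e^{-ts}s^{-1/2}\,\dd s$ with $t=a+c(1-\cos\theta)$, I obtain
\begin{equation*}
V(\mathbf{r}-\mathbf{r'})=\frac1{\sqrt\pi}\int_{0}^{\infty}e^{-(a+c)s}\,e^{cs\cos\theta}\,s^{-1/2}\,\dd s,
\end{equation*}
which linearizes the $\cos\theta$–dependence into an exponential. Inserting this into~\eqref{def_coeff_decomp} and exchanging the two integrations — legitimate by Fubini since $a>0$ ensures $\int_{0}^{\pi}\int_{0}^{\infty}e^{-(a+c)s}e^{cs}s^{-1/2}\,\dd s\,\dd\theta<\infty$ — I get
\begin{equation*}
v_n(r,r',z-z')=\frac{2}{\pi}\int_{0}^{\infty}e^{-(a+c)s}\,s^{-1/2}\,\Bigl(\int_{0}^{\pi}e^{cs\cos\theta}\cos(n\theta)\,\dd\theta\Bigr)\,\dd s .
\end{equation*}
The inner integral equals $\pi I_n(cs)$ by the standard integral representation of the modified Bessel function, and $I_n(\zeta)=\sum_{k\geq0}\frac{(\zeta/2)^{n+2k}}{k!\,(n+k)!}>0$ for $\zeta>0$. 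Since $cs>0$ for $s>0$, every factor of the remaining $s$–integral is strictly positive, hence $v_n(r,r',z-z')>0$. (The same computation with $I_0$ re-proves the case $n=0$.)

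Finally, in the degenerate configuration $(r',z')=(r,z)$, i.e. $a=0$, one has $V(\mathbf{r}-\mathbf{r'})=\bigl(c(1-\cos\theta)\bigr)^{-1/2}\sim\sqrt{2/c}\,|\theta|^{-1}$ as $\theta\to0$, so $\theta\mapsto V(\mathbf{r}-\mathbf{r'})$ is not integrable on $(-\pi,\pi)$ and $v_n=+\infty$, which is still $>0$; equivalently the $s$–integral above diverges at $s=\infty$ because $e^{-cs}I_n(cs)\sim(2\pi cs)^{-1/2}$. This diagonal set is of measure zero and is irrelevant for the later use of $v_n$ inside $Q^2$–weighted integrals. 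I do not anticipate a serious obstacle: the only points requiring care are the Fubini interchange, controlled by the factor $e^{-as}$, and invoking the Bessel integral identity together with the positivity of $I_n$ on $(0,\infty)$ — the rest is the explicit expression~\eqref{V_explicit}, which is exactly why this argument works for the simplified potential~\eqref{V_simplified_final_definition} but not for the difference of Coulomb kernels in~\eqref{V_final_definition}.
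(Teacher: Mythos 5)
Your proof is correct, but it takes a genuinely different route from the paper's. The paper writes $V(\mathbf{r}-\mathbf{r'})=\tfrac{2}{m^+}\bigl(1-2t\cos\theta+t^2\bigr)^{-1/2}$ with $t=m^-/m^+\in[0,1)$, then expands that classical Legendre-type generating function in a cosine series (\clm{Taylor_expansion_Legendre}) with explicit, manifestly positive binomial coefficients $\beta_{n,k}$, and sums to get $v_n=\tfrac{2}{m^+}\sum_{k\geq n}\beta_{n,k}(m^-/m^+)^k>0$. You instead apply the subordination $t^{-1/2}=\pi^{-1/2}\int_0^\infty e^{-ts}s^{-1/2}\,\dd s$ to linearize the $\cos\theta$-dependence, swap the order of integration (justified by $a>0$), and recognize the inner integral as $\pi I_n(cs)$, reducing everything to the strict positivity of the modified Bessel function $I_n$ on $(0,\infty)$ — which is immediate from its power series. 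This is arguably cleaner: it avoids the explicit coefficient bookkeeping and double-sum rearrangement, and it leans on $\cos$-coefficients of $e^{x\cos\theta}$ being the $I_n$'s, a fact the paper already records as~\eqref{exp_and_modified_bessel_first_kind} and uses just below (in \clm{Computation_integral_kernel_cylindrical_laplacian}); so your argument unifies the two positivity steps of Section~\ref{chapter_linearized_oper_cyl} under a single Bessel-function mechanism. What the paper's route buys in exchange is a closed-form series for $v_n$ in the variable $m^-/m^+$, which the author then exploits in \hyperref[rmq_V_M_first_order_positivity]{Remark~\ref*{rmq_V_M_first_order_positivity}} to explain concretely why the sign control fails for the true model~\eqref{V_final_definition}; your representation shows the same obstruction (the difference of two subordinated kernels with different $c$'s has no sign), but less explicitly. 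Your treatment of the diagonal $a=0$ is a welcome clarification that the paper leaves implicit (there $t=1$ and $\sum_k\beta_{n,k}$ diverges, matching your $v_n=+\infty$), and is correctly noted to be harmless.
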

\begin{proof}[Proof of~\cpr{Cyl_strict_pos_v_n}] Defining for $r,r'>0$,
\begin{align*}
m^\pm&:=\sqrt{\left(\frac{r+r'}{1-s_2}\right)^2+\left(\frac{Z}{1-s_z}\right)^2}\pm\sqrt{\left(\frac{r-r'}{1-s_2}\right)^2+\left(\frac{Z}{1-s_z}\right)^2}\\
	&=\max\limits_{\phi-\phi'}{\left|(1-S)^{-1}(\mathbf{r}-\mathbf{r'})\right| }\pm\min\limits_{\phi-\phi'}{\left|(1-S)^{-1}(\mathbf{r}-\mathbf{r'})\right| }>0,
\end{align*}
we note that $m^+>m^-$ and obtain
\begin{equation*}
V(\mathbf{r}-\mathbf{r'})=\frac{2}{m^+}\frac1{\sqrt{1-2\frac{m^-}{m^+}\cos\theta+\left(\frac{m^-}{m^+}\right)^2}}.
\end{equation*} We now give the explicit expansion of $(1-2t\cos\theta+t^2)^{-1/2}$ in the following lemma.
\begin{lemme}\label{Taylor_expansion_Legendre}
For $(0,1)\neq(\theta,t)\in \R\times[0,1]$, we have
\begin{equation}
\begin{aligned}
	\frac{1}{\sqrt{1-2t\cos\theta+t^2}}=&\sum\limits_{k=0}^\infty{\beta_{0,2k}t^{2k}} + \sum\limits_{n=1}^\infty{\sum\limits_{k=n}^\infty{\beta_{2n,2k} t^{2k}}Y^+_{2n}(\theta)}\\
	&+ \sum\limits_{n=0}^\infty{\sum\limits_{k=n}^\infty{\beta_{2n+1,2k+1} t^{2k+1}}Y^+_{2n+1}(\theta)}.
\end{aligned}
\end{equation}
with
\begin{equation*}
	\left\{
		\begin{aligned}
			\beta_{0,2k}&=\sqrt{2\pi}\frac{\binom{2k}{k}^2}{2^{4k}}>0, &\;\;\;0\leq k; \\
			\beta_{2n,2k}&=2\sqrt{\pi}\frac{\binom{2(k+n)}{k+n}\binom{2(k-n)}{k-n}}{2^{4k}}>0, &\;\;\;0< n\leq k; \\
			\beta_{2n+1,2k+1}&=2\sqrt{\pi}\frac{\binom{2(k+n+1)}{k+n+1}\binom{2(k-n)}{k-n}}{2^{4k+2}}>0, &\;\;\;0\leq n\leq k.
		\end{aligned}
	\right.
\end{equation*}
\end{lemme}
\begin{proof}[Proof of~\clm{Taylor_expansion_Legendre}]
The proof of this lemma is entirely inspired by the original computation of Legendre\footnote{or Legendre and Laplace, according to a famous paternity controversy.} in his famous \emph{mémoire}~\cite{Leg-1784} where he introduced the polynomials that are nowadays called after him. Let us first rewrite the fraction, for $(0,1)\neq(\theta,t)\in \R\times[0,1]$: $$\frac{1}{\sqrt{1-2t\cos\theta+t^2}}=(1-e^{\i\theta}t)^{-1/2}(1-e^{-\i\theta}t)^{-1/2}.$$ Then, since $\Gamma(1/2-p)=\frac{(-4)^p p!}{(2p)!}\Gamma(1/2)$ and using the following expansion $$(1-x)^{-1/2}=\sum\limits_{p=0}^\infty{\frac{\Gamma(1/2)}{\Gamma(1/2-p)\Gamma(p+1)}(-x)^p}=\sum\limits_{p=0}^\infty{\frac{\binom{2p}{p}}{2^{2p}}x^p},$$ we obtain:
\begin{align*}
\frac{1}{\sqrt{1-2t\cos\theta+t^2}}&=\sum\limits_{(p,q) \in \N^2}{\frac{\binom{2p}{p}\binom{2q}{q}}{2^{2(p+q)}}e^{\i(p-q)\theta}t^{p+q}}\\
	&=\sum\limits_{\substack{k=0 \\ k\textrm{ even}}}^\infty{\sum\limits_{\substack{n=-k \\ n\textrm{ even}}}^k{\frac{\binom{k+n}{(k+n)/2}\binom{k-n}{(k-n)/2}}{2^{2k}}e^{\i n\theta}t^k}}\\
	&\phantom{\leq} +\sum\limits_{\substack{k=1 \\ k\textrm{ odd}}}^\infty{\sum\limits_{\substack{n=-k \\ n\textrm{ odd}}}^k{\frac{\binom{k+n}{(k+n)/2}\binom{k-n}{(k-n)/2}}{2^{2k}}e^{\i n\theta}t^k}}\\
	&=\sum\limits_{k=0}^\infty{\frac{\binom{2k}{k}^2}{2^{4k}}t^{2k}}+\sum\limits_{k=0}^\infty{\sum\limits_{n=1}^k{\frac{\binom{2(k+n)}{k+n}\binom{2(k-n)}{k-n}}{2^{4k}}2\cos(2n\theta)t^{2k}}}\\
	&\phantom{\leq} +\sum\limits_{k=0}^\infty{\sum\limits_{n=0}^k{\frac{\binom{2(k+n+1)}{k+n+1}\binom{2(k-n)}{k-n}}{2^{4k+2}}2\cos((2n+1)\theta)t^{2k+1}}}\\
	&=\sum\limits_{n=0}^\infty{\frac{\binom{2k}{k}^2}{2^{4k}}t^{2k}}+\sum\limits_{n=1}^\infty{\sum\limits_{k=n}^\infty{\frac{\binom{2(k+n)}{k+n}\binom{2(k-n)}{k-n}}{2^{4k}}2\cos(2n\theta)t^{2k}}}\\
	&\phantom{\leq} +\sum\limits_{n=0}^\infty{\sum\limits_{k=n}^\infty{\frac{\binom{2(k+n+1)}{k+n+1}\binom{2(k-n)}{k-n}}{2^{4k+2}}2\cos((2n+1)\theta)t^{2k+1}}}.
\end{align*}
With the definition of the $Y^+_n$'s, this concludes the proof of~\clm{Taylor_expansion_Legendre}.
\end{proof}

Defining all the others $\beta_{p,q}$'s to be zero, this proves~\cpr{Cyl_strict_pos_v_n}: 
\begin{equation*}
v_n(r,r',z-z')=\frac{2}{m^+}\sum\limits_{k=n}^\infty{\beta_{n,k}\left(\frac{m^-}{m^+}\right)^k}>0,
\end{equation*} for $n\geq0$, $r,r'>0$ and $z,z'\in\R$. Moreover, for $\mathbf{r}\neq\mathbf{r'}$, we have $$V(\mathbf{r}-\mathbf{r'})=\sum\limits_{n=0}^\infty{\frac{2}{m^+}\left(\sum\limits_{k=n}^\infty{\beta_{n,k}\left(\frac{m^-}{m^+}\right)^k}\right)Y^+_n(\theta)}.$$
\end{proof}

\begin{rmq}\label{rmq_V_M_first_order_positivity}(The anisotropic potential~\eqref{V_final_definition})
If we define $v_n$ in a similar fashion for the true model based on~\eqref{V_final_definition}, even with the conditions $(\ref{cond_sym_V_M_default}_k)$ and $(\ref{cond_sym_V_M_2_3})$, the $v_n$'s have no sign for $n\geq2$, since we have
$$v_n(r,r',z-z')=\sum\limits_{k=n}^\infty{2\beta_{n,k}\left(\frac{1}{m^+_{Id}}\left(\frac{m^-_{Id}}{m^+_{Id}}\right)^k-\frac{1}{m^+_M}\left(\frac{m^-_M}{m^+_M}\right)^k\right)}$$ which changes sign for $n\geq2$. This is why our method fails if $V$ is given by~\eqref{V_final_definition}. Note that the strict positivity however holds true for $v_0$ and for $v_1$ if $r,r'>0$, which is straightforward using~\eqref{def_coeff_decomp}.
\end{rmq}

As proved in the last Section, $L^2_-(\R^3)$, with corresponding projectors $P^-$, reduces ${\mathfrak L}_Q$. We claim that the spaces $L^2_+(\R_+^*\times\R) \otimes Y^{\sigma}_{n}$, with corresponding projectors $$P^{+}_{n,\sigma}\psi(r,\phi,z)=\left(\displaystyle{\int_0^{2\pi}}{\frac{\psi(r,\phi',z)+\psi(r,\phi',-z)}2Y_n^\sigma(\phi') \dd\phi' }\right)Y_n^\sigma(\phi),$$ also reduce ${\mathfrak L}_Q$. Given that $\left(V\star|Q|^2\right)\in L^2_{\textrm{rad},+}(\R^3)$, $\frac{\dd^2}{\dd\phi^2} Y^{\sigma}_{n}=-n^2Y^{\sigma}_{n}$ and
\begin{equation}\label{laplacien_cyl}
\Delta = \frac{\partial^2}{\partial r^2} + \frac{1}{r}\frac{\partial}{\partial r}+\frac{\partial^2}{\partial z^2} + \frac{1}{r^2}\frac{\partial^2}{\partial \phi^2},
\end{equation} we have
\begin{equation}\label{cyl_decomp_local_term}
\left[-\Delta +1 -\left(V\star|Q|^2\right)\right](fY_n^\sigma)=\left[-\Delta_{(n)}+1-\left(V\star|Q|^2\right)\right](f) Y_n^\sigma,
\end{equation} for any $f\in L^2_+(\R_+^*\times\R)$, and so belonging to $L^2_+(\R_+^*\times\R) \otimes Y^{\sigma}_n$, and where $$-\Delta_{(n)}:=-\frac{\partial^2}{\partial r^2} - \frac{1}{r}\frac{\partial}{\partial r}-\frac{\partial^2}{\partial z^2} + \frac{n^2}{r^2}.$$ Thus the reduction property follows for $-\Delta +1 -\left(V\star|Q|^2\right)$. Moreover, since $V\star(Q\cdot)$ and $P^+_{n,\sigma}$ are linear and using the decomposition $$\psi(r,\phi,z)=c^-(r,\phi,z)+\sum\limits_{n\geq0,\sigma=\pm}{c_{n,\sigma}^+(r,z) Y^{\sigma}_{n}(\phi)},$$ we have to prove that $$V\star(QP^+_{n',\sigma'}c^+_{n,\sigma} Y^{\sigma}_{n}) =P^+_{n',\sigma'}\left(V\star(Qc^+_{n,\sigma} Y^{\sigma}_{n})\right),$$ for any $n,n'\geq0$ and $\sigma,\sigma'=\pm$, in order to conclude. We have
\begin{equation}\label{cyl_decomp_nonlocal_term}
\begin{aligned}
&\left[V\star(Qc^+_{n,\sigma} Y^{\sigma}_{n})\right](r,\phi,z)\\
	&=\displaystyle{\int_{\R_+^*}}{\displaystyle{\int_{-\pi}^{\pi}}{\displaystyle{\int_{\R}}{ Q(r',z')c^+_{n,\sigma}(r',z')Y^{\sigma}_{n}(\phi')V(\mathbf{r}-\mathbf{r'})r' \dd z'} \dd\phi'} \dd r'} \\
	&=\left(\displaystyle{\int_{\R_+^*}}{\displaystyle{\int_{\R}}{ Q(r',z')c^+_{n,\sigma}(r',z')v_n(r,r',z-z')r'  \dd z'} \dd r'}\right)Y^{\sigma}_{n}(\phi).
\end{aligned}
\end{equation} Then using the parity of $v_n$ with respect to its third variable (which is straightforward with~\eqref{def_coeff_decomp}), we obtain $V\star(Qc^+_{n,\sigma} Y^{\sigma}_{n}) \in L^2_+(\R_+^*\times\R) \otimes Y^{\sigma}_{n}$ and the reduction property follows. Thus we can apply~
\cite[Lemma 2.24]{Teschl-09} which gives us that $${\mathfrak L}_Q={\mathfrak L}^-\oplus\bigoplus\limits_{n\geq0, \sigma=\pm}{{\mathfrak L}^+_{n,\sigma} },$$ with ${\mathfrak L}^-={\mathfrak L}_Q^{z-}$ being the same operator as before and each ${\mathfrak L}^+_{n,\sigma}$ a self-adjoint operator on $L^2_+(\R_+^*\times\R) \otimes Y^{\sigma}_n$ with domain $P^+_{n,\sigma}H^2(\R^3)$.
For shortness, we omit the $Q$ subscript in the decomposition ${\mathfrak L}_Q$.

Given~\eqref{cyl_decomp_local_term} and~\eqref{cyl_decomp_nonlocal_term}, for any $n\geq0$ we note ${\mathfrak L}^+_n$ the operator on $L^2_+(\R_+^*\times\R)$ such that ${\mathfrak L}^+_{n,+}(f Y^+_n)={\mathfrak L}^+_n(f) Y^+_n$ and ${\mathfrak L}^+_{n,-}(f Y^-_n)={\mathfrak L}^+_n(f) Y^-_n$. This operator is
\begin{equation*}\label{def_L_n}
{\mathfrak L}^+_n=-\Delta_{(n)}+1+\Phi+W_{(n)}
\end{equation*} where $\Phi$ is the strictly negative multiplication local potential, on $\R_+^*\times\R$, $$\Phi(r,z)=-\left(V\star| Q|^2\right)(r,z)=-\sqrt{2\pi}\displaystyle{\int_{\R_+^*\times\R}}{|Q(r',z')|^2 v_0(r,r',z-z') r'  \dd z' \dd r'}<0$$ and $W_{(n)}$ is the non-local operator, on $\R_+^*\times\R$,
\begin{align}\label{Cyl_W_n_def}
(W_{(n)}f)(r,z)&=-2 Q(r,z)\displaystyle{\int_{\R_+^*\times\R}}{Q(r',z')f(r',z') v_n(r,r',z-z') r'  \dd z' \dd r'}.
\end{align}

Similarly to the non-cylindrical case, we need to prove that $-W_{(n)}$ is positivity improving on $L^2_+(\R_+^*\times\R)$ and this is where the result of~\cpr{Cyl_strict_pos_v_n} is needed.
\begin{lemme}\label{Cyl_W_pos_improv}
For $n\geq0$, the operator $-W_{(n)}$ is positivity improving on $L^2_+(\R_+^*\times\R)$.
\end{lemme}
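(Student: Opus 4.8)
The plan is to run the same elementary argument as in the proof of \clm{W_pos_improv}, with the strict positivity of the kernel $v_n$ from \cpr{Cyl_strict_pos_v_n} playing the role that the $|X|$-strict decrease of $V$ played there. Acting on $f\in L^2_+(\R_+^*\times\R)$ one has, from~\eqref{Cyl_W_n_def},
$$(-W_{(n)}f)(r,z)=2\,Q(r,z)\displaystyle{\int_{\R_+^*\times\R}}{Q(r',z')\,f(r',z')\,v_n(r,r',z-z')\,r'\dd z'\dd r'}.$$
First I would recall the three facts that make the integrand have a sign: $Q>0$ everywhere (it is a positive solution of~\eqref{Pekar_euler_lagrange_notnormalize}, hence continuous and strictly positive); $r'>0$ on the integration domain; and, crucially, $v_n(r,r',z-z')>0$ for every $n\geq0$, every $(r,r')\in(\R_+^*)^2$ and every $(z,z')\in\R^2$, by \cpr{Cyl_strict_pos_v_n}.

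Then, given $f\in L^2_+(\R_+^*\times\R)$ with $f\geq0$ a.e.\ and $f\not\equiv0$, the function $(r',z')\mapsto Q(r',z')f(r',z')v_n(r,r',z-z')r'$ is nonnegative a.e.\ and strictly positive on the set $\{f>0\}$, which has strictly positive measure. Hence the integral is strictly positive for \emph{every} $(r,z)\in\R_+^*\times\R$, and multiplying by $2Q(r,z)>0$ gives $(-W_{(n)}f)(r,z)>0$ for every such $(r,z)$. Since the parity of $v_n$ in its third variable guarantees $-W_{(n)}f\in L^2_+(\R_+^*\times\R)$ (the reduction property already established), this is exactly the positivity improving property, and the lemma follows at once for all $n\geq0$.

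The only point requiring a word of care is well-definedness: one should check that $(-W_{(n)}f)$ is a genuine element of $L^2_+(\R_+^*\times\R)$, i.e.\ that the inner integral is finite for a.e.\ $(r,z)$ and the product with $Q$ is square-integrable. This follows from the $L^\infty$ bound on $V\star(Q\tilde f)$ used in the self-adjointness part of \cpr{symm_Perron_fro_ppty} (transported to cylindrical coordinates) together with $Q\in L^2\cap L^\infty$, so no new estimate is needed. I do not expect any real obstacle here: all the substance of the lemma has been absorbed into \cpr{Cyl_strict_pos_v_n}, whose proof via the Legendre-type expansion of \clm{Taylor_expansion_Legendre} is where the actual work lies; the present statement is then purely a sign bookkeeping argument identical in spirit to \clm{W_pos_improv}.
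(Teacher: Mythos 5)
Your proof is correct and follows exactly the same route as the paper: the entire content is that $v_n(r,r',z-z')>0$ for $r,r'>0$ (Proposition~\ref{Cyl_strict_pos_v_n}) together with $Q>0$, so that for $0\le f\not\equiv0$ the integrand is nonnegative and strictly positive on a positive-measure set, giving $(-W_{(n)}f)(r,z)>0$ pointwise. The extra remarks on well-definedness and on $-W_{(n)}f$ landing in $L^2_+(\R_+^*\times\R)$ are fine but redundant here, as they are already handled in the self-adjointness part of Proposition~\ref{cyl_Perron_fro_ppty}.
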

\begin{proof}[Proof of~\clm{Cyl_W_pos_improv}] Given the definition~\eqref{Cyl_W_n_def} of $-W_{(n)}$, the fact that the $v_n$'s are strictly positive as soon as $r,r'>0$ (by~\cpr{Cyl_strict_pos_v_n}) and that $Q>0$, it follows that $-W_{(n)}$ is positivity improving on $L^2_+(\R_+^*\times\R)$ for any $n\geq0$.
\end{proof}

\subsubsection{Perron--Frobenius property}\label{Perron-Frobenius_property_cyl}
We now prove that the ${\mathfrak L}^+_n$'s verify the Perron--Frobenius property.

\begin{prop}[Perron--Frobenius properties]\label{cyl_Perron_fro_ppty}For $n>0$, the ${\mathfrak L}^+_n$'s are essentially self-adjointness on $C_0^\infty(\R_+^*\times\R)$ and bounded below.

Moreover they have the Perron--Frobenius property: if $\lambda^{n}_0$ denotes the lowest eigenvalue of ${\mathfrak L}^+_n$, then $\lambda^{n}_0$ is simple and the corresponding eigenfunction $\psi^{n}_0$ is strictly positive.
\end{prop}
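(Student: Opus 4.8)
The plan is to run, in each cylindrical sector, the argument of~\cpr{symm_Perron_fro_ppty} (itself modelled on~\cite[Lemma~8]{Lenzmann-09}): first self-adjointness together with a lower bound, then the positivity improving property of $\left({\mathfrak L}^+_n+\mu\right)^{-1}$ for $\mu\gg1$, and finally an application of~\cite[Thm~XIII.43]{ReeSim4}. Two features distinguish the present case from Section~\ref{chapter_linearized_oper_symm}: the radial part $-\Delta_{(n)}$ now carries the centrifugal term $n^2/r^2$, which dictates the behaviour at $r=0$, and the positivity of the modified Bessel functions $I_n$ must take over the role played by Newton's theorem in the isotropic case (no reflection substitute being available here, compare Remark~\ref{rmq_V_M_first_order_positivity}).

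For the self-adjointness I would observe that, after the unitary change of unknown $f\mapsto r^{1/2}f$, the operator $-\Delta_{(n)}=-\partial_r^2-r^{-1}\partial_r-\partial_z^2+n^2r^{-2}$ on $L^2(\R_+^*\times\R,r\dd r\dd z)$ becomes $-\partial_r^2-\partial_z^2+\left(n^2-\tfrac14\right)r^{-2}$ on $L^2(\R_+^*\times\R,\dd r\dd z)$; since $n^2-\tfrac14\geq\tfrac34$ for every $n\geq1$, the endpoint $r=0$ is in the limit point case, so $-\Delta_{(n)}$ is essentially self-adjoint on $C_0^\infty(\R_+^*\times\R)$. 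This is exactly where the hypothesis $n>0$ is used: for $n=0$ one has $n^2-\tfrac14=-\tfrac14<\tfrac34$, the limit circle case, and a boundary condition at $r=0$ would be needed. By~\eqref{aniso_propri_V} we have $\Phi=-(V\star|Q|^2)\in L^\infty(\R_+^*\times\R)$, and the estimate established for $W_{(-)}$ in the proof of~\cpr{symm_Perron_fro_ppty} transfers verbatim --- with $0\leq|v_n|\leq 2v_0$ keeping the constants uniform in $n$ --- to show that $W_{(n)}$ is bounded on $L^p(\R_+^*\times\R)$ for every $p\in[2,\infty]$. Hence $1+\Phi+W_{(n)}$ is a bounded self-adjoint perturbation and Kato--Rellich yields the essential self-adjointness of ${\mathfrak L}^+_n$ on $C_0^\infty(\R_+^*\times\R)$ and its boundedness below.

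The heart of the proof is the positivity improving property of $\left(-\Delta_{(n)}+\mu\right)^{-1}$ on $L^2_+(\R_+^*\times\R)$ for $\mu>0$, which I would read off the heat kernel --- in the same spirit as~\cpr{Cyl_strict_pos_v_n}, but now for the Gaussian. The heat kernel of $-\Delta$ on $\R^3$ is $(4\pi t)^{-3/2}e^{-|X-X'|^2/(4t)}$ and, in cylindrical coordinates, $|X-X'|^2=r^2+r'^2+(z-z')^2-2rr'\cos(\phi-\phi')$; hence, using $e^{a\cos\theta}=I_0(a)+2\sum_{m\geq1}I_m(a)\cos(m\theta)$ with $I_m>0$ on $(0,\infty)$, the restriction of $e^{t\Delta}$ to the angular mode $Y^{\sigma}_n$ is the integral operator with strictly positive kernel proportional to $t^{-3/2}e^{-(r^2+r'^2+(z-z')^2)/(4t)}I_n\!\left(rr'/(2t)\right)$. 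Laplace-transforming, i.e. integrating against $e^{-\mu t}\dd t$ over $(0,\infty)$, and then symmetrising the resulting ($z$-even) kernel under $z'\mapsto-z'$ to land on $L^2_+(\R_+^*\times\R)$, preserves strict positivity; this gives the claim for $\left(-\Delta_{(n)}+\mu\right)^{-1}$ and, likewise, for $\left(-\Delta_{(n)}+\mu+1\right)^{-1}$. As $-\Phi$ is a strictly positive multiplication operator and $-W_{(n)}$ is positivity improving by~\clm{Cyl_W_pos_improv}, the bounded operator $-(\Phi+W_{(n)})$ is positivity improving, so for $\mu\gg1$ the Neumann expansion
\begin{equation*}
\left({\mathfrak L}^+_n+\mu\right)^{-1}=\left(-\Delta_{(n)}+\mu+1\right)^{-1}\sum_{p\geq0}\left[-(\Phi+W_{(n)})\left(-\Delta_{(n)}+\mu+1\right)^{-1}\right]^p
\end{equation*}
converges in operator norm and displays $\left({\mathfrak L}^+_n+\mu\right)^{-1}$ as a norm-convergent sum of positivity preserving operators whose leading term is positivity improving, hence $\left({\mathfrak L}^+_n+\mu\right)^{-1}$ is positivity improving on $L^2_+(\R_+^*\times\R)$.

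Finally, as in the conclusion of~\cpr{symm_Perron_fro_ppty}, I would fix $\mu\gg1$ with $\left({\mathfrak L}^+_n+\mu\right)^{-1}$ positivity improving and bounded, and note that, $\Phi+W_{(n)}$ vanishing at infinity, one has $\sigma_{\textrm{ess}}({\mathfrak L}^+_n)=[1,\infty)$, so the lowest eigenvalue $\lambda^n_0$ equals $\inf\sigma({\mathfrak L}^+_n)$ --- its existence being guaranteed for $n=1$ by ${\mathfrak L}^+_1(\partial_r Q)\equiv0$, which forces $\lambda^1_0\leq0<1$. Then $(\lambda^n_0+\mu)^{-1}=\sup\sigma\!\left(\left({\mathfrak L}^+_n+\mu\right)^{-1}\right)$ is an eigenvalue of the bounded positivity improving operator $\left({\mathfrak L}^+_n+\mu\right)^{-1}$, so~\cite[Thm~XIII.43]{ReeSim4} makes it simple with a strictly positive eigenfunction, which through the correspondence between the eigenfunctions of ${\mathfrak L}^+_n$ and of $\left({\mathfrak L}^+_n+\mu\right)^{-1}$ is precisely the asserted Perron--Frobenius property. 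I expect the positivity improving property of $\left(-\Delta_{(n)}+\mu\right)^{-1}$ to be the main obstacle: it is the single step where the cylindrical geometry genuinely interferes with Lenzmann's isotropic scheme, and it is resolved by the positivity of the $I_n$'s; the verification of the limit point case at $r=0$ --- which is also what confines the statement to $n>0$ --- is the other point demanding care.
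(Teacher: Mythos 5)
Your proof follows the paper's scheme step by step: bounded-perturbation plus Kato--Rellich for self-adjointness, the modified-Bessel expansion of the heat kernel to show $\left(-\Delta_{(n)}+\mu\right)^{-1}$ is positivity improving, the Neumann series for $\left({\mathfrak L}^+_n+\mu\right)^{-1}$, and Reed--Simon's Theorem~XIII.43. The one point you elaborate that the paper simply cites as known is the essential self-adjointness of $-\Delta_{(n)}$ for $n\geq1$, which you correctly justify via the Liouville substitution $f\mapsto r^{1/2}f$ and Weyl's limit-point criterion at $r=0$.
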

\begin{proof}[Proof of~\cpr{cyl_Perron_fro_ppty}] We follow the structure of the proof of \cite[Lemma 8]{Lenzmann-09}.

\medskip

\noindent\textbf{Self-adjointness.}  We still have $V\star \left| Q\right|^2 \in L^4(\R^3)\cap L^\infty(\R^3)$. Moreover, defining $\mathring{f}(r,\cdot,z)=f(r,z)Y_n^+\in L^2_+(\R_+^*\times\R)\otimes Y_n^+\subset L^2(\R^3)$, for any $f\in L^2_+(\R_+^*\times\R)$, we have $\pscal{f,g}_{L^2_+(\R_+^*\times\R)}={\langle \mathring{f},\mathring{g} \rangle}_{L^2(\R^3)}$ and, consequently, that $\Phi+1$ is a bounded operator on $L^2_+(\R_+^*\times\R)$. Then, for any $\xi\in L^2_+(\R_+^*\times\R)$ and $p\in[2,\infty]$, we have $$\norm{W_{(n)}\xi}_{L^p(\R_+\times\R)}\leq \norm{Q}_{L^p}\normSM{V\star( Q\mathring{\xi})}_{L^\infty}.$$ Thus $W_{(n)}\xi\in L^2_+(\R_+^*\times\R)\cap L^\infty(\R_+^*\times\R)$ and, finally, $1+\Phi+W_{(n)}$ and, thus, ${\mathfrak L}^+_n$ are bounded below on $L^2_+(\R_+^*\times\R)$. Furthermore, it is known that $-\Delta_{(n)}$ is essentially self-adjoint on $C_0^\infty(\R_+^*\times\R)$ provided that $n>0$. Thus, given that $1+\Phi+W_{(n)}$ is bounded (so $-\Delta_{(n)}$-bounded of relative bound zero), symmetric (moreover self-ajoint) and that its domain contains the domain of $-\Delta_{(n)}$, we obtain by the Rellich-Kato theorem the essentially self-adjointness of ${\mathfrak L}^+_n$ on $C_0^\infty(\R_+^*\times\R)$.

\medskip

\noindent\textbf{Positivity improving.} We claim that $e^{t\Delta_{(n)}}$ is positivity improving for all $t>0$ on $L^2(\R_+^*\times\R)$. Indeed we have the explicit formula for the integral kernel $e^{t\Delta}$ on $\R^3$, namely,
\begin{equation}\label{heat_Kernel_laplacian_cyl}
e^{t\Delta}(x,y)=(4\pi t)^{-3/2} e^{-\frac{|x-y|^2}{4t}}=(4\pi t)^{-3/2} e^{-\frac{r^2+r'^2+(z-z')^2}{4t}} e^{\frac{rr'}{2t}\cos(\phi-\phi')},
\end{equation} for all $x:=(r,\phi,z)$ and $y:=(r',\phi',z')$. On the other hand we have
\begin{equation}\label{exp_and_modified_bessel_first_kind}
e^{x \cos\theta} = \sqrt{2\pi}\sum_{m=0}^\infty  \sqrt{2}^{\delta_{\{m\geq1\}}} I_m(x) Y_m^+(\theta),\quad \forall x\in\R,
\end{equation} where $I_n(x)=\pi^{-1}\smallint_0^\pi{\exp(x\cos\theta)\cos(n\theta) \dd\theta}$ are the modified Bessel functions of the first kind, that are strictly positive for $n\geq0$ and $x>0$. From these two relations, we deduce the integral kernel $e^{t\Delta_{(n)}}$ acting on $L^2(\R_+\times\R)$ and that it is positive, which are the two points of the following lemma.
\begin{lemme}\label{Computation_integral_kernel_cylindrical_laplacian} Let $f\in L^2(\R_+^*\times\R)$, $r,r'>0$ and $n\geq0$. Then the integral kernel $e^{t\Delta_{(n)}}$ acting on $L^2(\R_+^*\times\R)$ verifies
\begin{equation}\label{heat_Kernel_cylindrical_laplacian}
		e^{t\Delta_{(n)}}((r,z),(r',z'))=\frac{\sqrt{2}^{-\delta_n^0}}{4\pi t^{3/2}} e^{-\frac{r^2+r'^2+z^2+z'^2}{4t}}I_n\left(\frac{rr'}{2t}\right)\exp\left(\frac{zz'}{2t}\right)>0.
\end{equation}
\end{lemme}
\begin{proof}[Proof of~\clm{Computation_integral_kernel_cylindrical_laplacian}]
Let $f\in L^2(\R_+^*\times\R)$. Using \eqref{heat_Kernel_laplacian_cyl}, for $n\geq0$, we have
\begin{align*}
	&\left(e^{t\Delta}(fY_n^\sigma)\right)(r,\phi,z)\\
	&=(4\pi t)^{-3/2}\displaystyle{\int_{\R_+\times\R}}{e^{-\frac{r^2+r'^2+(z-z')^2}{4t}}f(r',z') \left(\displaystyle{\int_{-\pi}^\pi}{e^{\frac{rr'}{2t}\cos(\phi-\phi')}Y_n^\sigma(\phi') \dd\phi'}\right)r'\dd r' \dd z' }\\
	&=\frac{\sqrt{2}^{-\delta_n^0}}{4\pi t^{3/2}}\displaystyle{\int_{\R_+\times\R}}{e^{-\frac{r^2+r'^2+z^2+z'^2}{4t}}f(r',z')I_n\left(\frac{rr'}{2t}\right)\exp\left(\frac{zz'}{2t}\right)r'\dd r' \dd z' } Y_n^\sigma(\phi).
\end{align*} Which allows to conclude the proof of~\clm{Computation_integral_kernel_cylindrical_laplacian}.
\end{proof} So, for all $n\geq0$, $e^{t\Delta_{(n)}}$ is positivity improving on $L^2(\R_+^*\times\R)$ for all $t>0$ and consequently on $L^2_+(\R_+^*\times\R)$. Then, by functional calculus, we have $$\left(-\Delta_{(n)} + \mu\right)^{-1}=\displaystyle{\int_0^\infty{e^{-t\mu}e^{t\Delta_{(n)}} \dd t}}, \;\;\; \forall\,\mu>0,$$ thus $(-\Delta_{(n)} + \mu)^{-1}$ is positivity improving on $L^2_+(\R_+^*\times\R)$ for all $\mu>0$.

Moreover we claim that $-(\Phi + W_{(n)})$ is positivity improving on $L^2_+(\R_+^*\times\R)$ since $-\Phi$ is a positive multiplication operator and $-W_{(n)}$ is positivity improving on $L^2_+(\R_+^*\times\R)$.

\medskip

The end of the proof uses the exact same arguments as in the proof of the Perron--Frobenius property in the non-cylindrical case (\cpr{symm_Perron_fro_ppty}) and, consequently, this ends the proof of~\cpr{cyl_Perron_fro_ppty}.
\end{proof}

\subsubsection{Proof of \cth{THM_Ker_L}}
First, using the results of the previous Section, we have $$\ker \left({\mathfrak L}_Q\right)_{|L^2_-(\R^3)}=\ker \left({\mathfrak L}_Q\right)_{|L^2_{-,+,+}(\R^3)}=\vect\{\partial_z Q\}$$ and ${\mathfrak L}_Q\partial_x Q\equiv {\mathfrak L}_Q\partial_y Q\equiv0$. But now $Q$ is furthermore \emph{cylindrical-even}, thus $\partial_x Q=\frac{x}{r}\partial_r Q \in L^2_+(\R_+^*\times\R) \otimes Y^+_1$ and $\partial_y Q=\frac{y}{r}\partial_r Q \in L^2_+(\R_+^*\times\R) \otimes Y^-_1$, which implies that ${\mathfrak L}^+_1\partial_r Q\equiv0$. Then, $Q>0$ being \emph{cylindrical-even strictly decreasing}, $\partial_r Q<0$ on $\R_+\times\R$ and, by the Perron--Frobenius property (\cpr{cyl_Perron_fro_ppty}), it is (up to sign) the unique eigenvector associated with the lowest eigenvalue of ${\mathfrak L}^+_1$, namely $\lambda^{1}_0=0$. Consequently, $\partial_x Q$ (resp. $\partial_y Q$) is the unique eigenvector --- up, in addition, to rotations in the cylindrical plane --- associated with the lowest eigenvalue $\lambda^{1,+}_0=0$ (resp. $\lambda^{1,-}_0=0$) of ${\mathfrak L}^+_{1,+}$ (resp. ${\mathfrak L}^+_{1,-}$). To summarize, we know at this point that $$\ker {\mathfrak L}_Q=\vect\left\{\partial_x Q, \partial_y Q, \partial_z Q\right\} \oplus \ker \left({\mathfrak L}_Q\right)_{|L^2_{\textrm{rad},+}(\R^3)}\oplus\bigoplus\limits_{\substack{n\geq2\\\sigma=\pm}}{ \ker \left({\mathfrak L}_Q\right)_{|L^2_+(\R_+^*\times\R) \otimes Y^{\sigma}_{n}}},$$ and we have to deal with the higher orders. The end of the paper is devoted to the proof that
\begin{equation}\label{triviality_kernel_others}
\ker {\mathfrak L}^+_{n,\sigma} =\{0\},\;\; \forall n\geq2, \sigma=\pm.
\end{equation} For $n\geq2$, let $0<\phi^n\in L^2_+(\R_+^*\times\R)$ be an eigenvector of ${\mathfrak L}^+_n$ associated with $\lambda^n_0$. Then $\phi^nY^+_n$ (resp. $\phi^nY^-_n$) is an eigenvector of ${\mathfrak L}^+_{n,+}$ (resp. ${\mathfrak L}^+_{n,-}$) associated to the eigenvalue $\lambda^{n,+}_0=\lambda^n_0$ (resp. $\lambda^{n,-}_0=\lambda^n_0$). Thus, for $n\geq2$ and $\sigma=\pm$, we have
\begin{align*}
\lambda^{1,\sigma}_0-\lambda^{n,\sigma}_0&\leq\pscal{\phi^{n},{\mathfrak L}^+_{1} \phi^{n}}_{L^2(\R_+\times\R)}-\pscal{\phi^{n},{\mathfrak L}^+_{n} \phi^{n}}_{L^2(\R_+\times\R)}\\
	&\leq-\displaystyle{\int_{\R_+\times\R}}{\frac{n^2-1}{r^2}\left(\phi^{n}(r,z)\right)^2r\dd r\dd z} \\
	&+2\!\!\displaystyle{\iint\limits_{\left(\R_+\times\R\right)^2}}{\![Q\phi^{n}](r,z)[Q\phi^{n}](r',z')\left[v_n-v_{1}\right](r,r',z-z') rr'  \dd z\dd z' \dd r \dd r'}.
\end{align*} Since $Q>0$ and $\phi^n>0$ (by the Perron--Frobenius property in~\cpr{cyl_Perron_fro_ppty}), in order to prove that $\lambda^{n,\sigma}_0>\lambda^{1,\sigma}_0$, it is sufficient to prove that $v_n<v_1$ almost everywhere on $(\R_+^*\times\R)^2$. Using the explicit formula~\eqref{V_explicit} for $V$, this is equivalent to prove that $$T_n:=\displaystyle{\int_{0}^\pi}{\frac{\cos(n\theta)-\cos\theta}{\sqrt{K+2a_2^2rr'(1-\cos\theta)}}\dd \theta}<0, \quad \textrm{for a.e. } (r,z),(r',z')\in\left(\R^*_+\times\R\right)^2,$$ where $K=|(1-S)^{-1}(r-r',0,z-z')|^2$. First, let us remark that the points $\left\{2\frac{k}{n-1}\pi\right\}_{k\in\Z}$ and $\left\{2\frac{k}{n+1}\pi\right\}_{k\in\Z}$ are the zeros of $\cos(n\cdot)\textcolor{blue}{-}\cos(\cdot)$ and that the function $g=\left[K+2(1-s_2)^{-2}rr'(1-\cos(\cdot))\right]^{-1/2}$ is strictly decreasing on $]0,\pi[$. Let us define, $\theta_{2\lfloor n/2\rfloor}:=\pi$ and, for $k$ an integer in $[0,\lfloor n/2\rfloor-1]$, $\theta_{2k}:=2\frac{k}{n-1}\pi$ and $\theta_{2k+1}:=2\frac{k+1}{n+1}\pi$ which are all the zeros of $\cos(n\cdot)-\cos(\cdot)$ in $[0,\pi]$, except $\theta_{2\lfloor n/2\rfloor}$ if $n$ is even. Then, noticing that $\cos(n\cdot)-\cos(\cdot)$ is strictly negative on intervals $]\theta_{2k},\theta_{2k+1}[$, strictly positive on intervals $]\theta_{2k+1},\theta_{2k+2}[$ and that $n\theta_{2k}=2k\pi+\theta_{2k}$, we have
\begin{align*}
T_n&=\sum\limits_{k=0}^{\lfloor\frac{n}{2}\rfloor-1}{\displaystyle{\int_{\theta_{2k}}^{\theta_{2k+1}}}{\!\!\!\!\!\!\!\!\!\underset{>g(\theta_{2k+1})>0}{\underbrace{g(\theta)}}\underset{<0}{\underbrace{\left(\cos(n\theta)-\cos\theta\right)}}\dd \theta}+\displaystyle{\int_{\theta_{2k+1}}^{\theta_{2k+2}}}{\!\!\!\!\underset{0<\cdot< g(\theta_{2k+1})}{\underbrace{g(\theta)}}\underset{>0}{\underbrace{\left(\cos(n\theta)-\cos\theta\right)}}\dd \theta}} \\
	&<\sum\limits_{k=0}^{\lfloor\frac{n}{2}\rfloor-1}{g(\theta_{2k+1})\displaystyle{\int_{\theta_{2k}}^{\theta_{2k+2}}}{\left(\cos(n\theta)-\cos\theta\right)\dd \theta}}.
\end{align*} If $n=2$ or $n=3$, this immediately leads to $T_n<0$. Otherwise, if $n\geq4$, we have
\begin{align*}
T_n&<\sum\limits_{k=0}^{\lfloor\frac{n}{2}\rfloor-1}{g(\theta_{2k+1})\displaystyle{\int_{\theta_{2k}}^{\theta_{2k+2}}}{\left(\cos(n\theta)-\cos\theta\right)\dd \theta}} \\
	&<-\frac{n-1}{n}\bigg(\sum\limits_{k=0}^{\lfloor\frac{n}{2}\rfloor-2}{g(\theta_{2k+1})\sin\theta_{2k+2}}-\sum\limits_{k=1}^{\lfloor\frac{n}{2}\rfloor-1}{g(\theta_{2k+1})\sin\theta_{2k}}\bigg) \\
	&<-\frac{n-1}{n}\sum\limits_{k=1}^{\lfloor\frac{n}{2}\rfloor-1}{\underset{>0}{\underbrace{\left[g(\theta_{2k-1}-g(\theta_{2k+1})\right]}}\underset{>0}{\underbrace{\sin\theta_{2k}}}}<0.
\end{align*}

 Thus we have just proved, for $n\geq2$ and $\sigma=\pm$, that $\lambda^{n,\sigma}_0>\lambda^{1,\sigma}_0=0$, consequently $\ker {\mathfrak L}^+_{n,\sigma} =\{0\}$.

This concludes the proof of~\cth{THM_Ker_L}.
\hfill\qedsymbol


\end{document}